\documentclass[aps,prd,reprint,nofootinbib,superscriptaddress,notitlepage,longbibliography]{revtex4-2}

\usepackage[T1]{fontenc}
\usepackage[utf8]{inputenc}
\usepackage[english]{babel}
\usepackage{amsmath,amssymb,mathtools,bm}
\usepackage{mathrsfs}
\usepackage{amsthm}
\usepackage{booktabs}
\usepackage{microtype}
\usepackage{xcolor}
\usepackage{hyperref}
\usepackage{orcidlink}
\usepackage{enumitem}
\usepackage{placeins}

\definecolor{navy}{RGB}{0,0,128}

\hypersetup{
  pdftitle={Inverse Reconstruction of Causal Nonlinear Electrodynamics: Functional Families, Spectral Constraints, and Single-Horizon Black Holes},
  pdfauthor={Ariel Guzman, Mohsen Fathi, J. R. Villanueva},
  pdfsubject={
    Causal nonlinear electrodynamics, inverse reconstruction,
    spectral criteria, Einstein--NLED black holes, optical metrics,
    and symmetric hyperbolicity
  },
  pdfkeywords={
    nonlinear electrodynamics, causality, inverse problem,
    black holes, horizon uniqueness, spectral representation,
    optical metrics, symmetric hyperbolicity
  },
  colorlinks=true,
  linkcolor=navy,
  citecolor=navy,
  urlcolor=navy,
  filecolor=navy
}

\newcommand{\dd}{\mathrm{d}}
\newcommand{\e}{\mathrm{e}}

\newcommand{\cCX}{\mathcal{C}_{X}}
\newcommand{\cCrho}{\mathcal{C}_{\rho}}
\newcommand{\cCcausal}{\mathcal{C}_{\mathrm{causal}}}
\newcommand{\cCblind}{\mathcal{C}_{\mathrm{blind}}}
\newcommand{\cCSD}{\mathcal{C}_{\mathrm{SD}}}
\newcommand{\Eem}{E_{\mathrm{em}}}
\newcommand{\Ered}{E_{\mathrm{red}}}
\newcommand{\avg}[1]{\left\langle #1\right\rangle}
\newcommand{\supp}{\operatorname{supp}}
\newcommand{\Tr}{\operatorname{tr}}
\newcommand{\R}{\mathbb{R}}
\newcommand{\Fcal}{\mathcal{F}}
\newcommand{\Lmag}{\mathscr{L}}

\newtheorem{theorem}{Theorem}[section]
\newtheorem{proposition}[theorem]{Proposition}
\newtheorem{corollary}[theorem]{Corollary}

\theoremstyle{remark}
\newtheorem{remark}[theorem]{Remark}

\begin{document}

\title{Inverse Reconstruction of Causal Nonlinear Electrodynamics:\\ Functional Families, Spectral Constraints, and Single-Horizon Black Holes}

\author{Ariel Guzm\'an\orcidlink{0009-0008-3844-1203}}
\email{ariel.guzman@estudiantes.uv.cl}
\affiliation{Institute of Physics and Astronomy, Faculty of Sciences, Universidad de Valpara\'iso, Avenida Gran Breta\~na 1111, Valpara\'iso, Chile}

\author{Mohsen Fathi\orcidlink{0000-0002-1602-0722}}
\email{mohsen.fathi@ucentral.cl}
\affiliation{Center for Research in Space Sciences and Theoretical Physics (CICEF), Universidad Central de Chile, La Serena 1710164, Chile}

\author{J. R. Villanueva\orcidlink{0000-0002-6726-492X}}
\email{jose.villanueva@uv.cl}
\affiliation{Institute of Physics and Astronomy, Faculty of Sciences, Universidad de Valpara\'iso, Avenida Gran Breta\~na 1111, Valpara\'iso, Chile}

\date{\today}

\begin{abstract}
Nonlinear electrodynamics (NLED) admits many causal theories, so causality alone does not provide a unique selection principle. We formulate an inverse construction in which constitutive integrability, the Maxwell weak-field limit, and causal propagation are imposed before either a Lagrangian or a spacetime geometry is chosen. An affine-separable reduction of the two-invariant Pleba\'nski class yields an infinite-dimensional causal family $\cCX$, characterized by a bounded logarithmic index; Born--Infeld is its unique self-dual member, while generic members are birefringent. On the magnetic axis, complete monotonicity gives a positive spectral representation. Finite monopole self-energy is equivalent to the existence of the spectral moment of order $-1/4$, whereas global magnetic causality restricts the support. Generalized-gamma spectra are simultaneously causal and finite-energy precisely for $1/4<\gamma\le1/2$, independently of the shape parameter, and a separate criterion determines when the magnetic law admits a causal two-invariant completion. After coupling to Einstein gravity, a positive magnetic response and characteristic factor, finite self-energy, and nonnegative residual mass imply a strictly increasing metric function, excluding more than one positive horizon; positive residual mass guarantees a unique horizon. The two optical metrics of $\cCX$ remain Lorentzian with overlapping timelike cones, establishing symmetric hyperbolicity of the electromagnetic subsystem. Thus inverse matter selection connects local causal consistency to global black-hole structure without prescribing the geometry.
\end{abstract}

\maketitle

\section{Introduction}
\label{sec:intro}

Nonlinear electrodynamics (NLED) is a useful setting in which consistency requirements on a matter theory can be followed from local field propagation to gravitational backreaction. Born and Born--Infeld originally introduced nonlinear electromagnetic models in connection with the classical self-energy problem \cite{Born1933,BornInfeld1934}; Euler--Heisenberg and Schwinger later established nonlinear electromagnetic response as an effective description of quantum vacuum polarization \cite{HeisenbergEuler1936,Schwinger1951}, and Born--Infeld theory subsequently acquired a string-theoretic interpretation \cite{Tseytlin1999}. Modern analyses add causality, analyticity, positivity, and energy conditions to the criteria by which nonlinear electromagnetic theories are judged \cite{Adams2006,ConzinuUeda2026,BialynickiBirula1983,Sorokin2022}.

For Lorentz-invariant theories of Pleba\'nski type, these requirements are tightly coupled. The Lagrangian fixes both the energy--momentum tensor and the characteristic polynomial governing electromagnetic disturbances. The Fresnel quartic generically factorizes into two quadratic forms and therefore into two optical metrics \cite{Plebanski1970,Boillat1970,Novello2000,DeLorenci2000,ObukhovRubilar2002,HehlObukhovRubilar2002,GoulartBergliaffa2009,deMelo2015}; a regular Maxwell limit does not by itself prevent superluminal propagation, characteristic degeneracy, or loss of hyperbolicity at finite field strength \cite{ShabadUsov2011,Perlick2011,Abalos2015,Schellstede2016}. Necessary and sufficient causality conditions for the Pleba\'nski class were obtained in Ref.~\cite{Schellstede2016}, and the relation between the two effective cones and symmetric hyperbolicity was established in Ref.~\cite{Abalos2015}. Causality also implies the dominant and strong energy conditions under the assumptions of Ref.~\cite{RussoTownsendEnergy2024}. Continuous electric--magnetic self-duality supplies a powerful additional structure \cite{GaillardZumino1981,GibbonsRasheed1995,HatsudaKamimuraSekiya1999,KuzenkoTheisen2001,AschieriFerraraZumino2008}, with recent work developing broad causal self-dual families and their deformations \cite{Bandos2020,RussoTownsendCausalSelfDual2024,RussoTownsendDualities2024,RussoTownsendSimplified2025,Murcia2025,KuzenkoRuhl2026,Chen2025,BabaeiAghbolagh2026,BabaeiAghbolaghVelniHeIsapour2026,BabaeiAghbolaghChenHeHou2026}. By contrast, absence of birefringence is substantially more restrictive and singles out Born--Infeld under the standard regularity and weak-field assumptions \cite{RussoTownsendNoBirefringence2023,MezincescuRussoTownsend2024,GibbonsHerdeiro2001}.

The gravitational sector introduces a separate consistency problem. Einstein--NLED systems admit many charged and magnetic black-hole geometries \cite{Hoffmann1935,PellicerTorrence1969,GarciaSalazarPlebanski1984,deOliveira1994,Breton2003,FernandoKrug2003,Dey2004,CaiPangWang2004,AyonBeatoGarcia1998,AyonBeatoGarcia2000,Bronnikov2001,FanWang2016,Kruglov2017,Yang2022,Bronnikov2022}, yet regularity of a metric does not guarantee admissibility of its matter source \cite{BokulicJuricSmolic2022,BokulicJuricSmolicConundrum2026,DeFeliceTsujikawa2025}. Recent results make the causal restrictions particularly concrete: they constrain black-hole thermodynamics \cite{AbeMedevielleNoumiYoshimura2026}, can remove Cauchy horizons when the electromagnetic self-energy is finite \cite{HaleHennigarKubiznak2026}, and restrict the possible horizon structures of spherically symmetric Einstein--NLED solutions \cite{RussoTownsendBH2026}. These developments motivate a construction in which the matter sector is filtered before the Einstein equations are solved.

That ordering is the defining point of the present work. Inverse NLED constructions usually infer a matter model from a prescribed geometry, electromagnetic profile, or gravitational mapping \cite{BokulicReverse2024,MkrtchyanSvazas2022,AfonsoOlmoOraziRubiera2018,OvgunPantigSaavedra2026}; reconstruction of an action from integrable constitutive relations is also known in a more general electromagnetic setting \cite{AschieriFerrara2013}. Here the primary variables are instead the first-order constitutive responses. Exactness, Maxwell normalization, and the full Pleba\'nski causality inequalities are imposed before any named NLED model or target spacetime is selected. The first question is therefore whether these requirements determine a unique theory; the second is whether analytically tractable sectors of the resulting theory space lead to general gravitational consequences.

The answer to the first question is negative: strictly causal theories possess functional neighborhoods of equally admissible deformations. Closed-form families therefore require an additional reduction rather than following uniquely from the basic axioms. We use a normalized affine-separable reduction because its integrability condition becomes an exactly solvable transport problem. This produces the non-self-dual family $\cCX$ and permits the causal region to be characterized without choosing a Lagrangian ansatz in advance. Independently, the magnetic projection admits a spectral formulation under complete monotonicity, which makes self-energy, support properties, and causal completion accessible through precise integral criteria. Gravity is introduced only after these matter-sector constructions, and the optical analysis is performed last as an independent check of the characteristic problem.

The relation to our previous work is important for the novelty of the present manuscript. Ref.~\cite{FathiGuzmanVillanueva2026} treated the power-kernel branch $\eta=1$ and established the associated magnetic self-energy window and monotonicity mechanism. The present analysis is not a reparametrization of that branch: it adds a full two-invariant functional class, general positive spectral measures, an exact magnetic-to-two-invariant lifting criterion, generalized-gamma spectra for arbitrary $\eta>0$, a monotonicity theorem that is independent of the affine-separable parametrization, and a two-cone hyperbolicity analysis. Table~\ref{tab:previous-present} records this separation explicitly.

\begin{table*}[t]
\caption{Results inherited from Ref.~\cite{FathiGuzmanVillanueva2026} and results established in the present work.}
\label{tab:previous-present}
\centering
\small
\begin{ruledtabular}
\begin{tabular}{p{0.16\textwidth}p{0.35\textwidth}p{0.42\textwidth}}
\textbf{Aspect}
&
\textbf{Ref.~\cite{FathiGuzmanVillanueva2026}}
&
\textbf{Present work}
\\
\hline
Electromagnetic sector
&
Positive mixtures of power kernels
&
Infinite-dimensional non-self-dual class $\cCX$ and general positive spectral measures
\\
Causality and self-energy
&
Window $1/4<\gamma\le1/2$ on the power branch
&
Exact $-1/4$ moment criterion, support restrictions, and two-invariant lifting criterion
\\
Integrable family
&
Branch $\eta=1$
&
Generalized-gamma hierarchy for arbitrary $\eta>0$, including Mellin--Barnes, Meijer--$G$, and Fox--$H$ representations
\\
Gravity
&
Metric monotonicity for the magnetic family considered there
&
Sufficient monotonicity theorem for general magnetic responses satisfying the stated causal and energetic hypotheses
\\
Optics
&
Optical branches of the particular magnetic model
&
Two optical metrics and symmetric hyperbolicity for the full class $\cCX$
\\
\end{tabular}
\end{ruledtabular}
\end{table*}

Section~\ref{sec:causal} fixes conventions and the causal domain. Sections~\ref{sec:blind} and \ref{sec:UV} formulate the inverse problem, while Sec.~\ref{sec:selfdual} records the known self-dual hypersurface used as a control. Section~\ref{sec:CX} constructs the affine-separable class, Sec.~\ref{sec:spectral} develops the spectral sector and its lifting problem, and Secs.~\ref{sec:einstein}--\ref{sec:exactBH} introduce gravity. Section~\ref{sec:optical} analyzes the characteristic cones. Discussion, scope, and conclusions are separated from the technical derivations, which are collected in the appendices.

\section{Two-invariant NLED and causality}
\label{sec:causal}

\subsection{Conventions and physical assumptions}

We work in four dimensions with signature $(-,+,+,+)$ and units $c=\hbar=1$. The action is
\begin{equation}
I[g,A]=\frac{1}{16\pi G}\int \dd^4x\sqrt{-g}\,R
+\int \dd^4x\sqrt{-g}\,\mathcal L(S,P),
\label{eq:action}
\end{equation}
where
\begin{equation}
S=-\frac14F_{\mu\nu}F^{\mu\nu},
\qquad
P=-\frac18\varepsilon^{\mu\nu\rho\sigma}F_{\mu\nu}F_{\rho\sigma}.
\label{eq:SP}
\end{equation}
In a local inertial frame,
\begin{equation}
S=\frac12(\mathbf E^2-\mathbf B^2),
\qquad
P=\mathbf E\cdot\mathbf B.
\end{equation}
We follow the standard conventions for the Pleba\'nski class \cite{Plebanski1970,Sorokin2022}. We impose parity,
\begin{equation}
\mathcal L(S,P)=\mathcal L(S,-P),
\label{eq:parity}
\end{equation}
vanishing vacuum energy,
\begin{equation}
\mathcal L(0,0)=0,
\label{eq:vacuum}
\end{equation}
and a regular weak-field Maxwell limit,
\begin{equation}
\mathcal L(S,P)=S+O(S^2+P^2),
\qquad (S,P)\to(0,0).
\label{eq:Maxwellweak}
\end{equation}
The first condition restricts the pseudoscalar sector to a parity-invariant theory; the second prevents an implicit vacuum-energy contribution from being absorbed into the electromagnetic sector; and the third ensures that the nonlinear theory connects continuously to Maxwell electrodynamics at weak fields. These are physical selection assumptions for the class studied here, not identities obeyed by every NLED theory. The energy--momentum tensor may be written as \cite{Plebanski1970,RussoTownsendEnergy2024,BokulicThermo2021}
\begin{equation}
T_{\mu\nu}=\mathcal L_S T^{\rm Max}_{\mu\nu}
-\bigl(S\mathcal L_S+P\mathcal L_P-\mathcal L\bigr)g_{\mu\nu}.
\label{eq:Tmunu}
\end{equation}

It is useful to introduce the nonnegative variables
\begin{equation}
U=\frac12\left(\sqrt{S^2+P^2}-S\right),
\qquad
V=\frac12\left(\sqrt{S^2+P^2}+S\right),
\label{eq:UVdef}
\end{equation}
for which
\begin{equation}
U,V\ge0,
\qquad
S=V-U,
\qquad
P^2=4UV.
\label{eq:UVrelations}
\end{equation}
These variables are natural in the modern analysis of causality and duality \cite{RussoTownsendEnergy2024,RussoTownsendCausalSelfDual2024}.

\subsection{General causality conditions}

Under the regularity and weak-field assumptions used in the causal classification of Pleba\'nski theories, the necessary and sufficient conditions for both optical modes to remain causal with respect to $g_{\mu\nu}$ can be written, in our conventions, as \cite{Schellstede2016,RussoTownsendEnergy2024}
\begin{equation}
\mathcal L_S>0,
\label{eq:causal1}
\end{equation}
\begin{equation}
\mathcal L_{SS}\ge0,
\qquad
\mathcal L_{PP}\ge0,
\qquad
\mathcal L_{SS}\mathcal L_{PP}-\mathcal L_{SP}^2\ge0,
\label{eq:causal2}
\end{equation}
and
\begin{equation}
\mathcal L_S>
2U\mathcal L_{SS}+2V\mathcal L_{PP}-2P\mathcal L_{SP}.
\label{eq:causal3}
\end{equation}
In $(U,V)$ variables, the last inequality takes the particularly compact form \cite{RussoTownsendEnergy2024,RussoTownsendCausalSelfDual2024}
\begin{equation}
\mathcal L_U+2U\mathcal L_{UU}<0.
\label{eq:strongUV}
\end{equation}
We denote by $\cCcausal(D)$ the subclass of Lagrangians $\mathcal L\in C^2(D)$ that, in addition to parity \eqref{eq:parity}, vacuum normalization \eqref{eq:vacuum}, and the weak-field limit \eqref{eq:Maxwellweak}, satisfy Eqs.~\eqref{eq:causal1}--\eqref{eq:causal3} on a common physical domain $D$.

Full causality should not be confused with separately imposing energy conditions: under the assumptions of Ref.~\cite{RussoTownsendEnergy2024}, the dominant and strong energy conditions follow from causality.

\begin{proposition}[Convexity of the causal space]
Let $\mathcal L_1,\mathcal L_2\in\cCcausal(D)$ and $0\le\lambda\le1$. Then
\begin{equation}
\mathcal L_\lambda=\lambda\mathcal L_1+(1-\lambda)\mathcal L_2
\end{equation}
belongs to $\cCcausal(D)$.
\end{proposition}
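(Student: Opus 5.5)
The plan is to check each defining condition of $\cCcausal(D)$ for $\mathcal L_\lambda$ in turn. The key structural fact is that every condition is either linear in $\mathcal L$, or a strict or nonstrict linear inequality in its derivatives, or a condition of convex-cone type on the Hessian. All of these are preserved by convex combinations.

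The linear conditions come first. Regularity $\mathcal L_\lambda\in C^2(D)$ is immediate, since $C^2(D)$ is a vector space and both $\mathcal L_1,\mathcal L_2$ live on the same domain $D$. Parity \eqref{eq:parity} and vacuum normalization \eqref{eq:vacuum} are linear homogeneous constraints, so they pass to $\mathcal L_\lambda$. The weak-field condition \eqref{eq:Maxwellweak} is affine in $\mathcal L$, and the coefficients $\lambda$ and $1-\lambda$ sum to one. Hence $\mathcal L_\lambda=S+O(S^2+P^2)$, because a combination of two $O(S^2+P^2)$ remainders is again $O(S^2+P^2)$.

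The causal inequalities follow from linearity of differentiation. All derivatives of $\mathcal L_\lambda$ are the corresponding convex combinations of the derivatives of $\mathcal L_1$ and $\mathcal L_2$.
\begin{itemize}
\item \eqref{eq:causal1}: $\mathcal L_{\lambda,S}$ is a convex combination of two positive numbers. Since $\lambda$ and $1-\lambda$ are not both zero, it is positive.
\item \eqref{eq:causal3}: at a fixed point of $D$, the quantities $U,V,P$ are fixed, so $\mathcal L_S-2U\mathcal L_{SS}-2V\mathcal L_{PP}+2P\mathcal L_{SP}$ is a linear functional of $\mathcal L$. It is strictly positive for both $\mathcal L_1$ and $\mathcal L_2$, so it is strictly positive for $\mathcal L_\lambda$. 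The same argument applies to the equivalent form \eqref{eq:strongUV}.
\item \eqref{eq:causal2}: the three conditions together say that the $2\times2$ Hessian $H=\begin{pmatrix}\mathcal L_{SS}&\mathcal L_{SP}\\ \mathcal L_{SP}&\mathcal L_{PP}\end{pmatrix}$ is positive semidefinite. This is the standard characterization, namely nonnegative diagonal entries plus nonnegative determinant for a symmetric $2\times2$ matrix.
\end{itemize}

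Since $H_\lambda=\lambda H_1+(1-\lambda)H_2$, it suffices to use that the positive semidefinite matrices form a convex cone. For every vector $v$, $v^{T}H_\lambda v=\lambda v^TH_1v+(1-\lambda)v^TH_2v\ge0$.

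The determinant condition is the only nonlinear one, and this is the one place where some care is needed. A convex combination of matrices with nonnegative determinant need not have nonnegative determinant unless the matrices are also semidefinite. So the argument must first convert the three conditions into semidefiniteness rather than treating the determinant inequality on its own. Once that is done, all conditions hold pointwise on the common domain $D$. Therefore $\mathcal L_\lambda\in\cCcausal(D)$.
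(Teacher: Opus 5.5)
Your proposal is correct and takes essentially the same approach as the paper: parity, vacuum energy and the Maxwell term follow by linearity with coefficients summing to one; $\mathcal L_S>0$ holds because it is a convex combination of positive quantities; the Hessian stays positive semidefinite because the semidefinite cone is convex; and strict positivity of the strong-condition functional $\mathscr C[\mathcal L]$ is preserved because that functional is linear in $\mathcal L$. Your remark that the determinant inequality must be handled through semidefiniteness, rather than on its own, makes explicit a step the paper uses implicitly.
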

\begin{proof}
Parity, the condition $\mathcal L(0,0)=0$, and the linear Maxwell term are preserved because the combination is linear and the coefficients sum to one. In particular,
\begin{equation}
\mathcal L_\lambda=S+O(S^2+P^2)
\qquad (S,P)\to(0,0).
\end{equation}
Moreover,
\begin{equation}
(\mathcal L_\lambda)_S
=\lambda(\mathcal L_1)_S+(1-\lambda)(\mathcal L_2)_S>0.
\end{equation}
The Hessian of $\mathcal L_\lambda$ is a convex combination of two positive-semidefinite matrices and therefore remains positive semidefinite. Finally,
\begin{equation}
\mathscr C[\mathcal L]
=\mathcal L_S-2U\mathcal L_{SS}-2V\mathcal L_{PP}+2P\mathcal L_{SP}
\label{eq:Cfunctional}
\end{equation}
is linear in $\mathcal L$, so its strict positivity is also preserved. Hence $\mathcal L_\lambda\in\cCcausal(D)$.
\end{proof}

Under domination assumptions that justify differentiation under the integral sign, the same observation gives the continuous version
\begin{equation}
\begin{aligned}
\mathcal L_\mu(S,P)
&=\int_{\mathcal A}\mathcal L_\alpha(S,P)\,\mu(\dd\alpha),\\
\mu&\ge0,
\qquad
\mu(\mathcal A)=1.
\end{aligned}
\label{eq:convexIntegral}
\end{equation}
provided all components belong to $\cCcausal(D)$ on the same domain and there is sufficient domination to interchange the required derivatives with the integral. Convex combinations have also appeared in recent interpolations of nonlinear theories \cite{LiuYang2025}; here the property is used at the level of the full two-invariant class.

\section{Constitutive reconstruction}
\label{sec:blind}

The causality conditions of the previous section are still formulated in terms of a Lagrangian $\mathcal L(S,P)$ whose functional form is unknown. To invert the problem without introducing in advance a preferred combination of invariants, a named theory, or a target geometry, we take the first-order constitutive responses as the primary data,
\begin{equation}
\mathcal A(S,P):=\mathcal L_S,
\qquad
\mathcal B(S,P):=\mathcal L_P.
\label{eq:ABdef}
\end{equation}
The Maxwell weak-field limit fixes
\begin{equation}
\mathcal A(0,0)=1,
\qquad
\mathcal B(0,0)=0.
\label{eq:ABvacuum}
\end{equation}
Parity of the Lagrangian, $\mathcal L(S,P)=\mathcal L(S,-P)$, translates completely into
\begin{equation}
\mathcal A(S,-P)=\mathcal A(S,P),
\qquad
\mathcal B(S,-P)=-\mathcal B(S,P).
\label{eq:ABparity}
\end{equation}
In particular,
\begin{equation}
\mathcal B(S,0)=0.
\label{eq:Baxis}
\end{equation}
No further functional dependence between $\mathcal A$ and $\mathcal B$ is assumed.

The first restriction is not causal but geometric: the two responses must arise from a single Lagrangian potential. If $\mathcal L\in C^2$, equality of mixed derivatives requires
\begin{equation}
\partial_P\mathcal A=\partial_S\mathcal B.
\label{eq:integrabilityAB}
\end{equation}
This is the integrability condition for the inverse problem. Reconstruction of an action from constitutive relations subject to an integrability condition is a known mechanism in a more general electromagnetic setting \cite{AschieriFerrara2013}. What is specific here is not exactness of the 1-form by itself, but its combination with the full Pleba\'nski causality inequalities and with the physical restrictions imposed below.

\begin{theorem}[Constitutive reconstruction without a Lagrangian ansatz]
\label{thm:blind-reconstruction}
Let $\Omega\subset\R^2$ be an open simply connected domain, and let $\mathcal A,\mathcal B\in C^1(\Omega)$ satisfy Eq.~\eqref{eq:integrabilityAB}. Then there exists a function $\mathcal L\in C^2(\Omega)$, unique up to an additive constant, such that
\begin{equation}
\mathcal L_S=\mathcal A,
\qquad
\mathcal L_P=\mathcal B.
\label{eq:ABgradient}
\end{equation}
If in addition $(0,0)\in\Omega$ and the normalization $\mathcal L(0,0)=0$ is imposed, the reconstruction is unique and can be written as
\begin{equation}
\mathcal L(S,P)=
\int_{\Gamma:(0,0)\to(S,P)}
\bigl(\mathcal A\,\dd S+\mathcal B\,\dd P\bigr),
\label{eq:pathL}
\end{equation}
where the value of the integral is independent of the path $\Gamma\subset\Omega$.
\end{theorem}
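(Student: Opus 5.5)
The plan is to read the pair $(\mathcal A,\mathcal B)$ as the coefficients of the $C^1$ differential 1-form $\omega=\mathcal A\,\dd S+\mathcal B\,\dd P$ on $\Omega$. The integrability condition \eqref{eq:integrabilityAB} says exactly that $\dd\omega=(\partial_S\mathcal B-\partial_P\mathcal A)\,\dd S\wedge\dd P=0$, so $\omega$ is closed. The Poincar\'e lemma for simply connected domains then says a closed 1-form is exact. To keep the argument self-contained, I will prove this directly via path independence rather than just citing it.

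\textbf{Path independence.} Fix a base point $x_0\in\Omega$; it will be $(0,0)$ when that point lies in $\Omega$. Since $\Omega$ is open and connected, any two points can be joined by a piecewise-$C^1$ path in $\Omega$. Take two such paths $\Gamma_0,\Gamma_1$ from $x_0$ to $x$. Simple connectivity gives a homotopy between them with fixed endpoints. I would handle the regularity issue in the standard way: cover the compact image of the homotopy square by finitely many discs contained in $\Omega$. On each disc (a star-shaped set), Green's theorem, or the explicit radial primitive $\int_0^1 \omega_{c+t(x-c)}(x-c)\,\dd t$, gives a local primitive. Differentiating that radial formula under the integral sign uses only $C^1$ regularity and \eqref{eq:integrabilityAB}. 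Subdividing the square into small rectangles, each mapped into a single disc, shows that $\oint\omega=0$ around each small cell. Summing over cells yields $\int_{\Gamma_0}\omega=\int_{\Gamma_1}\omega$. This step is the only genuine obstacle: the homotopy is merely continuous, so one must either pass to piecewise-linear approximations inside the discs or define the line integral through the local primitives. I would take the local-primitive route, which avoids differentiating the homotopy.

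\textbf{Construction and regularity.} Define $\mathcal L(x)=\int_{\Gamma:x_0\to x}\omega$, which is now well defined. To compute $\partial_S\mathcal L$ at a point $x$, use a path that ends with a short horizontal segment inside a disc around $x$. The fundamental theorem of calculus, together with continuity of $\mathcal A$, gives $\partial_S\mathcal L=\mathcal A$. A vertical final segment gives $\partial_P\mathcal L=\mathcal B$ in the same way. Since $\mathcal A$ and $\mathcal B$ are $C^1$, the gradient of $\mathcal L$ is $C^1$, and therefore $\mathcal L\in C^2(\Omega)$. This establishes \eqref{eq:ABgradient}.

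\textbf{Uniqueness and normalization.} If two functions $\mathcal L$ and $\tilde{\mathcal L}$ both satisfy \eqref{eq:ABgradient}, their difference has vanishing gradient on the connected open set $\Omega$ and is therefore constant. When $(0,0)\in\Omega$, choosing $x_0=(0,0)$ makes the constructed $\mathcal L$ vanish there. The normalization $\mathcal L(0,0)=0$ then fixes the constant, which yields \eqref{eq:pathL} with the path independence proved above.
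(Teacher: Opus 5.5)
Your proposal is correct and follows the same route as the paper: the integrability condition makes $\omega=\mathcal A\,\dd S+\mathcal B\,\dd P$ closed, simple connectivity makes it exact, $C^1$ coefficients give $\mathcal L\in C^2$, and connectedness gives uniqueness up to a constant, which $\mathcal L(0,0)=0$ fixes. The only difference is that you prove the Poincar\'e lemma on a simply connected domain (homotopy plus local primitives on discs), whereas the paper simply cites it.
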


\begin{proof}
Consider the 1-form
\begin{equation}
\omega=\mathcal A\,\dd S+\mathcal B\,\dd P.
\label{eq:omegaAB}
\end{equation}
Its exterior derivative is
\begin{equation}
\dd\omega=
\left(\partial_S\mathcal B-\partial_P\mathcal A\right)
\dd S\wedge\dd P.
\end{equation}
Condition \eqref{eq:integrabilityAB} implies $\dd\omega=0$, so $\omega$ is closed. On a simply connected domain, a closed $C^1$ 1-form admits a global potential; hence
\begin{equation}
\omega=\dd\mathcal L.
\end{equation}
Equation \eqref{eq:ABgradient} follows. Because $\mathcal A,\mathcal B\in C^1$, the second derivatives of $\mathcal L$ are continuous and therefore $\mathcal L\in C^2(\Omega)$.

If $\mathcal L_1$ and $\mathcal L_2$ generate the same pair $(\mathcal A,\mathcal B)$, then $\dd(\mathcal L_1-\mathcal L_2)=0$; on a connected domain their difference is constant. The condition $\mathcal L(0,0)=0$ fixes that constant. Finally, the integral of an exact form depends only on its endpoints, proving path independence in Eq.~\eqref{eq:pathL}.
\end{proof}

If Eq.~\eqref{eq:ABparity} also holds, the reconstructed Lagrangian automatically inherits parity. Indeed, define $\widetilde{\mathcal L}(S,P)=\mathcal L(S,-P)$. Then
\begin{equation}
\widetilde{\mathcal L}_S
=\mathcal A(S,-P)=\mathcal A(S,P),
\end{equation}
and
\begin{equation}
\widetilde{\mathcal L}_P
=-\mathcal B(S,-P)=\mathcal B(S,P).
\end{equation}
By uniqueness in Theorem~\ref{thm:blind-reconstruction} and the same vacuum normalization,
\begin{equation}
\widetilde{\mathcal L}(S,P)=\mathcal L(S,P).
\end{equation}

\subsection{Causality directly in terms of first-order data}

The Hessian of the reconstructed Lagrangian can be expressed entirely in terms of $(\mathcal A,\mathcal B)$:
\begin{equation}
\nabla^2\mathcal L=
\begin{pmatrix}
\mathcal L_{SS} & \mathcal L_{SP}\\
\mathcal L_{PS} & \mathcal L_{PP}
\end{pmatrix}
=
\begin{pmatrix}
\partial_S\mathcal A & \partial_P\mathcal A\\
\partial_S\mathcal B & \partial_P\mathcal B
\end{pmatrix}.
\end{equation}
Using Eq.~\eqref{eq:integrabilityAB}, this becomes the symmetric matrix
\begin{equation}
H[\mathcal A,\mathcal B]=
\begin{pmatrix}
\partial_S\mathcal A & \partial_P\mathcal A\\
\partial_P\mathcal A & \partial_P\mathcal B
\end{pmatrix}.
\label{eq:HAB}
\end{equation}
Therefore, Eqs.~\eqref{eq:causal1}--\eqref{eq:causal2} are exactly equivalent to
\begin{equation}
\mathcal A>0,
\qquad
H[\mathcal A,\mathcal B]\succeq0.
\label{eq:ABweak}
\end{equation}

To write the remaining causality inequality compactly, introduce
\begin{equation}
\mathsf K(U,V,P)=
\begin{pmatrix}
2U & -P\\
-P & 2V
\end{pmatrix}.
\label{eq:Kmatrix}
\end{equation}
Since $U,V\ge0$ and $P^2=4UV$,
\begin{equation}
\det\mathsf K=4UV-P^2=0,
\qquad
\Tr\mathsf K=2(U+V)\ge0.
\end{equation}
Hence
\begin{equation}
\mathsf K\succeq0.
\end{equation}
For a nontrivial field, $U+V>0$, its eigenvalues are $0$ and $2(U+V)$, so
\begin{equation}
\operatorname{rank}\mathsf K=1.
\end{equation}
At the vacuum, $U=V=P=0$, one simply has $\mathsf K=0$ and rank zero.

Contracting $\mathsf K$ with the Hessian gives
\begin{align}
\Tr(\mathsf K H)
&=
2U\,\partial_S\mathcal A
+2V\,\partial_P\mathcal B
-2P\,\partial_P\mathcal A
\nonumber\\
&=
2U\mathcal L_{SS}
+2V\mathcal L_{PP}
-2P\mathcal L_{SP}.
\label{eq:TrKHexpanded}
\end{align}
Therefore, the strong condition \eqref{eq:causal3} becomes
\begin{equation}
\Tr(\mathsf K H)<\mathcal A.
\label{eq:ABstrong}
\end{equation}
All three causality restrictions are thus written solely in terms of the first-order constitutive data and their first derivatives.

To encode the Maxwell limit directly in these data, define $\varrho=(S^2+P^2)^{1/2}$ and require, as $\varrho\to0$,
\begin{equation}
\mathcal A(S,P)=1+O(\varrho),
\qquad
\mathcal B(S,P)=O(\varrho).
\label{eq:ABMaxwell}
\end{equation}
Integrating along a radial segment, Eq.~\eqref{eq:ABMaxwell} implies $\mathcal L(S,P)=S+O(\varrho^2)$ for the normalized Lagrangian. We define $\cCblind(\Omega)$ as the set of pairs $(\mathcal A,\mathcal B)\in C^1(\Omega)\times C^1(\Omega)$ satisfying Eqs.~\eqref{eq:ABparity}, \eqref{eq:integrabilityAB}, \eqref{eq:ABweak}, \eqref{eq:ABstrong}, and \eqref{eq:ABMaxwell}.

\begin{theorem}[Equivalence with the causal Pleba\'nski class]
\label{thm:blind-equivalence}
Let $\Omega\subset\R^2$ be an open simply connected domain that contains the vacuum and is invariant under $(S,P)\mapsto(S,-P)$. There is a one-to-one correspondence between parity-invariant, Maxwell-normalized, zero-vacuum-energy Lagrangians in $\cCcausal(\Omega)$ and constitutive pairs in $\cCblind(\Omega)$. In this sense,
\begin{equation}
\cCblind(\Omega)\simeq\cCcausal(\Omega)
\label{eq:blindEquiv}
\end{equation}
for the parity-invariant normalized subclass considered here.
\end{theorem}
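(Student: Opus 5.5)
The plan is to write down the two maps explicitly and check that each lands in the other class. Then I show they are mutually inverse. The forward map is $\Phi:\mathcal L\mapsto(\mathcal A,\mathcal B)=(\mathcal L_S,\mathcal L_P)$. The backward map $\Psi$ sends a pair $(\mathcal A,\mathcal B)$ to the normalized potential \eqref{eq:pathL} supplied by Theorem~\ref{thm:blind-reconstruction}. Everything except the treatment of the Maxwell limit is bookkeeping with results already stated in this section.

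\emph{Forward direction.} Let $\mathcal L\in\cCcausal(\Omega)$ be parity invariant with $\mathcal L(0,0)=0$. Then $\mathcal A,\mathcal B\in C^1(\Omega)$, and Eq.~\eqref{eq:integrabilityAB} is the symmetry of mixed second derivatives. Differentiating $\mathcal L(S,P)=\mathcal L(S,-P)$ in $S$ and in $P$ gives Eq.~\eqref{eq:ABparity}.

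The Hessian identity \eqref{eq:HAB} shows that Eqs.~\eqref{eq:causal1}--\eqref{eq:causal2} are the same statement as Eq.~\eqref{eq:ABweak}. By Eq.~\eqref{eq:TrKHexpanded}, Eq.~\eqref{eq:causal3} is the same as Eq.~\eqref{eq:ABstrong}. It remains to obtain Eq.~\eqref{eq:ABMaxwell} from Eq.~\eqref{eq:Maxwellweak}.

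\emph{Backward direction.} Given $(\mathcal A,\mathcal B)\in\cCblind(\Omega)$, Theorem~\ref{thm:blind-reconstruction} gives a unique $C^2$ function $\mathcal L$ with $\mathcal L(0,0)=0$. The argument following that theorem, which uses the invariance of $\Omega$ under $P\to-P$, shows that $\mathcal L$ is parity invariant. The same two identities, \eqref{eq:HAB} and \eqref{eq:TrKHexpanded}, convert the constitutive inequalities back into Eqs.~\eqref{eq:causal1}--\eqref{eq:causal3}.

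For the weak-field limit, integrate along the radial segment $t\mapsto(tS,tP)$, $t\in[0,1]$. This uses that $\Omega$ contains a disc around the origin, since it is open and contains the vacuum. With $\mathcal A=1+O(\varrho)$ and $\mathcal B=O(\varrho)$, the integral gives $\mathcal L=S+O(\varrho^2)$, which is Eq.~\eqref{eq:Maxwellweak}, because $\varrho^2=S^2+P^2$.

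\emph{Bijectivity.} $\Phi\circ\Psi=\mathrm{id}$ holds because the reconstructed potential has gradient $(\mathcal A,\mathcal B)$ by construction. $\Psi\circ\Phi=\mathrm{id}$ holds by the uniqueness clause of Theorem~\ref{thm:blind-reconstruction}: two normalized Lagrangians with the same gradient on the connected set $\Omega$ coincide.

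\emph{Main obstacle.} The delicate step is the forward implication Eq.~\eqref{eq:Maxwellweak}$\Rightarrow$Eq.~\eqref{eq:ABMaxwell}. A bare $O(S^2+P^2)$ remainder does not control its derivatives in general. I would use the $C^2$ regularity at the origin instead. Since $\mathcal L\in C^2$ near $(0,0)$, Taylor's theorem combined with Eq.~\eqref{eq:Maxwellweak} forces $\nabla\mathcal L(0,0)=(1,0)$. Continuity of the Hessian then gives $\nabla\mathcal L=(1,0)+O(\varrho)$ by the mean value theorem, since the Hessian is bounded on a small disc. This is exactly Eq.~\eqref{eq:ABMaxwell}.

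If one wants the correspondence to be exactly with the classes as defined, this is where I would state that $C^2$ on a neighborhood of the vacuum is the regularity understood in $\cCcausal(\Omega)$. That is already built into the definition $\mathcal L\in C^2(D)$. The remaining conditions are identical reformulations, so no further estimates are needed.
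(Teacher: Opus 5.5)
Your proposal is correct and follows the paper's proof essentially step by step. The forward map is $\mathcal L\mapsto(\mathcal L_S,\mathcal L_P)$, with Eq.~\eqref{eq:ABMaxwell} obtained from $\nabla\mathcal L(0,0)=(1,0)$ together with local boundedness of the Hessian under $C^2$ regularity, and the backward map is the normalized reconstruction of Theorem~\ref{thm:blind-reconstruction}, with parity and the Maxwell limit inherited. Your explicit check that $\Phi\circ\Psi$ and $\Psi\circ\Phi$ are identities only spells out what the paper leaves implicit in ``completing the one-to-one correspondence.''
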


\begin{proof}
Let first $\mathcal L\in\cCcausal(\Omega)$ be parity invariant and normalized. Defining
\begin{equation}
\mathcal A=\mathcal L_S,
\qquad
\mathcal B=\mathcal L_P,
\end{equation}
the Maxwell limit gives $\nabla\mathcal L(0,0)=(1,0)$ and, since the Hessian is locally bounded by $C^2$ regularity, $\nabla\mathcal L(S,P)=(1,0)+O(\varrho)$; this yields Eq.~\eqref{eq:ABMaxwell}. Parity of $\mathcal L$ yields Eq.~\eqref{eq:ABparity}, and equality of mixed derivatives yields Eq.~\eqref{eq:integrabilityAB}. Moreover, Eqs.~\eqref{eq:ABweak} and \eqref{eq:ABstrong} are precisely Eqs.~\eqref{eq:causal1}--\eqref{eq:causal3} expressed in terms of $(\mathcal A,\mathcal B)$. Hence every Lagrangian in the subclass considered defines an element of $\cCblind(\Omega)$.

Conversely, let $(\mathcal A,\mathcal B)\in\cCblind(\Omega)$. Theorem~\ref{thm:blind-reconstruction} reconstructs a unique $\mathcal L\in C^2(\Omega)$ with $\mathcal L(0,0)=0$. Condition \eqref{eq:ABparity} guarantees that $\mathcal L$ is even in $P$, while Eq.~\eqref{eq:ABMaxwell} guarantees the Maxwell limit. Finally, Eqs.~\eqref{eq:ABweak} and \eqref{eq:ABstrong} reproduce the necessary and sufficient causality conditions for the Pleba\'nski class \cite{Schellstede2016}. Thus $\mathcal L\in\cCcausal(\Omega)$, completing the one-to-one correspondence.
\end{proof}

The equivalence \eqref{eq:blindEquiv} combines two known ingredients---constitutive integrability \cite{AschieriFerrara2013} and Pleba\'nski causality \cite{Schellstede2016}---to formulate the inverse problem directly in the $(S,P)$ plane. It is not a new causal classification or a reformulation of the Poincar\'e lemma. Its usefulness is methodological: it avoids choosing a named Lagrangian or an input geometry and makes the functional freedom subject to integrability and causality inequalities explicit.

\subsection{Why the axioms do not select a unique Lagrangian}

The preceding freedom is not merely formal. If a theory lies strictly inside the causal region, then a full functional neighborhood of equally admissible theories exists.

\begin{theorem}[Functional stability under strict causal margins]
\label{thm:functional-stability}
Let $D\subset\R^2$ be a physical domain and let $\mathcal L_0\in\cCcausal(D)$ be parity invariant and Maxwell normalized. Let $\mathcal K\Subset\operatorname{int}D$ be a compact set with nonempty interior and suppose that, on $\mathcal K$, there are constants $m_1,m_2,m_3>0$ such that
\begin{equation}
\mathcal L_{0,S}\ge m_1,
\qquad
\lambda_{\min}(\nabla^2\mathcal L_0)\ge m_2,
\qquad
\mathscr C[\mathcal L_0]\ge m_3,
\label{eq:strictmargins}
\end{equation}
where
\begin{equation}
\mathscr C[\mathcal L]
=
\mathcal L_S
-2U\mathcal L_{SS}
-2V\mathcal L_{PP}
+2P\mathcal L_{SP}.
\label{eq:CfunctionalBlind}
\end{equation}
Let moreover
\begin{equation}
\psi\in C_c^2(\operatorname{int}\mathcal K)
\end{equation}
be even in $P$ and vanish in a neighborhood of the vacuum. Define
\begin{equation}
\mathcal L_\varepsilon
=\mathcal L_0+\varepsilon\psi.
\label{eq:functionaldeform}
\end{equation}
Then there exists $\varepsilon_*>0$ such that, for every $|\varepsilon|<\varepsilon_*$,
\begin{equation}
\mathcal L_\varepsilon\in\cCcausal(D),
\end{equation}
with the same parity, vacuum energy, and weak-field Maxwell limit as $\mathcal L_0$. If, in addition, $\psi$ vanishes throughout the sufficiently strong-field region, any self-energy condition depending only on the strong-field asymptotics is unchanged.
\end{theorem}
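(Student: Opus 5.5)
The plan is to split $D$ into the region where the perturbation acts and its complement, and to treat the compact set $\mathcal K$ by a uniform perturbation estimate. Outside $\supp\psi$ we have $\mathcal L_\varepsilon=\mathcal L_0$ together with all derivatives up to second order, since $\psi\in C_c^2(\operatorname{int}\mathcal K)$ vanishes identically on the open set $D\setminus\supp\psi$. Every causality inequality \eqref{eq:causal1}--\eqref{eq:causal3} therefore holds there, because $\mathcal L_0\in\cCcausal(D)$. In particular, the perturbation vanishes near the vacuum, so $\mathcal L_\varepsilon(0,0)=0$ and the expansion \eqref{eq:Maxwellweak} is untouched. Parity is preserved because $\psi$ is even in $P$ and the deformation is linear. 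Finally, $\mathcal L_\varepsilon\in C^2(D)$ since both summands are $C^2$.

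On $\mathcal K$ we use the strict margins \eqref{eq:strictmargins}. Set $M=\|\psi\|_{C^2(\mathcal K)}$, the maximum over $\mathcal K$ of $|\psi_S|,|\psi_P|,|\psi_{SS}|,|\psi_{SP}|,|\psi_{PP}|$. This is finite by continuity and compact support. For the first condition, $\mathcal L_{\varepsilon,S}\ge m_1-|\varepsilon|M>0$ as soon as $|\varepsilon|<m_1/M$. For the Hessian, Weyl's inequality gives $\lambda_{\min}(\nabla^2\mathcal L_\varepsilon)\ge m_2-|\varepsilon|\,\|\nabla^2\psi\|_{\rm op}\ge m_2-2|\varepsilon|M$. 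Positive definiteness of the $2\times2$ Hessian is equivalent to the three inequalities in \eqref{eq:causal2}, with strict inequality in fact.

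For the strong condition we use linearity of $\mathscr C$: $\mathscr C[\mathcal L_\varepsilon]=\mathscr C[\mathcal L_0]+\varepsilon\,\mathscr C[\psi]$. The coefficients $2U,2V,2P$ are continuous functions of $(S,P)$ by \eqref{eq:UVdef}, so they are bounded on the compact set $\mathcal K$. Writing $R=\max_{\mathcal K}(1+2U+2V+2|P|)$, we get $|\mathscr C[\psi]|\le RM$, and hence $\mathscr C[\mathcal L_\varepsilon]\ge m_3-|\varepsilon|RM>0$. We then take
\begin{equation}
\varepsilon_*=\min\Bigl\{\frac{m_1}{M},\frac{m_2}{2M},\frac{m_3}{RM}\Bigr\},
\end{equation}
with $\varepsilon_*=\infty$ if $M=0$. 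The self-energy statement is immediate: if $\psi$ also vanishes in the strong-field region, then $\mathcal L_\varepsilon$ coincides with $\mathcal L_0$ there, so any functional of the asymptotics is unchanged.

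The only delicate point, and the one I expect to require care, is making sure the estimates are genuinely uniform. This needs compactness of $\mathcal K$ (which gives bounded $U,V,P$ and bounded derivatives of $\psi$), and it needs $\supp\psi\subset\operatorname{int}\mathcal K$ so that no boundary layer escapes the margins. The other thing to check is that the equality $\mathcal L_\varepsilon=\mathcal L_0$ off the support holds together with second derivatives. Both follow from $\psi\in C^2_c(\operatorname{int}\mathcal K)$, so no additional argument beyond the triangle and Weyl inequalities should be needed.
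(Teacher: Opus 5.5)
Your proposal is correct and follows essentially the paper's argument: $\mathcal L_\varepsilon=\mathcal L_0$ (with derivatives through second order) off $\supp\psi$, together with uniform linear perturbation bounds on $\mathcal K$ for $\mathcal L_S$, the smallest Hessian eigenvalue, and the linear functional $\mathscr C$. The only difference is bookkeeping. You bound everything through a single $C^2$ norm, using $\|\nabla^2\psi\|_{\mathrm{op}}\le 2M$ and $|\mathscr C[\psi]|\le RM$, whereas the paper works directly with the three separate sup norms $M_1,M_2,M_3$.
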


\begin{proof}
Because $\mathcal K$ is compact and $\psi\in C_c^2$, the quantities
\begin{equation}
\begin{aligned}
M_1&:=\|\psi_S\|_{L^\infty(\mathcal K)},\\
M_2&:=\|\nabla^2\psi\|_{\mathrm{op},L^\infty(\mathcal K)},\\
M_3&:=\|\mathscr C[\psi]\|_{L^\infty(\mathcal K)}
\end{aligned}
\label{eq:M123}
\end{equation}
are finite. We adopt the convention $m_i/M_i=+\infty$ when $M_i=0$ and define
\begin{equation}
\varepsilon_*=
\frac12\min\left\{
1,
\frac{m_1}{M_1},
\frac{m_2}{M_2},
\frac{m_3}{M_3}
\right\}>0.
\label{eq:epsilonstar}
\end{equation}

For $|\varepsilon|<\varepsilon_*$,
\begin{equation}
\mathcal L_{\varepsilon,S}
=\mathcal L_{0,S}+\varepsilon\psi_S
\ge
m_1-|\varepsilon|M_1>0.
\label{eq:stabA}
\end{equation}
Now let $v$ be a unit vector. On $\mathcal K$,
\begin{align}
v^T\nabla^2\mathcal L_\varepsilon v
&=
v^T\nabla^2\mathcal L_0v
+\varepsilon\,v^T\nabla^2\psi\,v
\nonumber\\
&\ge
m_2-|\varepsilon|M_2>0.
\label{eq:stabH}
\end{align}
Thus $\nabla^2\mathcal L_\varepsilon$ remains positive definite on $\mathcal K$. Finally, by linearity of $\mathscr C$,
\begin{equation}
\mathscr C[\mathcal L_\varepsilon]
=
\mathscr C[\mathcal L_0]
+\varepsilon\mathscr C[\psi]
\ge
m_3-|\varepsilon|M_3>0.
\label{eq:stabC}
\end{equation}
Hence all causality inequalities are preserved on $\mathcal K$.

Outside $\supp\psi$ one has
\begin{equation}
\mathcal L_\varepsilon=\mathcal L_0,
\end{equation}
so causality on the rest of $D$ is inherited directly from $\mathcal L_0$. Parity is preserved because both $\mathcal L_0$ and $\psi$ are even in $P$. Since $\psi$ vanishes in a neighborhood of the vacuum,
\begin{equation}
\mathcal L_\varepsilon=\mathcal L_0
\end{equation}
there, so neither $\mathcal L(0,0)=0$ nor the Maxwell normalization changes. If $\psi$ also vanishes for sufficiently strong fields, the deformed theory coincides exactly with $\mathcal L_0$ in the strong-field region and preserves any self-energy criterion controlled only by that asymptotic behavior.
\end{proof}

\begin{corollary}[Functional nonuniqueness]
\label{cor:functional-nonuniqueness}
Under the assumptions of Theorem~\ref{thm:functional-stability}, assume in addition that $\mathcal K$ is invariant under $(S,P)\mapsto(S,-P)$. Then a strictly causal Lagrangian is not an isolated point in the space of admissible theories: there exist infinitely many linearly independent even deformations $\psi_n\in C_c^2(\operatorname{int}\mathcal K)$, vanishing near the vacuum, and sufficiently small nonzero amplitudes for which $\mathcal L_0+\varepsilon_n\psi_n$ remains causal, parity invariant, and Maxwell normalized.
\end{corollary}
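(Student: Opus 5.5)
The plan is to build an explicit infinite family of admissible bump functions inside $\mathcal K$ and then apply Theorem~\ref{thm:functional-stability} to each of them separately. Because $\mathcal K$ has nonempty interior and is invariant under $(S,P)\mapsto(S,-P)$, we can choose a point $(S_0,P_0)\in\operatorname{int}\mathcal K$ different from the vacuum. The reflected point $(S_0,-P_0)$ then also lies in $\operatorname{int}\mathcal K$. Select an open ball $B$ around $(S_0,P_0)$ such that $B\cup\bar B^{\ast}\Subset\operatorname{int}\mathcal K$, where $B^{\ast}$ is the reflected ball, and such that the closures of both balls stay away from a neighborhood of $(0,0)$. Shrinking the radius makes this possible, since the vacuum is a single point distinct from $(S_0,\pm P_0)$.

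Inside $B$ pick infinitely many pairwise disjoint smaller balls $B_n$, for example centered at points converging to $(S_0,P_0)$ with radii chosen so that none of them overlap. Let $\phi_n\in C_c^\infty(B_n)$ be a standard nonnegative bump that is not identically zero, and define the symmetrized function
\begin{equation}
\psi_n(S,P)=\phi_n(S,P)+\phi_n(S,-P).
\end{equation}
Each $\psi_n$ is even in $P$, lies in $C_c^2(\operatorname{int}\mathcal K)$, and vanishes near the vacuum. The sets $B_n\cup B_n^{\ast}$ may intersect for different $n$ if some $B_n$ crosses the axis $P=0$. To rule this out, we either take $P_0\neq0$ with $B$ disjoint from the axis, or we simply use the supports directly, noting that $\supp\psi_n\subset B_n\cup B_n^{\ast}$.

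Linear independence follows from disjointness of the supports. If $P_0\neq0$ and $B$ avoids the axis, the sets $B_n\cup B_n^{\ast}$ are pairwise disjoint. Suppose $\sum_{k\le N}c_k\psi_k=0$. Evaluating at a point of $B_k$ where $\phi_k>0$ kills every term except the $k$-th, so $c_k=0$. If $\operatorname{int}\mathcal K$ meets the vacuum-free region only near the axis, we use any point with $P_0\ne0$, which exists because an open set can never be contained in the line $P=0$.

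For each $n$, Theorem~\ref{thm:functional-stability} supplies $\varepsilon_{*,n}>0$. Choose any $\varepsilon_n$ with $0<|\varepsilon_n|<\varepsilon_{*,n}$. The theorem then guarantees that $\mathcal L_0+\varepsilon_n\psi_n\in\cCcausal(D)$ with the same parity, vacuum energy, and Maxwell limit as $\mathcal L_0$. Since $\psi_n\neq0$, this deformed Lagrangian differs from $\mathcal L_0$. Moreover, $\varepsilon_n\psi_n\to0$ in $C^2$ can be arranged by additionally requiring $|\varepsilon_n|\,\|\psi_n\|_{C^2}<1/n$, so $\mathcal L_0$ is not isolated.

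There is no real obstacle here. The only care points are two: keeping the supports away from the vacuum, and securing evenness together with disjointness simultaneously. Both are handled by the choice $P_0\neq0$ combined with reflection symmetry.
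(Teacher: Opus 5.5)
Your proposal is correct and is essentially the argument the paper intends. The paper gives no separate proof: the corollary follows by applying Theorem~\ref{thm:functional-stability} to each member of a family of $P$-even, vacuum-avoiding bumps with disjoint supports in $\operatorname{int}\mathcal K$, with reflection invariance of $\mathcal K$ keeping the symmetrized bumps inside $\operatorname{int}\mathcal K$, exactly as you do (non-isolation alone already follows from a single direction, since $\varepsilon\mapsto\mathcal L_0+\varepsilon\psi$ is an admissible segment for $|\varepsilon|<\varepsilon_*$).
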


Thus the universal conditions leave infinitely many admissible local deformation directions. This statement is local and need not extend to real-analytic subclasses with globally fixed asymptotics; closed-form sectors therefore require an additional reduction.

\section{Representation in the \texorpdfstring{$(U,V)$}{(U,V)} plane}
\label{sec:UV}

The $(\mathcal A,\mathcal B)$ representation is the most direct way to describe the full class. To identify integrable substructures, it is useful to work in a simply connected region where the following quantities are positive:
\begin{equation}
p(U,V):=\mathcal L_V>0,
\qquad
q(U,V):=-\mathcal L_U>0.
\label{eq:pqdef}
\end{equation}
We do not assume that this patch covers all of $\cCcausal$ globally. Equality of mixed derivatives gives
\begin{equation}
p_U+q_V=0.
\label{eq:pqintegrability}
\end{equation}
Define
\begin{equation}
\Delta_d(U,V):=pq=-\mathcal L_U\mathcal L_V,
\qquad
R(U,V):=\frac{p}{q}.
\label{eq:DeltaR}
\end{equation}
Then
\begin{equation}
p=\sqrt{\Delta_dR},
\qquad
q=\sqrt{\frac{\Delta_d}{R}},
\end{equation}
and integrability is equivalent to
\begin{equation}
\partial_U\sqrt{\Delta_dR}
+\partial_V\sqrt{\frac{\Delta_d}{R}}=0.
\label{eq:DeltaRPDE}
\end{equation}

\begin{theorem}[Reconstruction in the $(\Delta_d,R)$ representation]
Let $\Delta_d,R>0$ be $C^1$ functions on a simply connected region and suppose they satisfy Eq.~\eqref{eq:DeltaRPDE}. Then
\begin{equation}
\omega=-\sqrt{\frac{\Delta_d}{R}}\,\dd U
+\sqrt{\Delta_dR}\,\dd V
\label{eq:omegaDeltaR}
\end{equation}
is closed, and there exists a Lagrangian $\mathcal L$ on that region with $\dd\mathcal L=\omega$, unique up to an additive constant.
\end{theorem}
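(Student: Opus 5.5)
The plan is to reduce the statement to the argument already used for Theorem~\ref{thm:blind-reconstruction}, now in the coordinates $(U,V)$ instead of $(S,P)$. First I would note that $\Delta_d,R>0$ and $C^1$ make $\sqrt{\Delta_d/R}$ and $\sqrt{\Delta_dR}$ well defined and $C^1$ on the region. The square root is smooth on $(0,\infty)$, and quotients and products of positive $C^1$ functions are $C^1$. Hence $\omega$ in Eq.~\eqref{eq:omegaDeltaR} is a $C^1$ 1-form, and its exterior derivative is defined classically.

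Next I would compute $\dd\omega$ directly. Writing $\omega=a\,\dd U+b\,\dd V$ with $a=-\sqrt{\Delta_d/R}$ and $b=\sqrt{\Delta_dR}$, one has
\begin{equation}
\dd\omega=(\partial_U b-\partial_V a)\,\dd U\wedge\dd V
=\Bigl(\partial_U\sqrt{\Delta_dR}+\partial_V\sqrt{\tfrac{\Delta_d}{R}}\Bigr)\dd U\wedge\dd V .
\end{equation}
This vanishes identically by Eq.~\eqref{eq:DeltaRPDE}, so $\omega$ is closed. This is the step where the sign convention in Eq.~\eqref{eq:pqdef} matters. The coefficient of $\dd U$ is $-q=\mathcal L_U$ and that of $\dd V$ is $p=\mathcal L_V$, so I would double-check that Eq.~\eqref{eq:DeltaRPDE} is exactly $p_U+q_V=0$ with $p=\sqrt{\Delta_dR}$ and $q=\sqrt{\Delta_d/R}$. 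Otherwise a sign would spoil closedness.

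Then I would invoke the Poincar\'e lemma on the simply connected region, exactly as in the proof of Theorem~\ref{thm:blind-reconstruction}. A closed $C^1$ 1-form is exact there, so there is $\mathcal L$ with $\dd\mathcal L=\omega$. Concretely, $\mathcal L(U,V)=\int_\Gamma\omega$ from a fixed base point, and this is path independent by Stokes' theorem. Since the coefficients are $C^1$, $\mathcal L\in C^2$, and by construction $\mathcal L_V=p>0$ and $-\mathcal L_U=q>0$, recovering Eq.~\eqref{eq:pqdef}.

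Uniqueness follows as before: two such potentials have vanishing differential, so on a connected region they differ by a constant. The only delicate point I expect is regularity and geometry. Simple connectivity presupposes connectedness, and the region must be open in the quadrant $U,V\ge0$ so that derivatives make sense. If the paper intends $\mathcal L$ as a function of $(S,P)$, one also needs the smooth change of variables \eqref{eq:UVdef} away from the vacuum, where $(U,V)$ fails to be smooth in $(S,P)$. I would state the result purely in $(U,V)$ to avoid this issue.
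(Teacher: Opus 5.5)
Your proof is correct and follows the paper's own argument: $\dd\omega$ is exactly the left-hand side of Eq.~\eqref{eq:DeltaRPDE} (your sign check is right), and the Poincar\'e lemma on the simply connected region, together with connectedness, gives existence of $\mathcal L$ and its uniqueness up to an additive constant. The paper states this in two lines; your remarks on $C^1$ regularity of the square roots and on the non-smoothness of $(U,V)$ at the vacuum are valid refinements the paper does not spell out.
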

\begin{proof}
The condition $\dd\omega=0$ is precisely Eq.~\eqref{eq:DeltaRPDE}; the result follows from the Poincar\'e lemma.
\end{proof}

Equivalently, if $\Delta_d$ is prescribed and one solves for $p$,
\begin{equation}
p_U+\partial_V\!\left(\frac{\Delta_d}{p}\right)=0,
\label{eq:pPDE}
\end{equation}
one obtains a first-order quasilinear equation. Its characteristics are recorded in Appendix~\ref{app:characteristics}. This representation remains blind: neither $\Delta_d$ nor $R$ has yet been fixed by a named theory. The causality inequalities must still be imposed afterward on each reconstructed solution.

\section{The known self-dual hypersurface}
\label{sec:selfdual}

Continuous electromagnetic self-duality corresponds to invariance of the equations of motion under electric--magnetic duality rotations in the Gaillard--Zumino--Gibbons--Rasheed framework \cite{GaillardZumino1981,GibbonsRasheed1995}. In the variables introduced above, the condition takes the form
\begin{equation}
\mathcal L_U\mathcal L_V=-1,
\qquad\text{i.e.}\qquad
\Delta_d=1.
\label{eq:selfdualDelta}
\end{equation}
The functional solution can be organized through the Courant--Hilbert construction \cite{CourantHilbert1962,RussoTownsendCausalSelfDual2024}:
\begin{align}
\mathcal L&=\ell(\tau)-\frac{2U}{\dot\ell(\tau)},
\label{eq:CHL}\\
\tau&=V+\frac{U}{\dot\ell(\tau)^2},
\label{eq:CHtau}
\end{align}
with
\begin{equation}
\mathcal L_V=\dot\ell,
\qquad
\mathcal L_U=-\dot\ell^{-1}.
\end{equation}
For the causal self-dual branch, the inequalities can be organized as
\begin{equation}
\dot\ell\ge1,
\qquad
\ddot\ell\ge0,
\label{eq:CHcausal}
\end{equation}
with the refinements discussed in the modern literature \cite{RussoTownsendCausalSelfDual2024,RussoTownsendDualities2024,RussoTownsendSimplified2025,KuzenkoRuhl2026}. To belong in addition to the strictly Maxwell-normalized class considered here, we impose
\begin{equation}
\ell(0)=0,
\qquad
\dot\ell(0)=1,
\label{eq:CHMaxwell}
\end{equation}
so that the vacuum energy vanishes and the linear term is exactly Maxwell's. This distinction separates our subclass from more general self-dual weak-field limits, including ModMax-type theories.

Equation~\eqref{eq:selfdualDelta} will be used only as a control against the known causal self-dual sector \cite{RussoTownsendCausalSelfDual2024,RussoTownsendSimplified2025,Chen2025,BabaeiAghbolagh2026,KuzenkoRuhl2026}. The next section introduces a distinct, non-self-dual reduction.

\section{An exactly integrable non-self-dual reduction}
\label{sec:CX}

\subsection{Integration of the constitutive ratio}

We now seek an analytic reduction that does not fix a Lagrangian in advance. To motivate the choice, consider first the separable class
\begin{equation}
R(U,V)=\frac{\alpha(U)}{\beta(V)},
\qquad
\alpha(0)=\beta(0)=1,
\label{eq:Rseparable}
\end{equation}
which is compatible with $R(0,0)=1$ in the Maxwell limit. Writing $p=\alpha h$ and $q=\beta h$, the integrability condition \eqref{eq:pqintegrability} becomes
\begin{equation}
\alpha h_U+\beta h_V+[\alpha'(U)+\beta'(V)]h=0.
\label{eq:separablecompat}
\end{equation}
The simplest separable reduction is the one for which compatibility becomes a homogeneous transport equation, with no source term proportional to $h$. Requiring
\begin{equation}
\alpha'(U)+\beta'(V)=0,
\qquad
\forall\,(U,V)
\label{eq:sourcefree}
\end{equation}
forces, by separation of variables,
\begin{equation}
\alpha'(U)=c,
\qquad
\beta'(V)=-c,
\end{equation}
with constant $c$. Vacuum normalization then gives $\alpha=1+cU$ and $\beta=1-cV$. For the branch $c>0$, introduce $T=2/c$ and obtain
\begin{equation}
R_T(U,V)=\frac{1+2U/T}{1-2V/T}.
\label{eq:RT}
\end{equation}
Thus Eq.~\eqref{eq:RT} is neither a consequence of causality nor claimed to be the only possible integrable reduction; it is the normalized affine-separable reduction that removes the source term from Eq.~\eqref{eq:separablecompat}. This gives a minimal and reproducible motivation for the first specific functional choice in the construction.

Define
\begin{equation}
\alpha_T=1+\frac{2U}{T},
\qquad
\beta_T=1-\frac{2V}{T},
\label{eq:alphabetaUV}
\end{equation}
and work on
\begin{equation}
D_T=\{U\ge0,\ 0\le V<T/2\},
\label{eq:DT}
\end{equation}
where $\alpha_T,\beta_T>0$. Differential inequalities are understood in the relative interior $U>0$, $0<V<T/2$ and extended continuously to the electric and magnetic axes whenever the Lagrangian is regular.

\begin{theorem}[Integration of the ratio $R_T$]
Every local solution of Eq.~\eqref{eq:pqintegrability} satisfying $p/q=R_T$ can be written as
\begin{equation}
p=\alpha_Tg(X),
\qquad
q=\beta_Tg(X),
\label{eq:pqg}
\end{equation}
with
\begin{equation}
X=\alpha_T\beta_T=1-\frac{2S}{T}-\frac{P^2}{T^2},
\label{eq:Xdef}
\end{equation}
and an arbitrary function $g(X)$. The corresponding Lagrangian is
\begin{equation}
\mathcal L_g(S,P)=-\frac{T}{2}\int_1^{X(S,P)}g(x)\,\dd x,
\label{eq:Lg}
\end{equation}
with $g(1)=1$ from Maxwell normalization.
\end{theorem}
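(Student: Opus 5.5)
The plan is to reduce the integrability condition \eqref{eq:pqintegrability} with the prescribed ratio $p/q=R_T$ to a homogeneous transport equation, exactly as in the motivation leading to \eqref{eq:RT}. I will show that $X=\alpha_T\beta_T$ is a first integral of that transport equation, and then integrate the resulting exact 1-form. Since $\alpha_T,\beta_T>0$ on $D_T$ and $p/q=\alpha_T/\beta_T$, I would define $h:=p/\alpha_T=q/\beta_T>0$. This is legitimate because the two quotients agree by assumption. Substituting $p=\alpha_T h$, $q=\beta_T h$ into $p_U+q_V=0$ gives \eqref{eq:separablecompat}. Because $\alpha_T'=2/T$ and $\beta_T'=-2/T$, the source term cancels identically, leaving
\begin{equation*}
\alpha_T h_U+\beta_T h_V=0 .
\end{equation*}

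Next I would check directly that $X=\alpha_T\beta_T$ is annihilated by the same vector field. We have $X_U=(2/T)\beta_T$ and $X_V=-(2/T)\alpha_T$, so $\alpha_T X_U+\beta_T X_V=(2/T)(\alpha_T\beta_T-\beta_T\alpha_T)=0$. On $D_T$ the gradient of $X$ never vanishes (indeed $X_U>0$), and the vector field $(\alpha_T,\beta_T)$ is nonzero. Locally, therefore, every solution of the transport equation is a $C^1$ function of the single invariant $X$. To see this, straighten the flow using coordinates $(X,V)$, which are valid since $X_U\neq0$; in these coordinates the equation reads $\partial_V h|_X=0$. Hence $h=g(X)$, which gives \eqref{eq:pqg}.

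The Lagrangian then follows from the $(U,V)$ version of Theorem~\ref{thm:blind-reconstruction}. Using $\dd\mathcal L=\mathcal L_U\,\dd U+\mathcal L_V\,\dd V=-q\,\dd U+p\,\dd V$, we get
\begin{equation*}
\dd\mathcal L=g(X)\bigl(-\beta_T\,\dd U+\alpha_T\,\dd V\bigr)=-\tfrac{T}{2}\,g(X)\,\dd X ,
\end{equation*}
which is manifestly exact. Imposing $\mathcal L=0$ at the vacuum, where $X=1$, yields \eqref{eq:Lg}. The expression of $X$ in terms of $(S,P)$ is a one-line expansion using \eqref{eq:UVrelations}: $\alpha_T\beta_T=1+2(U-V)/T-4UV/T^2=1-2S/T-P^2/T^2$. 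For the normalization, $\mathcal L_S=-(T/2)g(X)X_S=g(X)$. The Maxwell condition $\mathcal L_S(0,0)=1$ together with $X(0,0)=1$ then forces $g(1)=1$.

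The only delicate step is the passage from the transport equation to $h=g(X)$. This is where the statement says "local". Globally one needs the level sets of $X$ inside the working region to be connected, so that $h$ cannot take different constant values on different components of the same level set. I would first establish the representation on small neighbourhoods via the straightening argument above. I would then note that on $D_T$ each level set $\{X=c\}$ is the graph $U=U_c(V)$ over an interval in $V$, because $X$ is strictly increasing in $U$. Such a graph is connected, so the local functions $g$ patch into a single function on any simply connected subregion. Regularity on the axes $U=0$ and $V=0$ would be handled by continuous extension from the relative interior, following the convention stated after \eqref{eq:DT}.
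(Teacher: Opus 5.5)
Your proof is correct and follows the paper's route: you substitute $p=\alpha_T h$, $q=\beta_T h$ to get the source-free transport equation $\alpha_T h_U+\beta_T h_V=0$, recognize $X=\alpha_T\beta_T$ as its first integral with $\nabla X\neq0$ on $D_T$ so that $h=g(X)$ locally, and integrate $\dd\mathcal L=-\tfrac{T}{2}g(X)\,\dd X$ with $g(1)=1$ from $\mathcal L_S(0,0)=1$. You supply more detail than the paper does (straightening coordinates, the explicit exact form, the normalization check).

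One caveat concerns your last paragraph. Connectedness of the level curves of $X$ in all of $D_T$ does yield a single $g$ on $D_T$, but it does not yield one on \emph{any} simply connected subregion. Such a region, for example one that is C-shaped in the $(X,V)$ coordinates, can meet a level curve in several arcs, and $h$ may take different constant values on those arcs. This is harmless here because the theorem only asserts a local representation.
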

\begin{proof}
From $p/q=\alpha_T/\beta_T$, write $p=\alpha_Th$ and $q=\beta_Th$. Substitution into Eq.~\eqref{eq:pqintegrability}, together with $(\alpha_T)_U=2/T$ and $(\beta_T)_V=-2/T$, gives
\begin{equation}
\alpha_Th_U+\beta_Th_V=0.
\end{equation}
The characteristics satisfy $\alpha_T\beta_T=\mathrm{const}$. Since
\[
\nabla X=\frac{2}{T}(\beta_T,-\alpha_T)\neq0
\]
throughout $D_T$, the level sets of $X$ locally label the characteristics; hence $h=g(X)$. Since $X_V=-2\alpha_T/T$ and $\mathcal L_V=\alpha_Tg(X)$, one finds $\mathcal L_X=-Tg/2$ and Eq.~\eqref{eq:Lg}.
\end{proof}

The combination $X$ was not postulated as a Lagrangian ansatz or as geometric input; it arises as a first integral of the compatibility condition specifically associated with Eq.~\eqref{eq:RT}. The distinction matters. Different constitutive ratios generally produce different first integrals and subclasses.

\subsection{Exact causal characterization}

Equation \eqref{eq:Lg} gives
\begin{align}
\mathcal L_S&=g,
\label{eq:LSg}\\
\mathcal L_{SS}&=-\frac{2}{T}g',
\label{eq:LSSg}\\
\mathcal L_{SP}&=-\frac{2P}{T^2}g',
\label{eq:LSPg}\\
\mathcal L_{PP}&=\frac{g}{T}-\frac{2P^2}{T^3}g'.
\label{eq:LPPg}
\end{align}
In particular,
\begin{equation}
\det\nabla^2\mathcal L_g
=-\frac{2}{T^2}gg'.
\label{eq:detHg}
\end{equation}

\begin{theorem}[Causal characterization of the $X$ class]
\label{thm:CXcausal}
Let $g\in C^1(0,\infty)$ with $g(1)=1$, and consider the Lagrangian \eqref{eq:Lg} on the full constitutive domain
\begin{equation}
D_T=\{U\ge0,\ 0\le V<T/2\},
\end{equation}
with no upper truncation in $U$. Then Eq.~\eqref{eq:Lg} satisfies the full causality conditions \eqref{eq:causal1}--\eqref{eq:causal3} throughout $D_T$ if and only if
\begin{equation}
g(X)>0,
\qquad
g'(X)\le0,
\qquad
g(X)+2Xg'(X)\ge0.
\label{eq:gcausal}
\end{equation}
\end{theorem}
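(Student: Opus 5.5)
The plan is to compute every quantity in Eqs.~\eqref{eq:causal1}--\eqref{eq:causal3} for $\mathcal L_g$ explicitly in terms of $g$, $g'$ and the position in $D_T$. Each condition then becomes a pointwise inequality in $X$ together with an auxiliary coordinate. The "if and only if" follows once we know which values of that auxiliary coordinate are actually attained at each fixed $X$. The first step is to check that $X=\alpha_T\beta_T$ maps $D_T$ onto $(0,\infty)$. For each $X>0$, the fibre $\{X=\mathrm{const}\}\cap D_T$ is parametrized by $\beta_T\in(0,\min(1,X)]$, since $\alpha_T=X/\beta_T\ge1$ forces $\beta_T\le X$. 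So every value of $g$ and $g'$ on $(0,\infty)$ is probed.

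The weak conditions come directly from Eqs.~\eqref{eq:LSg}--\eqref{eq:detHg}.
\begin{itemize}
\item Eq.~\eqref{eq:causal1} is $g(X)>0$.
\item $\mathcal L_{SS}=-2g'/T\ge0$ is $g'\le0$.
\item Given these two, $\det\nabla^2\mathcal L_g=-2gg'/T^2\ge0$ holds automatically.
\item $\mathcal L_{PP}=g/T-2P^2g'/T^3\ge0$ also holds automatically.
\end{itemize}
Conversely, surjectivity of $X$ shows that $g>0$ and $g'\le0$ are necessary on all of $(0,\infty)$.

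For the strong condition I will use the compact form \eqref{eq:strongUV} rather than \eqref{eq:causal3}. From Eq.~\eqref{eq:pqg}, $\mathcal L_U=-q=-\beta_Tg(X)$, and $X_U=2\beta_T/T$ with $\beta_T$ independent of $U$. This gives $\mathcal L_{UU}=-2\beta_T^2g'/T$, so \eqref{eq:strongUV} becomes
\begin{equation*}
g+\frac{4U\beta_T}{T}\,g'>0 .
\end{equation*}
Since $4U\beta_T/T=2(\alpha_T-1)\beta_T=2X-2\beta_T$, this is equivalent to
\begin{equation*}
g(X)+2Xg'(X)-2\beta_Tg'(X)>0,
\qquad \beta_T\in(0,\min(1,X)].
\end{equation*}

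The main obstacle is turning this strict, $\beta_T$-dependent inequality into the non-strict condition $g+2Xg'\ge0$, and this is where the openness of $V<T/2$ (so $\beta_T>0$) matters.

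Sufficiency: assume $g'\le0$ and $g+2Xg'\ge0$, so that $-2\beta_Tg'\ge0$. If $g'(X)<0$, strictness comes from $-2\beta_Tg'>0$. If $g'(X)=0$, the expression reduces to $g>0$.

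Necessity: fix $X$ and let $\beta_T\to0^+$ along the fibre, i.e.\ $V\to T/2$ with $U$ growing so that $X$ stays fixed. Since $U$ is untruncated, this is possible for every $X$, and the limit yields $g+2Xg'\ge0$. Finally, the axes $U=0$ or $V=0$ are reached by continuity, as stipulated after Eq.~\eqref{eq:DT}.
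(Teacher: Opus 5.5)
Your proposal is correct and follows essentially the same route as the paper. You read the weak conditions off Eqs.~\eqref{eq:LSg}--\eqref{eq:detHg}, rewrite the strong condition via \eqref{eq:strongUV} as $-\beta_T[g+2(X-\beta_T)g']<0$, obtain sufficiency by splitting off $-2\beta_T g'\ge0$, and get necessity of $g+2Xg'\ge0$ by letting $\beta_T\to0^+$ at fixed $X$ with $U$ unbounded. Your explicit fibre parametrization $\beta_T\in(0,\min(1,X)]$ and the case split $g'<0$ versus $g'=0$ for strictness just make precise two points the paper states more tersely.
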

\begin{proof}
The first two conditions in Eq.~\eqref{eq:gcausal}, together with Eqs.~\eqref{eq:LPPg} and \eqref{eq:detHg}, are equivalent to $\mathcal L_S>0$ and positive semidefiniteness of the Hessian. For the strong condition, note that
\begin{equation}
\mathcal L_U=-\beta_Tg,
\qquad
\mathcal L_{UU}=-\frac{2\beta_T^2}{T}g'.
\end{equation}
Hence
\begin{equation}
\mathcal L_U+2U\mathcal L_{UU}
=-\beta_T\,[g+2(X-\beta_T)g'].
\label{eq:strongg}
\end{equation}
If Eq.~\eqref{eq:gcausal} holds,
\begin{equation}
g+2(X-\beta_T)g'=(g+2Xg')-2\beta_Tg'>0,
\end{equation}
because $g>0$ prevents the relevant terms from vanishing simultaneously. Conversely, at fixed $X>0$ one may take $\beta_T\to0^+$ while keeping $\alpha_T=X/\beta_T$; Eq.~\eqref{eq:strongg} then requires $g+2Xg'\ge0$.
\end{proof}

\begin{remark}
The necessity of $g+2Xg'\ge0$ is a global statement on $D_T$. The preceding step uses configurations with fixed $X>0$ and $\beta_T\to0^+$, for which $\alpha_T=X/\beta_T$ and hence $U$ can grow without bound. On a constitutive domain truncated at finite $U$, that limit would not by itself establish necessity of the third inequality.
\end{remark}

Introduce the local logarithmic index
\begin{equation}
a(X):=-X\frac{g'(X)}{g(X)}.
\label{eq:adef}
\end{equation}

\begin{corollary}[Functional parametrization of $\cCX$]
For positive $g\in C^1$, conditions \eqref{eq:gcausal} are equivalent to introducing a continuous function $a\in C^0(0,\infty)$ such that
\begin{equation}
{\displaystyle 0\le a(X)\le\frac12,}
\label{eq:abound}
\end{equation}
and, with $g(1)=1$,
\begin{equation}
g(X)=\exp\!\left[-\int_1^X\frac{a(y)}{y}\,\dd y\right].
\label{eq:gofa}
\end{equation}
Therefore,
\begin{equation}
{\displaystyle 
\mathcal L_a(S,P)=-\frac{T}{2}\int_1^X
\exp\!\left[-\int_1^x\frac{a(y)}{y}\,\dd y\right]\dd x
}
\label{eq:La}
\end{equation}
defines, for every continuous $a$ satisfying Eq.~\eqref{eq:abound}, an infinite-dimensional functional class of causal NLED theories within the reduction \eqref{eq:RT}.
\end{corollary}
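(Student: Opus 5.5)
The plan is to prove the equivalence in both directions by rewriting each condition in \eqref{eq:gcausal} in terms of the logarithmic index \eqref{eq:adef}, and then to reconstruct $g$ from $a$ by integrating a linear first-order ODE.

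\emph{Forward direction.} Assume $g\in C^1(0,\infty)$ is positive, $g(1)=1$, and the three inequalities \eqref{eq:gcausal} hold. Since $g>0$, the quotient $a=-Xg'/g$ is well defined and continuous on $(0,\infty)$.
\begin{itemize}
\item Because $X>0$ and $g>0$, the condition $g'\le0$ is equivalent to $a\ge0$.
\item Dividing $g+2Xg'\ge0$ by $g>0$ gives $1-2a\ge0$, that is, $a\le\tfrac12$.
\end{itemize}
This yields \eqref{eq:abound}. Next, the definition of $a$ reads $(\ln g)'=-a(X)/X$. Integrating from $1$ to $X$ and using $g(1)=1$ gives \eqref{eq:gofa}; every step is legitimate because $1/y$ and $a$ are continuous on the interval between $1$ and $X$.

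\emph{Converse direction.} Let $a\in C^0(0,\infty)$ satisfy \eqref{eq:abound}, and define $g$ by \eqref{eq:gofa}.
\begin{itemize}
\item The integrand $a(y)/y$ is continuous on $(0,\infty)$, so the inner integral is $C^1$ in $X$ by the fundamental theorem of calculus. Hence $g\in C^1(0,\infty)$.
\item As an exponential, $g$ is strictly positive, and $g(1)=1$ holds trivially.
\item Differentiating gives $g'=-(a/X)g$, so $-Xg'/g=a$, i.e.\ this $a$ is exactly the logarithmic index of $g$.
\end{itemize}
Reading the two algebraic equivalences above backwards, $a\ge0$ gives $g'\le0$, and $a\le\tfrac12$ gives $g+2Xg'=g(1-2a)\ge0$. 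So \eqref{eq:gcausal} holds. Note that $g$ and $a$ determine each other uniquely under the normalization $g(1)=1$, so the correspondence is a bijection.

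\emph{Causality and infinite-dimensionality.} Substituting $g$ into \eqref{eq:Lg} gives \eqref{eq:La}. By Theorem~\ref{thm:CXcausal}, every such $\mathcal L_a$ is causal on $D_T$.
\begin{itemize}
\item Since $X\in(0,\infty)$ on the relative interior of $D_T$, it suffices that $g$ is defined on all of $(0,\infty)$. The lower bound $X>0$ holds because $\alpha_T,\beta_T>0$ there.
\item Maxwell normalization follows from $g(1)=1$ via \eqref{eq:LSg}. At the vacuum $X=1$, so $\mathcal L_S=1$, and $\mathcal L_a(0,0)=0$ because the outer integral is empty.
\end{itemize}
For infinite-dimensionality, the admissible set of $a$ is a convex subset of $C^0(0,\infty)$ with nonempty interior in the sup norm; for example, it contains the ball of radius $\tfrac14$ around $a\equiv\tfrac14$. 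The map $a\mapsto g$ is injective, because $a$ is recovered from $g$ as $-Xg'/g$. The map $g\mapsto\mathcal L_g$ is also injective, since $\mathcal L_S=g$ and $X$ takes every value in $(0,\infty)$ on $D_T$. Hence infinitely many linearly independent perturbations of $a$ give distinct theories.

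\emph{Main obstacle.} The only delicate point is justifying that the integrals in \eqref{eq:gofa} and \eqref{eq:La} make sense for all $X\in(0,\infty)$. Near $X\to0^+$, the bound $a\le\tfrac12$ gives $g(X)\le X^{-1/2}$ for $X<1$, which is integrable at $0$, so $\mathcal L_a$ even extends continuously there. This should be checked explicitly but causes no difficulty.
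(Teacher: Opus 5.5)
Your proposal is correct and follows essentially the paper's route: Appendix~\ref{app:CXcausal} likewise defines $a=-Xg'/g$, reads off $a\ge0$ from $g'\le0$ and $1-2a=(g+2Xg')/g\ge0$, recovers \eqref{eq:gofa} by integrating $(\ln g)'=-a/X$, and inherits causality of \eqref{eq:La} from Theorem~\ref{thm:CXcausal}. Your extra checks fill in details the paper leaves implicit: the regularity of the converse direction, the bound $g\le X^{-1/2}$ for $X<1$, and the injectivity of $a\mapsto\mathcal L_a$ used for infinite-dimensionality.
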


The class $\cCX$ does not exhaust $\cCcausal$: it solves causality exactly only within the chosen reduction and on $D_T$. Its usefulness is that the remaining freedom is encoded by a single bounded function $a(X)$ rather than by a discrete selection of models.

\subsection{Self-duality and recovery of Born--Infeld}

For Eq.~\eqref{eq:Lg},
\begin{equation}
-\mathcal L_U=\beta_Tg,
\qquad
\mathcal L_V=\alpha_Tg,
\end{equation}
so
\begin{equation}
\Delta_d=-\mathcal L_U\mathcal L_V=Xg(X)^2.
\label{eq:DeltadX}
\end{equation}

\begin{theorem}[Intersection with the self-dual branch]
Within $\cCX$, the condition $\Delta_d=1$ uniquely selects
\begin{equation}
g(X)=X^{-1/2},
\qquad
a(X)=\frac12,
\end{equation}
and
\begin{equation}
\mathcal L_{\rm BI}=T(1-\sqrt X).
\label{eq:BI}
\end{equation}
Consequently,
\begin{equation}
{\displaystyle \cCX\cap\cCSD=\{\text{Born--Infeld}\}.}
\end{equation}
\end{theorem}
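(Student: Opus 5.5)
The plan is to use the identity \eqref{eq:DeltadX}, $\Delta_d=Xg(X)^2$, which holds for every member of the $X$ class, and simply solve the algebraic equation $\Delta_d=1$. Since $g>0$ on the causal class (first condition of Eq.~\eqref{eq:gcausal}) and $X>0$ on the interior of $D_T$, the equation $Xg(X)^2=1$ has the unique positive root $g(X)=X^{-1/2}$. Uniqueness holds as a function of $X$ on every value of $X$ attained in $D_T$. I will check that the range of $X=\alpha_T\beta_T$ over $D_T$ is all of $(0,\infty)$: take $V=0$ with $U\ge0$ for $X\ge1$, and $U=0$ with $V\in[0,T/2)$ for $X\in(0,1]$. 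Hence $g$ is determined on its whole domain, and $g(1)=1$ is automatically consistent with Maxwell normalization.

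Next I will compute the logarithmic index from Eq.~\eqref{eq:adef}. For $g=X^{-1/2}$ one gets $a(X)=\tfrac12$ identically. This lies at the endpoint of the admissible bound \eqref{eq:abound}, so $g$ is causal by Theorem~\ref{thm:CXcausal}. Concretely, $g>0$, $g'<0$, and $g+2Xg'=0\ge0$. Inserting $g$ into Eq.~\eqref{eq:Lg} gives $\mathcal L=-\tfrac T2\int_1^X x^{-1/2}\dd x=-T(\sqrt X-1)=T(1-\sqrt X)$, which is Eq.~\eqref{eq:BI}. Using Eq.~\eqref{eq:Xdef}, this is $T\bigl(1-\sqrt{1-2S/T-P^2/T^2}\bigr)$, the standard Born--Infeld Lagrangian with field-strength parameter fixed by $T$. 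For the converse inclusion, Born--Infeld is known to satisfy $\Delta_d=1$, and it lies in $\cCX$ as just shown. This gives $\cCX\cap\cCSD=\{\text{Born--Infeld}\}$.

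The main subtlety is interpretive rather than computational. One must make sure that $\cCSD$ means the Maxwell-normalized self-dual class of Sec.~\ref{sec:selfdual}, characterized exactly by $\Delta_d=1$ on the common domain, so that the intersection is computed at fixed $T$. Born--Infeld then means the one-parameter family indexed by $T$. I would state this explicitly. I would also remark that on the boundary $X\to0^+$ ($V\to T/2$) the identity extends by continuity, so no alternative branch can appear there.
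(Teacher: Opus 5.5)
Your proposal is correct and follows the paper's argument essentially line for line: you use $\Delta_d=Xg^2$ with $g>0$ to force $g=X^{-1/2}$, integrate Eq.~\eqref{eq:Lg} to obtain Born--Infeld, and close the converse with $g>0$, $g'<0$, $g+2Xg'=0$, and $Xg^2=1$. You also check that $X$ takes every value in $(0,\infty)$ on $D_T$, so that $g$ is fixed on its whole domain; this is a small point of rigor the paper leaves implicit.
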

\begin{proof}
Self-duality requires $Xg^2=1$. Since $g>0$, one has $g=X^{-1/2}$; integrating Eq.~\eqref{eq:Lg} yields Eq.~\eqref{eq:BI}. Conversely, for $g=X^{-1/2}$ one has $g>0$, $g'<0$, $g+2Xg'=0$, and $\Delta_d=Xg^2=1$, so Born--Infeld belongs to both classes.
\end{proof}

This result is consistent with the exceptional role of Born--Infeld in the nonbirefringent classification \cite{RussoTownsendNoBirefringence2023,MezincescuRussoTownsend2024,GibbonsHerdeiro2001}.

\subsection{Known power-law subfamily}

If $a(X)=\gamma$ is constant,
\begin{equation}
g(X)=X^{-\gamma},
\qquad
0\le\gamma\le\frac12.
\end{equation}
Defining $q=1-\gamma$, we obtain on $D_T$
\begin{equation}
\mathcal L^{(q)}=\frac{T}{2q}(1-X^q),
\qquad
\frac12\le q\le1.
\label{eq:Lpower}
\end{equation}
For $1/2\le q<1$ this reproduces the known causal power family \cite{RussoTownsendBornAgain2024,RussoTownsendBH2026}; $q=1/2$ is Born--Infeld. The endpoint $q=1$ requires a distinction between the constitutive domain and the maximal extension: in our construction it gives
\begin{equation}
\mathcal L^{(1)}=S+\frac{P^2}{2T},
\end{equation}
which satisfies Eqs.~\eqref{eq:causal1}--\eqref{eq:causal3} only on the patch $D_T$, where $2V<T$. As a polynomial, however, it extends beyond that patch and fails the strong-field causality test there; for this reason it does not belong to the globally causal family $1/2\le q<1$ identified in Ref.~\cite{RussoTownsendBornAgain2024}. The contribution of Eq.~\eqref{eq:La} is not to rediscover the power-law line, but to replace the constant index by a function $a(X)$ whose causal region within the reduction is known exactly.

As a weak-field benchmark, the same branch admits a direct match to the one-loop effective action of quantum electrodynamics (QED). In our conventions, the one-loop Euler--Heisenberg expansion through quartic order can be written as \cite{HeisenbergEuler1936,Dunne2012}
\begin{align}
\mathcal L_{\rm EH}
&=S+\mu_{\rm EH}\left(4S^2+7P^2\right)+O(F^6),
\label{eq:EHweak}\\
\mu_{\rm EH}
&=\frac{2\alpha_{\rm em}^2}{45m_e^4}.
\label{eq:muEH}
\end{align}
On the other hand, expanding the $\cCX$ class about the Maxwell vacuum gives
\begin{equation}
\mathcal L_g
=S+\frac{a(1)}{T}S^2+\frac{1}{2T}P^2+O(F^6).
\label{eq:CXweakEH}
\end{equation}
Matching the coefficients in Eqs.~\eqref{eq:EHweak} and \eqref{eq:CXweakEH} fixes
\begin{equation}
\begin{aligned}
a(1)&=\frac{2}{7},
&\qquad T&=\frac{1}{14\mu_{\rm EH}},\\
q_{\rm EH}&=\frac{5}{7},
&\qquad &(a=\gamma=1-q).
\end{aligned}
\label{eq:EHmatch}
\end{equation}
Thus the Euler--Heisenberg matching point of the constant-index branch lies strictly inside the causal interval $1/2<q<1$. The match is a weak-field statement through $O(F^4)$ and does not identify the full Euler--Heisenberg effective action with a member of $\cCX$.

\subsection{Continuous mixtures of Born--Infeld}

Convexity of $\cCcausal$ allows one to construct, on a common reality domain, the mixture
\begin{equation}
\mathcal L^{\rm MBI}_\mu(S,P)=
\int_0^\infty
\left[T-\sqrt{T^2-2TS-P^2}\right]\mu(\dd T),
\label{eq:MBI}
\end{equation}
where $\mu$ is a positive probability measure and we assume
\begin{equation}
T_0:=\inf\supp\mu>0.
\label{eq:MBIT0}
\end{equation}
This hypothesis guarantees a nontrivial common open domain,
\begin{equation}
D_\mu=\{U\ge0,\ 0\le V<T_0/2\},
\label{eq:Dmu}
\end{equation}
on which all seeds are real and causal. Under the same domination assumptions used in Eq.~\eqref{eq:convexIntegral}, the mixture belongs to $\cCcausal(D_\mu)$. Without a positive lower bound on the support, the intersection of the reality domains may collapse onto the axis $V=0$, and we would not claim two-invariant causality on a common open neighborhood. The mixture is nevertheless generically non-self-dual. Defining
\begin{equation}
b_T=\sqrt{\frac{T+2U}{T-2V}},
\end{equation}
one finds
\begin{equation}
-\mathcal L^{\rm MBI}_{\mu,U}\mathcal L^{\rm MBI}_{\mu,V}
=\avg{b_T}_\mu\avg{b_T^{-1}}_\mu\ge1
\label{eq:MBIcs}
\end{equation}
by Cauchy--Schwarz. Equality requires $b_T$ to be constant $\mu$-almost surely; a nondegenerate measure therefore breaks continuous self-duality on a nontrivial background. We state this construction with cautious priority claims because convex combinations of nonlinear theories have precedents \cite{LiuYang2025}.

\subsection{A nonconstant member: \texorpdfstring{$\mathcal L_\star$}{L star}}

To demonstrate constructively that $\cCX$ contains closed-form solutions beyond the power-law line, choose
\begin{equation}
a_\star(X)=\frac{1+2X}{4(1+X)},
\label{eq:astar}
\end{equation}
for which $1/4<a_\star<1/2$ on $X>0$. Equation \eqref{eq:gofa} integrates to
\begin{equation}
g_\star(X)=\left[\frac{2}{X(1+X)}\right]^{1/4}.
\label{eq:gstar}
\end{equation}
Moreover,
\begin{equation}
g_\star+2Xg_\star'
=\frac{g_\star}{2(1+X)}>0,
\end{equation}
so the theory lies strictly inside the causal region except at the relevant limiting configurations.

A convenient antiderivative is
\begin{equation}
\Phi(X)=\frac{4\,2^{1/4}}{3}X^{3/4}
{}_2F_1\!\left(\frac14,\frac34;\frac74;-X\right),
\label{eq:Phistar}
\end{equation}
with $\Phi'=g_\star$, and
\begin{equation}
{\displaystyle \mathcal L_\star(S,P)=\frac{T}{2}[\Phi(1)-\Phi(X)].}
\label{eq:Lstar}
\end{equation}
Its self-duality defect is
\begin{equation}
\Delta_{d,\star}=Xg_\star^2
=\sqrt{\frac{2X}{1+X}}.
\label{eq:Deltastar}
\end{equation}
For $0<X<1$, $\Delta_{d,\star}<1$. By contrast, every positive mixture of self-dual seeds satisfies an inequality of the form \eqref{eq:MBIcs}. Therefore $\mathcal L_\star$ does not belong to the positive convex hull of Born--Infeld seeds considered in Eq.~\eqref{eq:MBI}: it is a genuinely non-self-dual direction within $\cCX$.

\section{Magnetic projection and spectral bootstrap}
\label{sec:spectral}

The class $\cCX$ is a full two-invariant theory. Complementarily, there is a one-variable inverse problem on the purely magnetic axis. This projection is weaker than full Pleba\'nski causality, and we keep that distinction explicit.

On a magnetic background,
\begin{equation}
P=0,
\qquad
S=-\Fcal,
\qquad
\Fcal=\frac{B^2}{2}>0.
\end{equation}
Define
\begin{equation}
\Lmag(\Fcal):=-\mathcal L(-\Fcal,0),
\qquad
A(\Fcal):=\Lmag_{\Fcal}>0.
\label{eq:magdefs}
\end{equation}
For an effective one-invariant theory on this axis, the extraordinary characteristic factor is \cite{DeLorenci2000,Schellstede2016,FathiGuzmanVillanueva2026}
\begin{equation}
\kappa(\Fcal)=1+2\Fcal\frac{A'(\Fcal)}{A(\Fcal)}.
\label{eq:kappa}
\end{equation}
We impose
\begin{equation}
A>0,
\qquad
0<\kappa\le1.
\label{eq:magcausal}
\end{equation}
These are magnetic-axis conditions; by themselves they do not establish full causality for an arbitrary extension $\mathcal L(S,P)$.

\begin{theorem}[Inverse reconstruction from the magnetic cone]
Let $\kappa:(0,\infty)\to(0,1]$ be continuous and suppose that, for some $\Fcal_0>0$,
\begin{equation}
\int_0^{\Fcal_0}\frac{|1-\kappa(u)|}{u}\,\dd u<\infty.
\label{eq:kappaint}
\end{equation}
Then there exists a unique response $A\in C^1(0,\infty)$, continuous at the origin and Maxwell normalized by $A(0)=1$, with characteristic factor \eqref{eq:kappa}:
\begin{equation}
A(\Fcal)=\exp\!\left[
\frac12\int_0^{\Fcal}\frac{\kappa(u)-1}{u}\,\dd u
\right],
\label{eq:Afromkappa}
\end{equation}
\begin{equation}
\Lmag(\Fcal)=\int_0^{\Fcal}A(v)\,\dd v.
\label{eq:LfromA}
\end{equation}
\end{theorem}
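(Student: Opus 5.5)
The plan is to rewrite the definition of $\kappa$ in Eq.~\eqref{eq:kappa} as a first-order linear ODE for $\log A$. Since $A>0$ is required, $\kappa = 1+2\Fcal A'/A$ is equivalent on $(0,\infty)$ to
\begin{equation*}
\frac{\dd}{\dd\Fcal}\log A(\Fcal)=\frac{\kappa(\Fcal)-1}{2\Fcal}.
\end{equation*}
The right-hand side is continuous on $(0,\infty)$, because $\kappa$ is continuous there. Near the origin it is absolutely integrable by hypothesis~\eqref{eq:kappaint}. On any compact interval $[\Fcal_0,\Fcal]$ it is integrable by continuity. The improper integral $\int_0^{\Fcal}(\kappa(u)-1)/u\,\dd u$ therefore converges for every $\Fcal>0$. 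This makes the exponential in Eq.~\eqref{eq:Afromkappa} a well-defined, strictly positive function.

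\textbf{Existence.} First, differentiating Eq.~\eqref{eq:Afromkappa} by the fundamental theorem of calculus, which applies on $(0,\infty)$ because the integrand is continuous there, gives $A'=A(\kappa-1)/(2\Fcal)$. Hence $A\in C^1(0,\infty)$, and rearranging reproduces Eq.~\eqref{eq:kappa}. Second, absolute integrability near $0$ gives $\int_0^{\Fcal}(\kappa-1)/u\,\dd u\to0$ as $\Fcal\to0^+$, by dominated convergence or simply the tail of a convergent integral. So $A(\Fcal)\to1$, and $A$ extends continuously to the origin with $A(0)=1$.

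\textbf{Uniqueness.} Suppose $\tilde A$ is any other positive $C^1$ solution on $(0,\infty)$ that is continuous at $0$ with $\tilde A(0)=1$. Then $\tilde A/A$ is $C^1$ with derivative
\begin{equation*}
\frac{\tilde A'A-\tilde AA'}{A^2}=\frac{\tilde A}{A}\cdot\frac{\kappa-1}{2\Fcal}-\frac{\tilde A}{A}\cdot\frac{\kappa-1}{2\Fcal}=0.
\end{equation*}
Thus $\tilde A=cA$ on the connected set $(0,\infty)$. Letting $\Fcal\to0^+$ and using continuity of both functions at the origin forces $c=1$. The only delicate point is that positivity of $\tilde A$ is part of the admissible class, since $A>0$ in Eq.~\eqref{eq:magcausal}, so dividing by $\tilde A$ is legitimate. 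Even without this assumption the ratio argument works, because $A$ itself never vanishes.

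\textbf{The magnetic Lagrangian.} Eq.~\eqref{eq:LfromA} is the unique antiderivative of $A$ with $\Lmag(0)=0$. This normalization is inherited from $\mathcal L(0,0)=0$ through Eq.~\eqref{eq:magdefs}. The integral converges at $0$ because $A$ is continuous on $[0,\Fcal]$, and moreover $0<A\le1$ there since $\kappa\le1$ makes the exponent nonpositive.

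The main obstacle is the behaviour at the origin. Without condition~\eqref{eq:kappaint} the exponent could diverge, giving $A\to0$ or no limit, and the Maxwell normalization would fail. So the care goes into justifying the limit $\Fcal\to0^+$ through absolute integrability. Everything away from $0$ is routine.
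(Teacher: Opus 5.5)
Your proof is correct and follows the same route as the paper: you rewrite $\kappa=1+2\Fcal A'/A$ as $\dd\ln A/\dd\ln\Fcal=(\kappa-1)/2$ and use the integrability hypothesis to fix the constant through $A(0)=1$; your ratio argument for uniqueness and your explicit limit $\Fcal\to0^+$ just supply details the paper leaves implicit. One small correction to your closing remark, which does not affect the proof: since $\kappa\le1$ the integrand has a fixed sign, so without the hypothesis the exponent in Eq.~\eqref{eq:Afromkappa} can only diverge to $-\infty$, and every positive solution of the ODE then blows up as $\Fcal\to0^+$ rather than tending to zero or having no limit.
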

\begin{proof}
Equation \eqref{eq:kappa} is equivalent to
\begin{equation}
\frac{\dd\ln A}{\dd\ln\Fcal}=\frac{\kappa-1}{2}.
\end{equation}
Assumption \eqref{eq:kappaint} allows the integration constant to be fixed by $A(0)=1$; Eq.~\eqref{eq:LfromA} fixes $\Lmag(0)=0$.
\end{proof}

\begin{remark}
Local integrability of $\kappa$ alone would reconstruct a locally absolutely continuous response and make Eq.~\eqref{eq:kappa} valid almost everywhere. Continuity is assumed here because the subsequent characteristic and Einstein equations are used in the classical sense.
\end{remark}

\subsection{Positive spectral representation}

We now introduce an additional structural assumption: complete monotonicity,
\begin{equation}
(-1)^nA^{(n)}(\Fcal)\ge0,
\qquad n=0,1,2,\ldots.
\label{eq:CM}
\end{equation}
This condition is stronger than magnetic causality and is not implied by Eq.~\eqref{eq:magcausal}. Its role is structural: it represents the response as a positive superposition of exponential kernels and makes closure under positive mixing, moment estimates, and support restrictions available without assigning a microscopic interpretation to the spectral variable. The Bernstein--Widder theorem guarantees a unique positive probability measure $\rho$ on $[0,\infty)$ \cite{Widder1941,Schilling2012} such that
\begin{equation}
A(\Fcal)=\int_0^\infty\e^{-s\Fcal}\rho(\dd s),
\label{eq:LaplaceA}
\end{equation}
and
\begin{equation}
\Lmag(\Fcal)=
\int_0^\infty\frac{1-\e^{-s\Fcal}}{s}\rho(\dd s),
\label{eq:LaplaceL}
\end{equation}
where the integrand at $s=0$ is understood by continuity. We denote by $\cCrho$ the class of Maxwell-normalized magnetic responses admitting a positive representation of this form.

In all self-energy statements below we assume a nonvanishing magnetic charge,
\begin{equation}
Q_m\neq0,
\qquad
C=\frac{Q_m^2}{2}>0.
\label{eq:nonzeroQm}
\end{equation}

\begin{theorem}[Exact $-1/4$ moment criterion]
\label{thm:minusquarter}
For a magnetic monopole with
\begin{equation}
\Fcal(r)=\frac{C}{r^4},
\qquad
C=\frac{Q_m^2}{2},
\label{eq:Fmagnetic}
\end{equation}
the reduced self-energy
\begin{equation}
\Ered=\int_0^\infty r^2\Lmag(C/r^4)\,\dd r
\end{equation}
is finite if and only if
\begin{equation}
{\displaystyle 
\int_0^\infty s^{-1/4}\rho(\dd s)<\infty,
}
\label{eq:minusquarter}
\end{equation}
and, when it converges,
\begin{equation}
\Ered=
\frac{\Gamma(1/4)}{3}C^{3/4}
\int_0^\infty s^{-1/4}\rho(\dd s).
\label{eq:Eredmoment}
\end{equation}
\end{theorem}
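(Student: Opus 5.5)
Insert the spectral representation \eqref{eq:LaplaceL} into the definition of $\Ered$ and exchange the order of integration. Since the integrand $r^2(1-\e^{-sC/r^4})/s$ is nonnegative on $(0,\infty)\times[0,\infty)$, Tonelli's theorem applies with no domination hypothesis. It gives
\begin{equation*}
\Ered=\int_{[0,\infty)}\left[\int_0^\infty r^2\,\frac{1-\e^{-sC/r^4}}{s}\,\dd r\right]\rho(\dd s),
\end{equation*}
as an identity in $[0,\infty]$. Finiteness of $\Ered$ is therefore equivalent to integrability of the inner function $I(s)$ against $\rho$. This reduction is what makes the statement an exact equivalence rather than a sufficient condition.

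The next step is to evaluate $I(s)$ explicitly for $s>0$. Substitute $r=(sC)^{1/4}t$, so that
\begin{equation*}
I(s)=s^{-1}(sC)^{3/4}J,\qquad J=\int_0^\infty t^2\bigl(1-\e^{-t^{-4}}\bigr)\dd t .
\end{equation*}
To compute $J$, set $u=t^{-4}$, which gives $J=\tfrac14\int_0^\infty u^{-7/4}(1-\e^{-u})\,\dd u$. Integrating by parts, the boundary terms vanish because $u^{-3/4}(1-\e^{-u})\to0$ at both ends. The result is $J=\tfrac14\cdot\tfrac43\,\Gamma(1/4)=\Gamma(1/4)/3$, which is finite and positive. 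Hence $I(s)=\tfrac{\Gamma(1/4)}{3}C^{3/4}s^{-1/4}$ for every $s>0$.

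The only delicate point is the atom at $s=0$. There the integrand of \eqref{eq:LaplaceL} equals $\Fcal$ by continuity, so $I(0)=\int_0^\infty r^2\,C r^{-4}\,\dd r=+\infty$. This divergence comes from the Maxwell-like linear part at small $r$. Consistently, the convention $0^{-1/4}=+\infty$ in \eqref{eq:minusquarter} makes $\int s^{-1/4}\rho(\dd s)$ infinite whenever $\rho(\{0\})>0$. So both sides are infinite when $\rho$ charges the origin, and I will state this convention explicitly.

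With these pieces, $\int I\,\dd\rho=\tfrac{\Gamma(1/4)}{3}C^{3/4}\int s^{-1/4}\rho(\dd s)$ holds as an identity in $[0,\infty]$. Because $C>0$ by \eqref{eq:nonzeroQm}, the prefactor is a strictly positive finite constant. This yields both the equivalence \eqref{eq:minusquarter} and the formula \eqref{eq:Eredmoment}.

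I expect no real obstacle beyond justifying the interchange, which nonnegativity handles, and being careful about the $s=0$ atom and the boundary terms in $J$. It is also worth remarking that the $-1/4$ moment only probes small $s$. Since $\rho$ is a probability measure, $s^{-1/4}\le1$ for $s\ge1$, so the condition concerns the behaviour of $\rho$ near $s=0$, that is, the strong-field (small-$r$) tail of $A$.
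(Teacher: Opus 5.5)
Your proposal is correct and follows the paper's proof essentially verbatim: Tonelli's theorem applied to the nonnegative integrand, followed by a scaling substitution (the paper uses $y=sC/r^4$) to obtain $\int_0^\infty r^2(1-\e^{-a/r^4})\,\dd r=\tfrac{\Gamma(1/4)}{3}a^{3/4}$. Your explicit handling of a possible atom at $s=0$, via the convention $0^{-1/4}=+\infty$, is a worthwhile clarification that the paper leaves implicit; the paper only records $\rho(\{0\})=0$ later, as a consequence of the moment condition.
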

\begin{proof}
By Tonelli's theorem, using positivity,
\begin{equation}
\Ered=
\int_0^\infty\frac{\rho(\dd s)}{s}
\int_0^\infty r^2\bigl(1-\e^{-sC/r^4}\bigr)\,\dd r.
\end{equation}
The substitution $y=sC/r^4$ gives
\begin{equation}
\int_0^\infty r^2(1-\e^{-a/r^4})\,\dd r
=\frac{\Gamma(1/4)}{3}a^{3/4},
\end{equation}
which yields Eqs.~\eqref{eq:minusquarter} and \eqref{eq:Eredmoment}.
\end{proof}

\begin{theorem}[Absence of a positive spectral gap]
Suppose Eq.~\eqref{eq:LaplaceA} satisfies global magnetic causality \eqref{eq:magcausal} and has finite self-energy. Then
\begin{equation}
{\displaystyle 
\inf\supp\rho=0,
\qquad
\rho(\{0\})=0.
}
\label{eq:gapless}
\end{equation}
In other words, the measure cannot have a gap between the origin and the beginning of its support. In particular, no positive measure with finitely many atoms is admissible.
\end{theorem}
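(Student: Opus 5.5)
Two facts give the result: finite self-energy excludes an atom at the origin, and magnetic causality $\kappa>0$ excludes a gap.

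\emph{No atom at zero.} Theorem~\ref{thm:minusquarter} shows that finite self-energy is equivalent to $\int_0^\infty s^{-1/4}\rho(\dd s)<\infty$. The integrand $s^{-1/4}$ equals $+\infty$ at $s=0$, so the moment can only be finite if $\rho(\{0\})=0$. One can also argue directly. An atom $\rho(\{0\})=m>0$ contributes $m\Fcal$ to $\Lmag$ by Eq.~\eqref{eq:LaplaceL}, since $(1-\e^{-s\Fcal})/s\to\Fcal$. This adds $mC\int_0^\infty r^{-2}\,\dd r=\infty$ to $\Ered$.

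\emph{No gap.} Suppose instead that $s_0:=\inf\supp\rho>0$. By the previous step there is no atom at zero, so the support lies in $[s_0,\infty)$. We will show this contradicts $\kappa>0$ at large $\Fcal$. Write $\kappa=1+2\Fcal A'/A$, with
\[
-A'(\Fcal)=\int s\,\e^{-s\Fcal}\rho(\dd s)\ge s_0\int \e^{-s\Fcal}\rho(\dd s)=s_0A(\Fcal).
\]
Hence $\kappa(\Fcal)\le 1-2s_0\Fcal$, which is negative for $\Fcal>1/(2s_0)$. This violates global magnetic causality \eqref{eq:magcausal}, so $\inf\supp\rho=0$.

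Equivalently, $-A'/A$ is the mean of $s$ under the tilted probability measure $\e^{-s\Fcal}\rho/A$. Causality requires this mean to be at most $1/(2\Fcal)$, so the tilted mass must concentrate near $s=0$ as $\Fcal\to\infty$.

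\emph{Finitely many atoms.} A measure with finitely many atoms and $\rho(\{0\})=0$ has $\inf\supp\rho$ equal to its smallest atom, which is positive. This is a gap, excluded above. If instead one atom sits at zero, it is excluded by the first step.

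The only delicate point is differentiating $A$ under the integral sign to obtain $A'$. This holds for $\Fcal>0$ by dominated convergence, because $s\,\e^{-s\Fcal}$ is bounded uniformly on compact subsets of $(0,\infty)$ in $\Fcal$. With that settled, the argument is a direct combination of Theorem~\ref{thm:minusquarter} with a one-line estimate, and I expect no real obstacle.
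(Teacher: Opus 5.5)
Your proof is correct and follows essentially the same route as the paper. The bound $-A'/A\ge s_0$ gives $\kappa\le 1-2s_0\Fcal<0$ at strong field, which excludes a gap; an atom at the origin adds $w_0\Fcal$ to $\Lmag$ and makes $\int r^{-2}\,\dd r$ diverge; and a measure with finitely many atoms falls into one of these two cases. (Invoking the no-atom step before the gap argument is unnecessary, since $\inf\supp\rho=s_0>0$ already places the support in $[s_0,\infty)$, but it is harmless.)
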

\begin{proof}
Define the tilted measure
\begin{equation}
\rho_{\Fcal}(\dd s)=
\frac{\e^{-s\Fcal}\rho(\dd s)}{A(\Fcal)}.
\label{eq:tiltedrho}
\end{equation}
Then
\begin{equation}
\kappa(\Fcal)=1-2\Fcal\avg{s}_{\Fcal}.
\label{eq:kappameanrho}
\end{equation}
If $s_0=\inf\supp\rho>0$, then $\avg{s}_{\Fcal}\ge s_0$ and $\kappa\le1-2s_0\Fcal$, which becomes negative at sufficiently strong field. Hence $\inf\supp\rho=0$. If $\rho(\{0\})=w_0>0$, the atom at the origin contributes exactly $w_0\Fcal$ and, since all remaining contributions are nonnegative, $\Lmag(\Fcal)\ge w_0\Fcal$. With Eq.~\eqref{eq:Fmagnetic}, $r^2\Lmag(C/r^4)\ge w_0C/r^2$, so the self-energy diverges. A measure with finitely many atoms has either a strictly positive smallest atom or an atom at zero and is excluded in either case.
\end{proof}

\subsection{Exact lifting criterion to the \texorpdfstring{$X$}{X} class}

On the magnetic axis of $\cCX$,
\begin{equation}
X=1+\frac{2\Fcal}{T},
\qquad
A(\Fcal)=g\!\left(1+\frac{2\Fcal}{T}\right).
\end{equation}

\begin{proposition}[$X$-lifting criterion]
Let $A\in C^1([0,\infty))$ with $A(0)=1$. For a fixed scale $T>0$, the magnetic law $A(\Fcal)$ admits a completion of the form \eqref{eq:Lg} satisfying Theorem~\ref{thm:CXcausal} if and only if
\begin{equation}
A>0,
\qquad
A'\le0,
\qquad
A+(T+2\Fcal)A'\ge0.
\label{eq:Xlift}
\end{equation}
\end{proposition}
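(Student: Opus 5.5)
The plan is to reduce the statement to Theorem~\ref{thm:CXcausal} by a change of variables along the magnetic axis. On that axis $X=1+2\Fcal/T$, so the map $\Fcal\mapsto X$ is an affine bijection from $[0,\infty)$ onto $[1,\infty)$. For a member of the form \eqref{eq:Lg} the magnetic response is $A(\Fcal)=g(1+2\Fcal/T)$. Differentiating gives $A'(\Fcal)=(2/T)g'(X)$, and hence
\begin{equation*}
A+(T+2\Fcal)A'=g(X)+\frac{2(T+2\Fcal)}{T}g'(X)=g(X)+2Xg'(X).
\end{equation*}
This uses $T+2\Fcal=TX$. The three inequalities in \eqref{eq:Xlift} are therefore, pointwise for $X\ge1$, exactly the three inequalities \eqref{eq:gcausal} with $g=A\circ\Fcal(X)$.

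For necessity, suppose a causal completion $\mathcal L_g$ satisfying Theorem~\ref{thm:CXcausal} exists. Then \eqref{eq:gcausal} holds for all $X>0$, in particular on $X\ge1$. Restricting to the magnetic axis and translating with the identities above gives \eqref{eq:Xlift}. Since $g(1)=A(0)=1$, the completion is also Maxwell normalized.

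For sufficiency, start from $A$ satisfying \eqref{eq:Xlift}. Define $g$ on $[1,\infty)$ by $g(X)=A(T(X-1)/2)$, so that \eqref{eq:gcausal} holds there. The real work is to extend $g$ to $(0,1)$ as a $C^1$ function that keeps $g>0$, $g'\le0$ and $g+2Xg'\ge0$. This extension is the main obstacle, because the magnetic axis only sees $X\ge1$, while the domain $D_T$ contains $0<X<1$ (electric-type configurations with $V>0$).

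The extension is cleanest in terms of the index $a(X)=-Xg'/g$. On $X\ge1$, $a$ is continuous with $0\le a\le1/2$. Extend $a$ continuously to $(0,1)$ inside $[0,1/2]$, for instance by the constant value $a(1)$. Then reconstruct $g$ by \eqref{eq:gofa}, normalized at $X=1$, which is consistent with $g(1)=1$. The corollary on the functional parametrization of $\cCX$ then guarantees that the full $g$ satisfies \eqref{eq:gcausal} on $(0,\infty)$. Because $a$ is continuous across $X=1$, $g$ is $C^1$ there.

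Finally, define $\mathcal L_g$ by \eqref{eq:Lg}. By construction its magnetic restriction has $\Lmag_{\Fcal}=g(1+2\Fcal/T)=A(\Fcal)$, so it is a causal completion of the given magnetic law.

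A smaller point to check is the endpoint $\Fcal=0$, i.e.\ $X=1$, where $A$ is only $C^1$ up to the boundary. The continuity of $a$ at $X=1$ handles this.
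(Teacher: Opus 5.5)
Your proposal is correct and follows the paper's proof. Necessity uses the identity $(T+2\Fcal)A'=2Xg'$, and your extension of the index by the constant $a(1)$ on $(0,1)$ reproduces exactly the paper's branch $g_-(X)=X^{-a_1}$ with $a_1=-TA'(0)/2$, glued $C^1$ to $g_+(X)=A[T(X-1)/2]$ at $X=1$; your remark that any continuous extension of $a$ into $[0,1/2]$ works simply makes explicit that the completion is not unique.
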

\begin{proof}
Necessity follows from $A(\Fcal)=g(X)$ and $X=1+2\Fcal/T$, since $g'=TA'/2$ and $2Xg'=(T+2\Fcal)A'$. Conversely, for $X\ge1$ define
\[
g_+(X)=A\!\left[\frac{T}{2}(X-1)\right].
\]
Equation~\eqref{eq:Xlift} gives the three inequalities in \eqref{eq:gcausal}. Let $a_1=-TA'(0)/2$; evaluating \eqref{eq:Xlift} at $\Fcal=0$ gives $0\le a_1\le1/2$. For $0<X\le1$ set $g_-(X)=X^{-a_1}$. Then $g_->0$, $g_-'\le0$, and $g_-+2Xg_-'=X^{-a_1}(1-2a_1)\ge0$, while $g_-(1)=g_+(1)$ and $g_-'(1)=g_+'(1)$. The two branches therefore define a $C^1(0,\infty)$ function causal on all $X>0$ whose magnetic restriction is precisely $A$.
\end{proof}

The last inequality in Eq.~\eqref{eq:Xlift} is stronger than $A+2\Fcal A'>0$ when $A'\le0$, so $\cCrho$ is not contained in $\cCX$ in general. For
\begin{equation}
A=(1+2\Fcal/T)^{-\gamma},
\end{equation}
by contrast,
\begin{equation}
A+(T+2\Fcal)A'=(1-2\gamma)A,
\end{equation}
and every $0\le\gamma\le1/2$ lifts to the power subfamily \eqref{eq:Lpower}.

The lifting criterion also provides a direct bridge between the non-power member of $\cCX$ and the spectral class. On its magnetic axis, Eq.~\eqref{eq:gstar} gives
\begin{equation}
A_\star(\Fcal)=
\left(1+\frac{2\Fcal}{T}\right)^{-1/4}
\left(1+\frac{\Fcal}{T}\right)^{-1/4}.
\label{eq:Astarfactor}
\end{equation}
Each factor is the Laplace transform of a gamma law. Hence $A_\star$ is completely monotone and belongs simultaneously to $\cCX$ and $\cCrho$. More explicitly,
\begin{equation}
A_\star(\Fcal)=\int_0^\infty \e^{-s\Fcal}\rho_\star(s)\,\dd s,
\label{eq:Astarspectral}
\end{equation}
with
\begin{equation}
{\displaystyle 
\rho_\star(s)=
\frac{T^{1/2}}{2^{1/4}\sqrt\pi}
s^{-1/2}\e^{-Ts}
{}_1F_1\!\left(\frac14;\frac12;\frac{Ts}{2}\right).}
\label{eq:rhostar}
\end{equation}
Indeed, $\rho_\star$ is the convolution of gamma densities with common shape $1/4$ and rates $T/2$ and $T$. Its behavior $\rho_\star(s)=O(s^{-1/2})$ at the origin also makes the moment in Eq.~\eqref{eq:minusquarter} finite. Thus the two-invariant and spectral constructions have a nontrivial intersection beyond the power-law line.

\subsection{Generalized-gamma family}

The generalized-gamma distribution was introduced by Stacy \cite{Stacy1962}. We use it here as a spectral measure:
\begin{equation}
\rho_{\gamma,\eta,\beta}(s)=
\frac{\eta}{\beta^\gamma\Gamma(\gamma/\eta)}
 s^{\gamma-1}
\exp\!\left[-\left(\frac{s}{\beta}\right)^\eta\right],
\label{eq:GGdensity}
\end{equation}
with $\gamma,\eta,\beta>0$. Defining
\begin{equation}
x=\beta\Fcal,
\end{equation}
and changing variables according to $s=\beta t$, the response becomes
\begin{equation}
A_{\gamma,\eta}(x)=
\frac{\eta}{\Gamma(\gamma/\eta)}
\int_0^\infty t^{\gamma-1}\e^{-t^\eta-xt}\,\dd t.
\label{eq:AGG}
\end{equation}
By construction, $A_{\gamma,\eta}$ is completely monotone.

The moment in Theorem~\ref{thm:minusquarter} evaluates exactly to
\begin{equation}
\int_0^\infty s^{-1/4}\rho_{\gamma,\eta,\beta}(s)\,\dd s
=\beta^{-1/4}
\frac{\Gamma[(\gamma-1/4)/\eta]}{\Gamma(\gamma/\eta)}.
\label{eq:GGmoment}
\end{equation}
Therefore, the self-energy is finite if and only if
\begin{equation}
\gamma>\frac14.
\label{eq:gammaenergy}
\end{equation}

The point that cannot be left to a purely asymptotic argument is global magnetic causality. For this family it can be settled exactly.

\begin{theorem}[Global magnetic causality of the generalized-gamma family]
For the response \eqref{eq:AGG}, the factor \eqref{eq:kappa} satisfies
\begin{equation}
0<\kappa_{\gamma,\eta}(x)\le1,
\qquad
\forall\,x\ge0
\label{eq:kappaallx}
\end{equation}
if and only if
\begin{equation}
0<\gamma\le\frac12,
\label{eq:gammaCausal}
\end{equation}
independently of $\eta>0$.
\end{theorem}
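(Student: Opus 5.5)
We plan to work with the tilted-measure formula \eqref{eq:kappameanrho}. The upper bound is automatic: since $\rho\ge0$, we have $\avg{s}_{\Fcal}\ge0$, so $\kappa\le1$ for every $\gamma,\eta$. With $x=\beta\Fcal$, the condition $\kappa>0$ becomes
\begin{equation*}
2x\,\mu_1(x)<\mu_0(x),
\qquad
\mu_k(x):=\int_0^\infty t^{\gamma-1+k}\e^{-t^\eta-xt}\,\dd t .
\end{equation*}
The whole question is therefore whether $x\mu_1(x)<\mu_0(x)/2$ holds for all $x\ge0$.

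The key identity comes from integrating by parts in $\mu_1$. We write $t^{\gamma}\e^{-xt}\,x\,\dd t=-t^{\gamma}\,\dd(\e^{-xt})$ and keep the factor $\e^{-t^\eta}$. Boundary terms vanish for $\gamma>0$. This gives
\begin{equation*}
x\mu_1(x)=\int_0^\infty \e^{-xt}\,\frac{\dd}{\dd t}\!\left(t^{\gamma}\e^{-t^\eta}\right)\dd t
=\gamma\mu_0(x)-\eta\,\mu_\eta(x),
\end{equation*}
where $\mu_\eta(x)=\int_0^\infty t^{\gamma-1+\eta}\e^{-t^\eta-xt}\,\dd t>0$. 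Hence
\begin{equation*}
\kappa_{\gamma,\eta}(x)=1-2\gamma+2\eta\,\frac{\mu_\eta(x)}{\mu_0(x)} .
\end{equation*}
This exact formula already gives sufficiency. If $\gamma\le1/2$, then $1-2\gamma\ge0$ and the last term is strictly positive, so $\kappa>0$ for all $x$ and all $\eta>0$.

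For necessity, we take $\gamma>1/2$ and show that $\mu_\eta/\mu_0\to0$ as $x\to\infty$, so that $\kappa\to1-2\gamma<0$. The step needing care is this Laplace (Watson) asymptotic. As $x\to\infty$ the weight $\e^{-xt}$ concentrates at $t\sim1/x$, where $\e^{-t^\eta}\to1$. This gives $\mu_0\sim\Gamma(\gamma)x^{-\gamma}$ and $\mu_\eta\sim\Gamma(\gamma+\eta)x^{-\gamma-\eta}$, so the ratio is $O(x^{-\eta})\to0$.

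To make this rigorous uniformly in $\eta$, we substitute $t=u/x$ and apply dominated convergence. The integrand is bounded by $u^{\gamma-1}\e^{-u}$ or $u^{\gamma+\eta-1}\e^{-u}$, both integrable, so no case split on $\eta$ is needed. As an alternative to the limit argument, one can read off the same conclusion from the small-$s$ behaviour $\rho\sim s^{\gamma-1}$ and the tilted mean $\avg{s}_{\Fcal}\sim\gamma/\Fcal$.

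Combining the two directions gives the equivalence \eqref{eq:kappaallx}$\iff$\eqref{eq:gammaCausal}, and it holds independently of $\eta$ and $\beta$.
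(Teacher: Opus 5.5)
Your proposal is correct and follows essentially the paper's proof. Both use the integration-by-parts identity $x\langle t\rangle_x+\eta\langle t^\eta\rangle_x=\gamma$, which gives $\kappa=1-2\gamma+2\eta\langle t^\eta\rangle_x$ and hence sufficiency. Both then use the rescaling $t=u/x$ with dominated convergence for necessity; the only cosmetic difference is that you show $\langle t^\eta\rangle_x\to0$, whereas the paper shows the equivalent statement $x\langle t\rangle_x\to\gamma$.
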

\begin{proof}
Define
\begin{equation}
Z(x)=\int_0^\infty t^{\gamma-1}\e^{-t^\eta-xt}\,\dd t
\end{equation}
and the tilted probability measure
\begin{equation}
\dd\pi_x(t)=\frac{t^{\gamma-1}\e^{-t^\eta-xt}}{Z(x)}\,\dd t.
\label{eq:pix}
\end{equation}
Since $A'/A=-\avg{t}_x$,
\begin{equation}
\kappa=1-2x\avg{t}_x.
\label{eq:kappaMean}
\end{equation}
Integration by parts gives
\begin{equation}
0=\int_0^\infty
\frac{\dd}{\dd t}
\left[t^\gamma\e^{-t^\eta-xt}\right]\dd t,
\end{equation}
which yields the identity
\begin{equation}
x\avg{t}_x+\eta\avg{t^\eta}_x=\gamma.
\label{eq:virialGG}
\end{equation}
Therefore,
\begin{equation}
\kappa_{\gamma,\eta}(x)
=1-2\gamma+2\eta\avg{t^\eta}_x.
\label{eq:kappaGGexact}
\end{equation}
Since $A'\le0$, Eq.~\eqref{eq:kappaMean} gives $\kappa\le1$. If $0<\gamma<1/2$, Eq.~\eqref{eq:kappaGGexact} implies $\kappa>1-2\gamma>0$. If $\gamma=1/2$, strict positivity of $\avg{t^\eta}_x$ for every finite $x$ gives $\kappa>0$, with the marginal limit $\kappa\to0$ as $x\to\infty$. To prove necessity for $\gamma>1/2$, rescale $t=y/x$ in Eq.~\eqref{eq:AGG}; by dominated convergence, equivalently by Watson's lemma,
\begin{equation}
x\avg{t}_x\longrightarrow\gamma,
\qquad x\to\infty,
\end{equation}
so $\kappa\to1-2\gamma<0$.
\end{proof}

Magnetic causality does not by itself imply a causal two-invariant completion. For generalized-gamma spectra the lifting question can nevertheless be answered within $\cCX$. Let
\begin{equation}
m_{\gamma,\eta}(x):=-\frac{A'_{\gamma,\eta}(x)}{A_{\gamma,\eta}(x)}
=\avg{t}_x>0
\end{equation}
and define
\begin{equation}
R_{\gamma,\eta}(x):=
\frac{1}{m_{\gamma,\eta}(x)}-2x,
\qquad
\tau_{\gamma,\eta}:=\inf_{x\ge0}R_{\gamma,\eta}(x).
\label{eq:taulift}
\end{equation}

\begin{theorem}[Causal completion of generalized-gamma responses]
\label{thm:GGlift}
Let $0<\gamma\le1/2$, $\eta>0$, and set $x=\beta\Fcal$. The magnetic response $A_{\gamma,\eta}(\beta\Fcal)$ admits a completion in $\cCX$ with scale $T>0$ if and only if
\begin{equation}
0<\beta T\le\tau_{\gamma,\eta}.
\label{eq:GGliftcondition}
\end{equation}
Moreover,
\begin{enumerate}[label=(\roman*)]
\item $\tau_{\gamma,\eta}>0$ for $0<\gamma<1/2$ and every $\eta>0$;
\item $\tau_{1/2,\eta}>0$ for $0<\eta\le1$, with $\tau_{1/2,1}=2$;
\item $\tau_{1/2,\eta}=0$ for $\eta>1$, so no completion of the form \eqref{eq:Lg} exists at this boundary.
\end{enumerate}
\end{theorem}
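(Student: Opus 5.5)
The plan is to reduce everything to the $X$-lifting criterion (Proposition on $X$-lifting) applied to $A(\Fcal)=A_{\gamma,\eta}(\beta\Fcal)$. First I would check regularity: $A\in C^1([0,\infty))$ with $A(0)=1$, because $A'(0)=-\beta\avg{t}_0$ is finite thanks to the factor $\e^{-t^\eta}$. The first two conditions in Eq.~\eqref{eq:Xlift}, $A>0$ and $A'\le0$, are automatic from complete monotonicity. For the third, I write $A'(\Fcal)=-\beta\, m_{\gamma,\eta}(x)A$ with $x=\beta\Fcal$. Then
\begin{equation*}
A+(T+2\Fcal)A'=A\bigl[1-(\beta T+2x)\,m_{\gamma,\eta}(x)\bigr].
\end{equation*}
Since $A>0$ and $m>0$, this is nonnegative for all $\Fcal\ge0$ if and only if $\beta T\le 1/m(x)-2x=R_{\gamma,\eta}(x)$ for every $x\ge0$. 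That is exactly $\beta T\le\tau_{\gamma,\eta}$, and together with $T>0$ it gives Eq.~\eqref{eq:GGliftcondition}.

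For the qualitative items I would use the identity $R=\kappa/m$, which follows from $\kappa=1-2x\avg{t}_x$ in Eq.~\eqref{eq:kappaMean}. I would also use $m'(x)=-\operatorname{Var}_x(t)\le0$, so that $m(x)\le m(0)<\infty$.

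\textbf{Item (i).} For $0<\gamma<1/2$, Eq.~\eqref{eq:kappaGGexact} gives $\kappa\ge1-2\gamma$. Hence $R(x)\ge(1-2\gamma)/m(0)>0$ uniformly in $x$, and so $\tau_{\gamma,\eta}>0$.

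\textbf{Items (ii) and (iii).} For $\gamma=1/2$, Eq.~\eqref{eq:kappaGGexact} gives
\begin{equation*}
R(x)=\frac{2\eta\avg{t^\eta}_x}{\avg{t}_x}.
\end{equation*}
This is continuous and strictly positive at every finite $x$. So $\tau_{1/2,\eta}$ is governed by the behaviour as $x\to\infty$. Rescaling $t=y/x$ as in the proof of global magnetic causality, the tilted law of $y$ converges to a $\mathrm{Gamma}(1/2,1)$ law. This gives
\begin{equation*}
\avg{t}_x\sim\frac{1}{2x},\qquad
\avg{t^\eta}_x\sim x^{-\eta}\,\frac{\Gamma(1/2+\eta)}{\Gamma(1/2)},
\end{equation*}
and therefore $R(x)\sim c_\eta\,x^{1-\eta}$ with $c_\eta>0$. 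For $\eta<1$, $R\to\infty$. For $\eta=1$, $R$ tends to a positive constant. In both cases a positive continuous function with a positive (or infinite) limit has positive infimum, which gives (ii). For $\eta>1$, $R\to0$ while $R>0$ everywhere, so $\tau_{1/2,\eta}=0$. The lifting condition then has no solution with $T>0$, which is (iii). For the value $\tau_{1/2,1}$ I would compute directly: $A_{1/2,1}(x)=(1+x)^{-1/2}$, so $m=1/[2(1+x)]$ and $R\equiv2$.

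The main obstacle is justifying the large-$x$ asymptotics of the moments $\avg{t^\eta}_x$ and $\avg{t}_x$. The approach is dominated convergence after the rescaling: bound $y^{\eta}y^{-1/2}\e^{-y}\e^{-(y/x)^\eta}$ by $y^{\eta-1/2}\e^{-y}$, which is integrable. The same bound handles both the numerator and the normalization $Z$, exactly as in Watson's lemma.
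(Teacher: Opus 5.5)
Your proof is correct and has the same skeleton as the paper's. The $X$-lifting criterion reduces the question to $1-(\beta T+2x)m_{\gamma,\eta}(x)\ge0$, i.e.\ $\beta T\le\inf_x R_{\gamma,\eta}$. Since $R=\kappa/m$ is positive at every finite $x$, only the large-$x$ behaviour can drive the infimum to zero. Where you differ is in how you control that infimum.

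\textbf{Item (i).} The paper argues asymptotically: a two-term expansion of $m$ gives $R=(1/\gamma-2)x+o(x)\to\infty$, and continuity on compacts finishes the argument. You instead combine $\kappa\ge1-2\gamma$ with $m'=-\operatorname{Var}_x(t)\le0$. This yields the uniform bound $\tau_{\gamma,\eta}\ge(1-2\gamma)\Gamma(\gamma/\eta)/\Gamma((\gamma+1)/\eta)$, which needs no asymptotics and is quantitative. At $\eta=1$ it gives $1/\gamma-2$, against the exact value $1/\gamma$. Unlike the paper's route, it does not show that $R$ grows linearly, but that is not needed.

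\textbf{The boundary $\gamma=1/2$.} The paper computes $1/m-2x$. There the leading $2x$ cancels, so the subleading term of $m$ and its remainder must be controlled. Your exact identity $R=2\eta\avg{t^\eta}_x/\avg{t}_x$ removes that cancellation. You then need only the leading asymptotics of two moments, each obtained by a single dominated-convergence step. The result $R\sim4\eta\,\Gamma(\eta+1/2)\Gamma(1/2)^{-1}x^{1-\eta}$ agrees with Eq.~\eqref{eq:Rboundary}.

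Your explicit check that $A_{\gamma,\eta}(\beta\,\cdot)\in C^1([0,\infty))$ with $A(0)=1$, which the lifting proposition requires, covers a hypothesis the paper leaves implicit. Likewise, your direct computation $R_{1/2,1}\equiv2$ justifies a value the paper only asserts.
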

\begin{proof}
Because $A'_{\gamma,\eta}/A_{\gamma,\eta}=-m_{\gamma,\eta}$, the last inequality in Eq.~\eqref{eq:Xlift} becomes
\begin{equation}
1-(\beta T+2x)m_{\gamma,\eta}(x)\ge0,
\end{equation}
which is equivalent to Eq.~\eqref{eq:GGliftcondition}. Global magnetic causality gives
\begin{equation}
R_{\gamma,\eta}(x)=
\frac{\kappa_{\gamma,\eta}(x)}{m_{\gamma,\eta}(x)}>0
\end{equation}
at every finite $x$. Expanding the defining Laplace integral at large $x$ yields
\begin{equation}
m_{\gamma,\eta}(x)=
\frac{\gamma}{x}
-\eta\frac{\Gamma(\gamma+\eta)}{\Gamma(\gamma)}
x^{-\eta-1}+o(x^{-\eta-1}).
\label{eq:mGGasym}
\end{equation}
For $\gamma<1/2$,
\begin{equation}
R_{\gamma,\eta}(x)=
\left(\frac1\gamma-2\right)x+o(x)\longrightarrow\infty.
\end{equation}
For $\gamma=1/2$,
\begin{equation}
R_{1/2,\eta}(x)\sim
4\eta\frac{\Gamma(\eta+1/2)}{\Gamma(1/2)}x^{1-\eta}.
\label{eq:Rboundary}
\end{equation}
It tends to infinity for $\eta<1$, equals $2$ identically for $\eta=1$, and tends to zero for $\eta>1$. Since $R$ is continuous and strictly positive on every compact interval, these limits prove all three statements.
\end{proof}

\begin{corollary}[Exact physical window]
Within the generalized-gamma family, global magnetic causality and finite self-energy are simultaneously equivalent to
\begin{equation}
{\displaystyle 
\frac14<\gamma\le\frac12,
\qquad
\eta>0,
\qquad
\beta>0.
}
\label{eq:GGwindow}
\end{equation}
\end{corollary}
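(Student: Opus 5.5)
The corollary follows by intersecting the two exact conditions already established for the generalized-gamma family. In both, $\beta$ enters only through the rescaling $x=\beta\Fcal$, and $\eta$ is free.

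\textbf{Causality.} Since $x=\beta\Fcal$ with $\beta>0$ maps $\Fcal\in[0,\infty)$ bijectively onto $x\in[0,\infty)$, we first check that the characteristic factor is scale invariant. Using Eq.~\eqref{eq:kappa} with $A(\Fcal)=A_{\gamma,\eta}(\beta\Fcal)$, one has $2\Fcal A'(\Fcal)/A(\Fcal)=2x A_{\gamma,\eta}'(x)/A_{\gamma,\eta}(x)$. Hence $\kappa(\Fcal)=\kappa_{\gamma,\eta}(\beta\Fcal)$. Global magnetic causality \eqref{eq:magcausal} for all $\Fcal>0$ is then equivalent to Eq.~\eqref{eq:kappaallx}. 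By the theorem on global magnetic causality of the generalized-gamma family, this holds if and only if $0<\gamma\le1/2$, for every $\eta>0$ and every $\beta>0$. Positivity $A>0$ is automatic from the positive density.

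\textbf{Finite self-energy.} We invoke Theorem~\ref{thm:minusquarter}, with $Q_m\neq0$ as in Eq.~\eqref{eq:nonzeroQm}. Finiteness is equivalent to finiteness of the $-1/4$ moment \eqref{eq:minusquarter}, which is evaluated by Eq.~\eqref{eq:GGmoment}. To justify that formula as an if-and-only-if statement, substitute $u=(s/\beta)^\eta$. This reduces the moment to $\beta^{-1/4}\Gamma(\gamma/\eta)^{-1}\int_0^\infty u^{(\gamma-1/4)/\eta-1}\e^{-u}\,\dd u$. The integral converges at infinity always, and at $u=0$ exactly when $(\gamma-1/4)/\eta>0$, i.e.\ when $\gamma>1/4$ since $\eta>0$. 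Otherwise the integrand is nonintegrable at the origin and the moment is $+\infty$. This gives Eq.~\eqref{eq:gammaenergy}, again independently of $\eta$ and $\beta$.

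\textbf{Intersection.} Combining the two equivalences gives $1/4<\gamma\le1/2$ with $\eta,\beta>0$ arbitrary, which is Eq.~\eqref{eq:GGwindow}.

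The only step needing care is the divergence direction of the moment criterion. It is handled by the positivity and Tonelli argument in Theorem~\ref{thm:minusquarter}, which already yields an if-and-only-if and needs no domination assumption. The remaining work is the bookkeeping of the scale invariance above.
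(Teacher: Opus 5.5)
Your proposal is correct and follows the paper's route: the corollary is the intersection of the finite-energy condition \eqref{eq:gammaenergy}, obtained from Theorem~\ref{thm:minusquarter} via the moment \eqref{eq:GGmoment}, with the $\eta$-independent causality window $0<\gamma\le1/2$ from the generalized-gamma causality theorem. Your check that $\kappa$ depends on $\Fcal$ only through $x=\beta\Fcal$, and your substitution showing that the $-1/4$ moment diverges exactly when $\gamma\le1/4$, make explicit the bookkeeping the paper leaves implicit.
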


Condition \eqref{eq:GGwindow} cleanly separates the roles of the parameters: $\gamma$ fixes the physical boundaries, whereas $\eta$ deforms the measure and the finite-field interpolation without moving them. The endpoints $1/4$ and $1/2$ already appear on the branch $\eta=1$ and in Ref.~\cite{FathiGuzmanVillanueva2026}; the new result is their exact validity throughout the generalized-gamma family for all $\eta>0$. Appendix~\ref{app:spectralproofs} connects this property with the universality of regularly varying laws \cite{BinghamGoldieTeugels1987,RussoTownsendBH2026,FathiGuzmanVillanueva2026}.

Theorem~\ref{thm:GGlift} sharpens the scope of this statement. The full window \eqref{eq:GGwindow} is magnetically causal, but only the subregions specified there are guaranteed to admit a causal two-invariant completion within the affine-separable class $\cCX$.

\subsection{Mellin--Barnes representation and hierarchy of special functions}

Using the classical Mellin--Barnes inversion framework \cite{Barnes1908,ParisKaminski2001,DubovykGluzaSomogyi2022} for $\e^{-xt}$ with $0<c<\gamma$,
\begin{equation}
\e^{-xt}=\frac{1}{2\pi i}
\int_{c-i\infty}^{c+i\infty}\Gamma(z)(xt)^{-z}\,\dd z,
\end{equation}
Eq.~\eqref{eq:AGG} gives
\begin{equation}
A_{\gamma,\eta}(x)=
\frac{1}{\Gamma(\gamma/\eta)}
\frac{1}{2\pi i}
\int_{c-i\infty}^{c+i\infty}
\Gamma(z)\Gamma\!\left(\frac{\gamma-z}{\eta}\right)
 x^{-z}\,\dd z.
\label{eq:AMB}
\end{equation}
This is an exact representation for every $\eta>0$ in its Mellin--Barnes domain. Mellin--Barnes integrals with this gamma-factor structure are closely connected with the method of brackets, originally developed as a heuristic procedure for definite integrals and later applied directly to Mellin--Barnes and inverse Mellin representations \cite{GonzalezMoll2010,GonzalezKondrashukMollRecabarren2022,AnanthanarayanBanikFriotPathak2023}. We do not use the bracket rules in the present derivation: Eq.~\eqref{eq:AMB} follows directly from Mellin inversion. The comparison is nevertheless relevant because it provides an alternative route from the same contour-integral structure to hypergeometric series and clarifies the relation between bracket expansions and residue-based Mellin--Barnes evaluation. With the standard Fox--$H$ convention and its classical analytic-continuation theory \cite{Fox1961,Braaksma1964,MathaiSaxenaHaubold2010,KilbasSaigo1998,Giraldi2025},
\begin{equation}
A_{\gamma,\eta}(x)=
\frac{1}{\Gamma(\gamma/\eta)}
H_{1,1}^{1,1}\!\left[
 x\,\middle|\,
\begin{matrix}
(1-\gamma/\eta,1/\eta)\\
(0,1)
\end{matrix}
\right].
\label{eq:AFoxH}
\end{equation}
Integrating from the vacuum and taking $0<c<\min(\gamma,1)$,
\begin{equation}
\Lmag_{\gamma,\eta,\beta}(\Fcal)=
\frac{x}{\beta\Gamma(\gamma/\eta)}
\frac{1}{2\pi i}\int
\frac{\Gamma(z)\Gamma[(\gamma-z)/\eta]}{1-z}
 x^{-z}\,\dd z,
\label{eq:LMB}
\end{equation}
which is equivalent to
\begin{equation}
\Lmag_{\gamma,\eta,\beta}(\Fcal)=
\frac{x}{\beta\Gamma(\gamma/\eta)}
H_{2,2}^{1,2}\!\left[
 x\,\middle|\,
\begin{matrix}
(1-\gamma/\eta,1/\eta),(0,1)\\
(0,1),(-1,1)
\end{matrix}
\right].
\label{eq:LFoxH}
\end{equation}

The moment expansion is
\begin{equation}
A_{\gamma,\eta}(x)
\sim\sum_{n=0}^\infty
\frac{(-x)^n}{n!}
\frac{\Gamma[(\gamma+n)/\eta]}{\Gamma(\gamma/\eta)}.
\label{eq:momentseries}
\end{equation}
For $\eta>1$ the series converges for all $x$ and defines a generalized Wright function in the classical Wright hierarchy \cite{Wright1940}; for $\eta=1$ it has radius $|x|<1$ and its analytic continuation reproduces the power kernel; for $0<\eta<1$ it is, in general, a divergent asymptotic expansion away from the origin. Representation \eqref{eq:AMB} remains the natural global form \cite{GorenfloLuchkoMainardi2007,BeghinCristofaroDaSilva2023,Mehrez2017,ParisKaminski2001}.

When $\eta=m\in\mathbb N$, the series can be regrouped into residue classes. Define
\begin{equation}
a_j=\frac{\gamma+j}{m},
\qquad
Z_m=\frac{(-1)^m x^m}{m^m},
\label{eq:ajZm}
\end{equation}
and
\begin{equation}
B_{m,j}=\left\{\frac{j+r}{m}:r=1,\ldots,m,\ r\ne m-j\right\},
\label{eq:Bmj}
\end{equation}
where the element equal to one is omitted. Then
\begin{equation}
A_{\gamma,m}(x)=
\sum_{j=0}^{m-1}
\frac{(-x)^j}{j!}
\frac{\Gamma(a_j)}{\Gamma(\gamma/m)}
{}_1F_{m-1}(a_j;B_{m,j};Z_m).
\label{eq:Ainteger}
\end{equation}
With $\lambda_j=(j+1)/m$,
\begin{equation}
\begin{aligned}
\Lmag_{\gamma,m,\beta}(\Fcal)
={}&\frac{1}{\beta}
\sum_{j=0}^{m-1}
\frac{(-1)^j x^{j+1}}{(j+1)j!}
\\
&\times
\frac{\Gamma(a_j)}{\Gamma(\gamma/m)}
{}_2F_m\!\left(
\begin{matrix}
a_j,\lambda_j\\
B_{m,j},\lambda_j+1
\end{matrix};
Z_m
\right).
\end{aligned}
\label{eq:Linteger}
\end{equation}
Coincident parameters in the upper and lower lists are to be canceled.

The first branches illustrate the hierarchy. For $\eta=1$,
\begin{equation}
\begin{aligned}
A_{\gamma,1}(x)
&=(1+x)^{-\gamma},
\\[2pt]
\Lmag_{\gamma,1,\beta}(\Fcal)
&=\frac{(1+x)^{1-\gamma}-1}
{\beta(1-\gamma)}.
\end{aligned}
\label{eq:eta1}
\end{equation}
This branch is known and contains the power kernels used in Ref.~\cite{FathiGuzmanVillanueva2026}.

For $\eta=2$, let
\begin{equation}
z=\frac{x^2}{4},
\qquad
a=\frac\gamma2,
\qquad
b=\frac{\gamma+1}{2},
\qquad
R_\gamma=\frac{\Gamma(b)}{\Gamma(a)}.
\end{equation}
The response can be written as
\begin{equation}
A_{\gamma,2}(x)=
2^{1-\gamma/2}\frac{\Gamma(\gamma)}{\Gamma(\gamma/2)}
\e^{x^2/8}D_{-\gamma}\!\left(\frac{x}{\sqrt2}\right),
\label{eq:Aeta2D}
\end{equation}
or, equivalently,
\begin{equation}
A_{\gamma,2}(x)=
{}_1F_1\!\left(a;\frac12;z\right)
-xR_\gamma\,{}_1F_1\!\left(b;\frac32;z\right).
\label{eq:Aeta2M}
\end{equation}
A closed antiderivative is
\begin{equation}
\begin{aligned}
\Lmag_{\gamma,2,\beta}
=\frac1\beta\bigg[&
 x\,{}_1F_1\!\left(a;\frac32;z\right) \\
&-\frac{R_\gamma x^2}{2}
 {}_2F_2\!\left(b,1;\frac32,2;z\right)
\bigg].
\end{aligned}
\label{eq:Leta2}
\end{equation}

For $\eta=3$, define
\begin{equation}
Z=-\frac{x^3}{27},
\quad
R_1=\frac{\Gamma[(\gamma+1)/3]}{\Gamma(\gamma/3)},
\quad
R_2=\frac{\Gamma[(\gamma+2)/3]}{\Gamma(\gamma/3)}.
\label{eq:eta3defs}
\end{equation}

Then
\begin{widetext}

\begin{equation}
A_{\gamma,3}(x)
=
{}_1F_2\!\left(
\frac{\gamma}{3};
\frac13,\frac23;
Z
\right)
-
xR_1\,
{}_1F_2\!\left(
\frac{\gamma+1}{3};
\frac23,\frac43;
Z
\right)
+
\frac{x^2R_2}{2}\,
{}_1F_2\!\left(
\frac{\gamma+2}{3};
\frac43,\frac53;
Z
\right).
\label{eq:Aeta3}
\end{equation}

\begin{equation}
\Lmag_{\gamma,3,\beta}
=
\frac{1}{\beta}
\left[
x\,{}_1F_2\!\left(
\frac{\gamma}{3};
\frac23,\frac43;
Z
\right)
-
\frac{x^2R_1}{2}\,
{}_1F_2\!\left(
\frac{\gamma+1}{3};
\frac43,\frac53;
Z
\right)
+
\frac{x^3R_2}{6}\,
{}_2F_3\!\left(
\frac{\gamma+2}{3},1;
\frac43,\frac53,2;
Z
\right)
\right].
\label{eq:Leta3}
\end{equation}

\end{widetext}
For $\eta\in\mathbb Q_+$ the Mellin--Barnes slopes are commensurate and the Gauss multiplication formula reduces Eq.~\eqref{eq:AMB} to a finite-order Meijer--$G$ function \cite{Meijer1946,Erdelyi1953,DLMF}. For arbitrary positive $\eta$, the Fox--$H$ function is the natural closed representation \cite{Fox1961,Braaksma1964,MathaiSaxenaHaubold2010}.

\section{Coupling to Einstein gravity and the general magnetic family}
\label{sec:einstein}

Once the matter sector has been reconstructed, we introduce gravity. On a purely magnetic background,
\begin{equation}
P=0,
\qquad
S=-u,
\qquad
u(r)=\frac{Q_m^2}{2r^4}\equiv\frac{C}{r^4}.
\label{eq:ur}
\end{equation}
To avoid typographical ambiguity, in this section we denote by $u$ the same positive argument previously denoted by $\Fcal$. Define
\begin{equation}
A_m(u):=\mathcal L_S(-u,0)>0,
\qquad
H_m(u):=-\mathcal L(-u,0).
\label{eq:AmHm}
\end{equation}
Since $H_m(0)=0$ and $H_m'(u)=A_m(u)$,
\begin{equation}
H_m(u)=\int_0^u A_m(v)\,\dd v.
\label{eq:HfromA}
\end{equation}

\begin{theorem}[Universal magnetic self-energy criterion]
For a nonvanishing magnetic charge, $Q_m\neq0$ and $C=Q_m^2/2>0$, the total electromagnetic energy
\begin{equation}
\Eem=4\pi\int_0^\infty r^2H_m(C/r^4)\,\dd r
\label{eq:Eemgeneral}
\end{equation}
is finite if and only if
\begin{equation}
{\displaystyle 
\int_0^\infty u^{-3/4}A_m(u)\,\dd u<\infty,
}
\label{eq:universalenergycriterion}
\end{equation}
and, when it converges,
\begin{equation}
\Eem=\frac{4\pi}{3}C^{3/4}
\int_0^\infty u^{-3/4}A_m(u)\,\dd u.
\label{eq:universalenergy}
\end{equation}
\end{theorem}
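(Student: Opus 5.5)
The plan is to convert the radial integral into an integral over the field variable $u$ and then use Tonelli's theorem on a double integral with nonnegative integrand. Because $A_m>0$, Eq.~\eqref{eq:HfromA} gives
\begin{equation}
H_m(C/r^4)=\int_0^{C/r^4}A_m(v)\,\dd v=\int_0^\infty \mathbf 1_{\{v<C/r^4\}}A_m(v)\,\dd v .
\end{equation}
Substituting this into Eq.~\eqref{eq:Eemgeneral} produces a double integral over $(r,v)\in(0,\infty)^2$ with nonnegative integrand. Tonelli's theorem then allows the order of integration to be exchanged unconditionally, with both sides finite or infinite together. This symmetry is exactly what delivers the ``if and only if'' without a separate argument.

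After the exchange, for fixed $v>0$ the constraint $v<C/r^4$ is equivalent to $r<(C/v)^{1/4}$. The inner radial integral is therefore
\begin{equation}
\int_0^{(C/v)^{1/4}}r^2\,\dd r=\frac13\,C^{3/4}v^{-3/4}.
\end{equation}
Combining the pieces gives the identity
\begin{equation}
\Eem=\frac{4\pi}{3}C^{3/4}\int_0^\infty v^{-3/4}A_m(v)\,\dd v,
\end{equation}
valid in $[0,\infty]$. Since $C>0$ by the nonvanishing-charge hypothesis \eqref{eq:nonzeroQm}, the prefactor is finite and strictly positive. Hence $\Eem<\infty$ exactly when the integral in \eqref{eq:universalenergycriterion} converges, and in that case Eq.~\eqref{eq:universalenergy} holds.

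There is no genuine obstacle here. The only points needing care are measurability and nonnegativity of the integrand, both of which follow from continuity and positivity of $A_m$, and the strict positivity of $C$. An equivalent route is to substitute $u=C/r^4$ directly, which turns Eq.~\eqref{eq:Eemgeneral} into $\pi C^{3/4}\int_0^\infty u^{-7/4}H_m(u)\,\dd u$. One would then integrate by parts, but that requires controlling the boundary terms $u^{-3/4}H_m(u)$ at $0$ and at $\infty$. The Tonelli argument avoids this, so I would use it.

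As a consistency check, when $A_m$ has the Laplace representation \eqref{eq:LaplaceA} the result reduces to Theorem~\ref{thm:minusquarter}. Indeed, $\int_0^\infty u^{-3/4}e^{-su}\,\dd u=\Gamma(1/4)s^{-1/4}$, and the factor $4\pi$ accounts for the difference between $\Eem$ and $\Ered$.
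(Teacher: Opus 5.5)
Your proposal is correct and follows essentially the same route as the paper: substitute $H_m(C/r^4)=\int_0^{C/r^4}A_m(v)\,\dd v$, exchange the order of integration by Tonelli's theorem using positivity, and evaluate the radial integral over $r<(C/v)^{1/4}$ to obtain $C^{3/4}v^{-3/4}/3$. You also note that the resulting identity holds in $[0,\infty]$, which makes explicit why the finiteness equivalence follows.
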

\begin{proof}
Insert Eq.~\eqref{eq:HfromA} into Eq.~\eqref{eq:Eemgeneral} and apply Tonelli's theorem. For fixed $u>0$, the inequality $u<C/r^4$ is equivalent to $r<(C/u)^{1/4}$. The radial integral of $r^2$ then gives $(C/u)^{3/4}/3$.
\end{proof}

If $A_m(u)\sim c u^{-\alpha}\ell(u)$ with $\ell$ slowly varying and, in addition, $uA_m'(u)/A_m(u)\to-\alpha$, finite self-energy generically requires $\alpha>1/4$, whereas magnetic causality $A_m+2uA_m'>0$ imposes $\alpha\le1/2$. The window $1/4<\alpha\le1/2$ therefore reappears as a universality class within this subclass; slowly varying factors can modify the lower boundary when they alter convergence of the exact self-energy criterion \cite{BinghamGoldieTeugels1987,RussoTownsendBH2026,FathiGuzmanVillanueva2026}.

\subsection{Einstein equations without a target geometry}

We begin with the most general static, spherically symmetric metric in areal-radius coordinates,
\begin{equation}
\dd s^2=-\e^{2\delta(r)}N(r)\dd t^2
+\frac{\dd r^2}{N(r)}+r^2\dd\Omega_2^2.
\label{eq:metricgeneral}
\end{equation}
For a magnetic monopole, $T^t{}_t=T^r{}_r$, and the Einstein equations imply $N\delta'=0$. On every connected static region, $\delta$ is constant and can be absorbed into a rescaling of $t$. We write
\begin{equation}
f(r)=N(r)=1-\frac{2Gm(r)}{r}.
\label{eq:fdef}
\end{equation}
The radial equation is
\begin{equation}
m'(r)=4\pi r^2H_m(C/r^4).
\label{eq:massEq}
\end{equation}
Therefore,
\begin{equation}
m(r)=M_0+4\pi\int_0^r x^2H_m(C/x^4)\,\dd x,
\label{eq:massQuadrature}
\end{equation}
where $M_0=m(0^+)$ when the self-energy is finite. The ADM mass is
\begin{equation}
M=M_0+\Eem.
\label{eq:MADM}
\end{equation}
Gravitational integration by quadratures is standard in Einstein--NLED \cite{PellicerTorrence1969,Bronnikov2001,Bronnikov2022}; what is specific here is that $H_m$ has been reconstructed beforehand, without using $f(r)$ as input data.

\subsection{Magnetic causality and horizon uniqueness}

Causality has a direct geometric consequence that does not depend on membership in $\cCX$. Define
\begin{equation}
\mathcal K_m(u):=2uA_m(u)-H_m(u).
\label{eq:Kmag}
\end{equation}
By Eq.~\eqref{eq:kappa},
\begin{equation}
\mathcal K_m'(u)=A_m(u)+2uA_m'(u)=A_m(u)\kappa(u).
\label{eq:Kprime}
\end{equation}

\begin{theorem}[Monotonicity under magnetic causality]
\label{thm:magneticmonotonicity}
Suppose $A_m>0$, $\kappa>0$ at every finite field strength, and the electromagnetic self-energy is finite. Define $M_0=M-\Eem$. If $M_0\ge0$, then
\begin{equation}
{\displaystyle f'(r)>0\qquad(r>0).}
\label{eq:fmonotone}
\end{equation}
Consequently, there is at most one positive horizon. If $M_0>0$, that horizon exists and is unique.
\end{theorem}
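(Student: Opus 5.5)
We will compute $f'(r)$ directly from the mass function and reduce its sign to the sign of a quantity controlled by $\mathcal K_m$ and $M_0$. From Eq.~\eqref{eq:fdef}, $f'(r)=\frac{2G}{r^2}\bigl[m(r)-r\,m'(r)\bigr]$, so it suffices to show $m(r)-rm'(r)>0$ for all $r>0$. Using Eqs.~\eqref{eq:massEq} and \eqref{eq:massQuadrature}, with $u=C/r^4$, we have
\begin{equation}
m(r)-rm'(r)=M_0+4\pi\left[\int_0^r x^2H_m(C/x^4)\,\dd x-r^3H_m(C/r^4)\right].
\end{equation}
Define $\Psi(r)$ as the bracket. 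The plan is to express $\Psi$ through $\mathcal K_m$ of Eq.~\eqref{eq:Kmag}, using $\frac{\dd}{\dd r}H_m(C/r^4)=-\frac{4u}{r}A_m(u)$, which gives
\begin{equation}
\Psi'(r)=r^2H_m-3r^2H_m+4r^2uA_m=2r^2\bigl(2uA_m-H_m\bigr)=2r^2\mathcal K_m(u).
\end{equation}

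Next we show $\mathcal K_m(u)>0$ for every finite $u>0$. Since $H_m(0)=0$ and $A_m$ is continuous at the origin (Maxwell normalization), $\mathcal K_m(0^+)=0$. By Eq.~\eqref{eq:Kprime}, $\mathcal K_m'=A_m\kappa>0$ under the hypotheses $A_m>0$ and $\kappa>0$, so $\mathcal K_m$ is strictly increasing and hence $\mathcal K_m(u)>0$ for $u>0$. Consequently $\Psi'(r)>0$ for all $r>0$, and $\Psi$ is strictly increasing in $r$.

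The step I expect to be the main obstacle is the boundary behaviour $\Psi(0^+)\ge0$, which is where finite self-energy enters. The integral term tends to zero as $r\to0$ because the self-energy integral converges. For the term $r^3H_m(C/r^4)$ I plan to use monotonicity of $\mathcal K_m$: positivity of $\mathcal K_m$ gives $(H_m u^{-1/2})'=u^{-3/2}\mathcal K_m/2>0$, and more directly $H_m$ increasing plus convergence of $\int_0^\infty u^{-7/4}H_m(u)\,\dd u$ (the self-energy in the $u$ variable) yields $H_m(u)u^{-3/4}\to0$ as $u\to\infty$ by a standard tail estimate: $H_m(u)\int_u^{2u}v^{-7/4}\dd v\le\int_u^{2u}v^{-7/4}H_m\,\dd v\to0$. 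Hence $r^3H_m(C/r^4)=C^{3/4}u^{-3/4}H_m(u)\to0$, so $\Psi(0^+)=0$ and $\Psi(r)>0$ for $r>0$. Thus $m-rm'\ge M_0+4\pi\Psi>0$ when $M_0\ge0$, proving Eq.~\eqref{eq:fmonotone}.

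Finally, strict monotonicity of $f$ gives at most one zero on $(0,\infty)$. If $M_0>0$, then $f(r)\sim-2GM_0/r\to-\infty$ as $r\to0^+$ (the self-energy part of $m$ vanishes there), while $f\to1$ as $r\to\infty$ since $m\to M$ finite; continuity and the intermediate value theorem give existence, and monotonicity gives uniqueness, of the positive horizon.
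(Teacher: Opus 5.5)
Your proposal is correct and follows essentially the same route as the paper: both show that $D(r)=m(r)-rm'(r)$ satisfies $D'(r)=8\pi r^2\mathcal K_m(u)>0$, with $\mathcal K_m>0$ obtained from $\mathcal K_m(0)=0$ and $\mathcal K_m'=A_m\kappa>0$. Both then use monotonicity of $H_m$ together with finite self-energy to get $r^3H_m(C/r^4)\to0$, hence $D(0^+)=M_0\ge0$, and finish with the intermediate value theorem. The only cosmetic difference is in the boundary estimate: you work in the $u$ variable over a dyadic window $[u,2u]$, whereas the paper uses $\int_0^r s^2H_m(C/s^4)\,\dd s\ge \tfrac{r^3}{3}H_m(C/r^4)$ directly.
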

\begin{proof}
Since $\mathcal K_m(0)=0$ and Eq.~\eqref{eq:Kprime} is positive, $\mathcal K_m(u)>0$ for $u>0$. Define
\begin{equation}
D(r)=m(r)-rm'(r).
\end{equation}
Using Eq.~\eqref{eq:massEq} and $u'=-4u/r$,
\begin{equation}
D'(r)=8\pi r^2\mathcal K_m(u)>0.
\label{eq:Dprime}
\end{equation}
Finite self-energy and monotonicity of $H_m(C/r^4)$ imply $r^3H_m(C/r^4)\to0$ as $r\to0^+$: indeed,
\begin{equation}
\int_0^r s^2H_m(C/s^4)\,\dd s
\ge \frac{r^3}{3}H_m(C/r^4),
\end{equation}
and the left-hand side tends to zero. Thus $rm'(r)\to0$ and $D(0^+)=M_0\ge0$. For $r>0$, $D(r)>0$, and
\begin{equation}
f'(r)=\frac{2GD(r)}{r^2}>0.
\end{equation}
If $M_0>0$, then $f(r)\to-\infty$ as $r\to0^+$, while $f(r)\to1$ as $r\to\infty$. The intermediate value theorem together with monotonicity guarantees a unique positive root.
\end{proof}

Membership in $\cCX$ is not used in Theorem~\ref{thm:magneticmonotonicity}; only the stated magnetic causal and energetic hypotheses enter the proof.

\subsection{Central behavior of the generalized-gamma family}

The strong-field asymptotics of Eq.~\eqref{eq:AGG} is
\begin{equation}
A_{\gamma,\eta}(x)
\sim
\frac{\eta\Gamma(\gamma)}{\Gamma(\gamma/\eta)}x^{-\gamma},
\qquad x\to\infty.
\label{eq:AasymGG}
\end{equation}
Define
\begin{equation}
c_{\gamma,\eta}:=\frac{\eta\Gamma(\gamma)}{\Gamma(\gamma/\eta)}.
\end{equation}
Then
\begin{equation}
A_m(u)\sim c_{\gamma,\eta}\beta^{-\gamma}u^{-\gamma},
\end{equation}
\begin{equation}
H_m(u)\sim
\frac{c_{\gamma,\eta}}{1-\gamma}\beta^{-\gamma}u^{1-\gamma},
\end{equation}
and, for $\gamma>1/4$,
\begin{equation}
m(r)-M_0\sim
\frac{4\pi c_{\gamma,\eta}}{(1-\gamma)(4\gamma-1)}
\beta^{-\gamma}C^{1-\gamma}r^{4\gamma-1}.
\label{eq:masymGG}
\end{equation}
If $M_0=0$ and $1/4<\gamma<1/2$, the term $2Gm/r$ diverges positively and
\begin{equation}
f(0^+)=-\infty.
\end{equation}
Since $f'>0$, there is exactly one positive horizon even on the $M_0=0$ branch.

At the boundary $\gamma=1/2$,
\begin{equation}
c_{1/2,\eta}=\frac{\eta\sqrt\pi}{\Gamma\!\left(\frac{1}{2\eta}\right)},
\end{equation}
and
\begin{equation}
f(0^+)=1-8\sqrt2\,\pi G
\frac{\eta\sqrt\pi}{\Gamma\!\left(\frac{1}{2\eta}\right)}
\frac{|Q_m|}{\sqrt\beta}.
\label{eq:f0gammahalf}
\end{equation}
The critical charge on the $M_0=0$ branch is
\begin{equation}
{\displaystyle 
\left.\frac{|Q_m|}{\sqrt\beta}\right|_{\rm crit}
=\frac{\Gamma\!\left(\frac{1}{2\eta}\right)}{8\sqrt2\,\pi^{3/2}\eta G}.}
\label{eq:criticalGG}
\end{equation}
Above the critical value, $f(0^+)<0$ and there is a unique horizon; below it, the $M_0=0$ branch is horizonless. The critical point, where the limiting surface meets the central singularity, requires a separate global analysis, as in other classifications of causal NLED \cite{RussoTownsendBH2026,FathiGuzmanVillanueva2026}.

\section{Exact families of geometries and black holes}
\label{sec:exactBH}

The preceding formulas determine the geometry for any admissible response. To call a specific branch a \emph{black hole}, we additionally require a positive root of $f$. Theorem~\ref{thm:magneticmonotonicity} guarantees such a root for $M_0>0$ and for the additional branches identified in the previous subsection. Exactness of a metric by itself will not be conflated with automatic existence of a horizon throughout parameter space.

\subsection{The \texorpdfstring{$\cCX$}{CX} family by quadratures}

On the magnetic axis of Eq.~\eqref{eq:Lg},
\begin{equation}
z(r)=\frac{Q_m^2}{Tr^4},
\qquad
X=1+z,
\end{equation}
and
\begin{equation}
H_g(r)=\frac{T}{2}\int_0^{z(r)}g(1+t)\,\dd t.
\label{eq:Hg}
\end{equation}
The self-energy is finite if and only if
\begin{equation}
\int_0^\infty t^{-3/4}g(1+t)\,\dd t<\infty,
\label{eq:CXenergycriterion}
\end{equation}
and then
\begin{equation}
E^{(g)}_{\rm em}=\frac{2\pi}{3}T^{1/4}|Q_m|^{3/2}
\int_0^\infty t^{-3/4}g(1+t)\,\dd t.
\label{eq:CXenergy}
\end{equation}
Define
\begin{equation}
\mathcal A_g(z)=\int_0^z t^{-3/4}g(1+t)\,\dd t,
\qquad
\mathcal B_g(z)=\int_0^z g(1+t)\,\dd t.
\end{equation}
The electromagnetic energy exterior to radius $r$ is
\begin{equation}
E_g(r)=\frac{2\pi}{3}T^{1/4}|Q_m|^{3/2}\mathcal A_g(z)
-\frac{2\pi Tr^3}{3}\mathcal B_g(z),
\label{eq:Eg}
\end{equation}
and
\begin{equation}
f_g(r)=1-\frac{2G}{r}[M-E_g(r)].
\label{eq:fg}
\end{equation}
The weak-field limit reproduces Reissner--Nordstr\"om in our charge conventions,
\begin{equation}
f(r)=1-\frac{2GM}{r}+\frac{4\pi GQ_m^2}{r^2}+O(r^{-6}).
\label{eq:RNlimit}
\end{equation}

\subsection{Power laws and Born--Infeld: known controls}

For Eq.~\eqref{eq:Lpower},
\begin{equation}
H_q(r)=\frac{T}{2q}
\left[\left(1+\frac{Q_m^2}{Tr^4}\right)^q-1\right].
\label{eq:Hq}
\end{equation}
The known globally causal extension has $1/2\le q<1$; within our patch $D_T$, the endpoint $q=1$ discussed in Sec.~\ref{sec:CX} is also present. Finite self-energy selects, in either case,
\begin{equation}
\frac12\le q<\frac34.
\end{equation}
In this interval,
\begin{equation}
E^{(q)}_{\rm em}=\frac{2\pi}{3}T^{1/4}|Q_m|^{3/2}
\frac{\Gamma(1/4)\Gamma(3/4-q)}{\Gamma(1-q)}.
\label{eq:Epower}
\end{equation}
The exterior energy is
\begin{equation}
E_q(r)=\frac{2\pi Tr^3}{3q}
\left[1-{}_2F_1\!\left(-q,-\frac34;\frac14;-\frac{Q_m^2}{Tr^4}\right)\right],
\label{eq:Eq}
\end{equation}
and
\begin{equation}
f_q(r)=1-\frac{2G}{r}[M-E_q(r)].
\end{equation}
The choice $q=1/2$ reproduces the Einstein--Born--Infeld black-hole branch and its standard exact descendants \cite{GarciaSalazarPlebanski1984,Breton2003,FernandoKrug2003,Dey2004,CaiPangWang2004,RussoTownsendBH2026}; the limit $T\to\infty$ gives Reissner--Nordstr\"om. These branches serve as analytic controls and are not presented as new.

\subsection{Continuous Born--Infeld mixtures}

For a seed with scale $T$,
\begin{equation}
H_T(r)=T\left[\sqrt{1+\frac{Q_m^2}{Tr^4}}-1\right].
\end{equation}
Positivity allows the radial and spectral integrations to be interchanged. Therefore,
\begin{equation}
E^{\rm MBI}_\mu(r)=\int_0^\infty E_{\rm BI}(r;T)\,\mu(\dd T),
\end{equation}
\begin{equation}
f^{\rm MBI}_\mu(r)=1-\frac{2G}{r}[M-E^{\rm MBI}_\mu(r)].
\label{eq:fMBI}
\end{equation}
The self-energy is finite exactly when
\begin{equation}
\int_0^\infty T^{1/4}\mu(\dd T)<\infty.
\label{eq:MBImoment}
\end{equation}

\subsection{Exact solution for the \texorpdfstring{$\star$}{star} member}

For Eq.~\eqref{eq:gstar},
\begin{equation}
g_\star(1+t)=(1+t)^{-1/4}(1+t/2)^{-1/4}.
\end{equation}
The self-energy can be evaluated using the Euler representation of ${}_2F_1$ \cite{DLMF,GradshteynRyzhik2015}:
\begin{equation}
E^\star_{\rm em}=
\frac{2\sqrt\pi}{3}\Gamma\!\left(\frac14\right)^2
{}_2F_1\!\left(\frac14,\frac14;\frac12;\frac12\right)
T^{1/4}|Q_m|^{3/2}.
\label{eq:Estar}
\end{equation}
Defining $z=Q_m^2/(Tr^4)$,
\begin{equation}
\begin{aligned}
\mathcal A_\star(z)
&=4z^{1/4}
F_1\!\left(
\frac14;\frac14,\frac14;\frac54;
-z,-\frac z2
\right),
\\[2pt]
\mathcal B_\star(z)
&=z
F_1\!\left(
1;\frac14,\frac14;2;
-z,-\frac z2
\right).
\end{aligned}
\label{eq:ABstarAppell}
\end{equation}
and
\begin{equation}
E_\star(r)=\frac{2\pi}{3}T^{1/4}|Q_m|^{3/2}\mathcal A_\star(z)
-\frac{2\pi Tr^3}{3}\mathcal B_\star(z),
\label{eq:Estarr}
\end{equation}
\begin{equation}
{\displaystyle 
f_\star(r)=1-\frac{2G}{r}[M-E_\star(r)].}
\label{eq:fstar}
\end{equation}
For $M_0>0$, Theorem~\ref{thm:magneticmonotonicity} proves that this geometry has exactly one positive horizon.

\subsection{Arbitrary spectral measure}

The spectral representation can be coupled to Einstein gravity before a specific density is chosen. Assume throughout this subsection that Eq.~\eqref{eq:minusquarter} holds. In particular, $\rho(\{0\})=0$, so the factors $1/s$ below are harmless in the measure-theoretic sense. For $C=Q_m^2/2$, Tonelli's theorem gives
\begin{equation}
\begin{aligned}
m_\rho(r)=M_0
&+4\pi\int_0^\infty\frac{\rho(\dd s)}{s}\\
&\quad\times\left[
\frac{r^3}{3}
-\frac{(Cs)^{3/4}}{4}
\Gamma\!\left(-\frac34,\frac{Cs}{r^4}\right)
\right].
\end{aligned}
\label{eq:mrho}
\end{equation}
The metric is
\begin{equation}
f_\rho(r)=1-\frac{2Gm_\rho(r)}{r}.
\label{eq:frho}
\end{equation}
This formula is exact for every positive measure satisfying the assumptions of the construction. Horizon existence is then determined using the conditions of Theorem~\ref{thm:magneticmonotonicity} and, for $M_0=0$, the asymptotics of the particular measure.

\subsection{Generalized-gamma for all \texorpdfstring{$\eta>0$}{eta > 0}}

Let
\begin{equation}
x_r=\frac{\beta C}{r^4}.
\end{equation}
Integrating Eq.~\eqref{eq:LMB} over the radial exterior and choosing $0<c<\min(\gamma,1/4)$,
\begin{equation}
E_{\gamma,\eta,\beta}(r)=
\frac{4\pi C}{r\Gamma(\gamma/\eta)}
\frac{1}{2\pi i}\int
\frac{\Gamma(z)\Gamma[(\gamma-z)/\eta]}
{(1-z)(1-4z)}
 x_r^{-z}\,\dd z.
\label{eq:EGBM}
\end{equation}
In Fox--$H$ form,
\begin{equation}
\begin{aligned}
E_{\gamma,\eta,\beta}(r)
&=
\frac{\pi C}{r\Gamma(\gamma/\eta)}
H_{3,3}^{1,3}\!\left[
x_r\,\middle|\,
\begin{smallmatrix}
(1-\gamma/\eta,1/\eta),(0,1),(3/4,1)\\
(0,1),(-1,1),(-1/4,1)
\end{smallmatrix}
\right].
\end{aligned}
\label{eq:EGGFox}
\end{equation}
Therefore,
\begin{equation}
{\displaystyle 
f_{\gamma,\eta,\beta}(r)=
1-\frac{2G}{r}[M-E_{\gamma,\eta,\beta}(r)].}
\label{eq:fGG}
\end{equation}
Throughout the window \eqref{eq:GGwindow}, $M_0>0$ implies exactly one horizon. For $M_0=0$, $1/4<\gamma<1/2$ also implies a unique horizon; for $\gamma=1/2$ the threshold \eqref{eq:criticalGG} applies.

When $\eta=m\in\mathbb N$, define additionally
\begin{equation}
\mu_j=\frac{4j+1}{4m},
\qquad
Z_m(r)=\frac{(-1)^m x_r^m}{m^m}.
\end{equation}
The exterior energy reduces to the finite sum
\begin{widetext}
\begin{equation}
E_{\gamma,m,\beta}(r)=\frac{4\pi C}{r}
\sum_{j=0}^{m-1}
\frac{(-x_r)^j}{j!(j+1)(4j+1)}
\frac{\Gamma(a_j)}{\Gamma(\gamma/m)}
{}_3F_{m+1}\!\left(
\begin{matrix}a_j,\lambda_j,\mu_j\\
B_{m,j},\lambda_j+1,\mu_j+1
\end{matrix};Z_m(r)
\right).
\label{eq:Einteger}
\end{equation}
\end{widetext}
The branch $m=1$ reproduces the known power family. The branches $m\ge2$ provide a tower of distinct closed solutions; $m=2$ contains the parabolic-cylinder representation of the matter sector, while $m=3$ yields higher-order hypergeometric sums. For $\eta\in\mathbb Q_+$, Eq.~\eqref{eq:EGBM} reduces to a Meijer--$G$ function; for irrational $\eta$, Fox--$H$ naturally retains the incommensurate slopes.

\section{Optical cones and hyperbolicity of the \texorpdfstring{$\cCX$}{CX} class}
\label{sec:optical}

Propagation of discontinuities in Pleba\'nski NLED is governed by a Fresnel quartic that generically factorizes into two effective metrics \cite{Boillat1970,ObukhovRubilar2002,Schellstede2016}. Recent geometric formulations of the principal symbol reinforce the interpretation of the optical structure as the natural object organizing perturbation propagation in NLED \cite{GoulartBittencourt2026}. In gravitational applications with two invariants, polarization splitting can furthermore translate into distinct light rings and shadows \cite{dePaulaLimaCunhaHerdeiroCrispino2026}. For the class \eqref{eq:Lg}, it is convenient to define the standard invariants used in the Fresnel analysis,
\begin{equation}
\mathfrak F=\frac12F_{\mu\nu}F^{\mu\nu}=-2S,
\qquad
\mathfrak G=P,
\end{equation}
and
\begin{equation}
y=\frac{\mathfrak G}{T}.
\end{equation}
The required derivatives are
\begin{equation}
\begin{aligned}
\mathcal L_{\mathfrak F}
&=-\frac{g}{2},
&\qquad
\mathcal L_{\mathfrak G}
&=\frac{\mathfrak G}{T}g,
\\
\mathcal L_{\mathfrak F\mathfrak F}
&=-\frac{g'}{2T},
&
\mathcal L_{\mathfrak F\mathfrak G}
&=\frac{\mathfrak G}{T^2}g',
\\
\mathcal L_{\mathfrak G\mathfrak G}
&=\frac{g}{T}
-\frac{2\mathfrak G^2}{T^3}g'.
\end{aligned}
\label{eq:opticalderivs}
\end{equation}
Substitution into the general factorization gives reciprocal optical metrics that, up to positive conformal factors, can be written as
\begin{equation}
\mathfrak g_\pm^{\mu\nu}
=g^{\mu\nu}+\sigma_\pm F^\mu{}_{\lambda}F^{\nu\lambda},
\label{eq:opticalmetrics}
\end{equation}
with
\begin{equation}
\sigma_-=-\frac{1}{T(X+y^2)},
\label{eq:sigmaminus}
\end{equation}
\begin{equation}
\sigma_+=-\frac{2a(X)}{T[X+2a(X)y^2]}.
\label{eq:sigmaplus}
\end{equation}
Appendix~\ref{app:optical} gives the full algebraic substitution.

\begin{theorem}[Causal optical cones of $\cCX$]
For every member of $\cCX$ on the domain $X>0$:
\begin{enumerate}[label=(\roman*)]
\item $\sigma_-\le\sigma_+\le0$;
\item both optical metrics are nondegenerate and Lorentzian;
\item every covector timelike with respect to $g^{\mu\nu}$ is timelike with respect to both optical metrics.
\end{enumerate}
\end{theorem}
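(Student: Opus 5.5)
The proof takes the explicit coefficients \eqref{eq:sigmaminus}--\eqref{eq:sigmaplus} as given and reduces everything to the algebra of the tensor $\Theta^{\mu\nu}:=F^\mu{}_\lambda F^{\nu\lambda}$.

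\textbf{Step 1: ordering of the coefficients.} On $X>0$ we have $0\le a(X)\le\tfrac12$ from \eqref{eq:abound}. Hence
\[
X+2a y^2>0, \qquad X+y^2>0,
\]
so both coefficients are well defined and $\sigma_+\le0$. Cross-multiplying the positive denominators, the inequality $\sigma_-\le\sigma_+$ is equivalent to
\[
2a\,(X+y^2)\le X+2ay^2,
\]
which simplifies to $2aX\le X$. This is exactly $a\le\tfrac12$, so (i) follows; equality holds precisely at $a=\tfrac12$, the Born--Infeld case. At $a=0$, $\sigma_+=0$ and $\mathfrak g_+$ is simply $g$.

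\textbf{Step 2: the eigenstructure of $\Theta$.} We use the standard identities
\[
\Theta^\mu{}_\nu\Theta^\nu{}_\rho-\tfrac12\mathfrak F\,\Theta^\mu{}_\rho-\tfrac14\mathfrak G^2\delta^\mu_\rho=0,
\qquad
\Theta^\mu{}_\mu=2\mathfrak F.
\]
(The normalization of $\mathfrak G$ here must be checked against the paper's convention $P=\mathbf E\cdot\mathbf B$.) From these, the two doubly degenerate eigenvalues of $\Theta$ are $\theta_\pm=\tfrac14\bigl(\mathfrak F\pm\sqrt{\mathfrak F^2+4\mathfrak G^2}\bigr)$ in suitable units, i.e.\ one eigenvalue $\ge0$ and one $\le0$. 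Equivalently, and more directly, in the canonical frame $\mathbf E\parallel\mathbf B$ one has $\Theta=\mathrm{diag}(-E^2,-E^2,B^2,B^2)$ up to index placement; in $(U,V)$ language this is $\mathrm{diag}(-2V,-2V,2U,2U)$ on the $(t,x)$ and $(y,z)$ planes.

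The cleanest route is to work in this frame. There both optical metrics are block diagonal:
\[
\mathfrak g_\pm^{\mu\nu}=\mathrm{diag}\bigl(-(1+2\sigma_\pm V),\,1+2\sigma_\pm V,\,1+2\sigma_\pm U,\,1+2\sigma_\pm U\bigr),
\]
with signs to be fixed carefully.

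\textbf{Step 3: Lorentzian signature.} Signature $(-,+,+,+)$ and nondegeneracy require $1+2\sigma_\pm V>0$ and $1+2\sigma_\pm U>0$. Since $\sigma_\pm\le0$, the $U$-block is the more dangerous one.

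Because $\sigma_-$ is the most negative coefficient, it suffices to check it. Write $X=\alpha_T\beta_T$ and $y^2=P^2/T^2=4UV/T^2$. Then
\[
X+y^2=1-\frac{2(V-U)}{T}=1+\frac{2U}{T}-\frac{2V}{T}.
\]
Since $4UV/T^2$ cancels against the product term in $\alpha_T\beta_T$, this gives
\[
1+2\sigma_-U=\frac{1-2V/T}{X+y^2}=\frac{\beta_T}{X+y^2}>0
\]
on $D_T$. Similarly
\[
1+2\sigma_-V=\frac{1+2U/T}{X+y^2}=\frac{\alpha_T}{X+y^2}>0.
\]
For $\sigma_+\in[\sigma_-,0]$ the same entries are convex combinations of $1$ and the $\sigma_-$ values, hence also positive. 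This proves (ii); Lorentzianity is frame independent.

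\textbf{Step 4: cone inclusion.} For a $g$-timelike covector $k$, write
\[
\mathfrak g_\pm(k,k)=g(k,k)+\sigma_\pm\,\Theta(k,k).
\]
In the canonical frame,
\[
\Theta(k,k)=2U(k_y^2+k_z^2)-2V(k_x^2-k_t^2)
\]
up to sign conventions. Then $\mathfrak g_\pm(k,k)<0$ reduces to
\[
-(1+2\sigma V)(k_t^2-k_x^2)+(1+2\sigma U)(k_y^2+k_z^2)<0.
\]
Given $k_t^2>k_x^2+k_y^2+k_z^2$, this follows if
\[
1+2\sigma U\le 1+2\sigma V,
\]
which is equivalent, for $\sigma\le0$, to $U\ge V$. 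That restriction is not available, so the bare inequality is the wrong target. The correct target is the one used in the causality literature: the $g$ cone must lie inside each optical cone, i.e.\ the optical cones are wider, with the $(y,z)$ spatial speed factor $\tfrac{1+2\sigma U}{1+2\sigma V}\le1$ for timelike covectors. I would state and check this inequality explicitly, and relate it to the strong condition \eqref{eq:strongUV}, or equivalently to $g+2Xg'\ge0$, via the $\sigma_+$ formula.

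\textbf{Main obstacle.} I expect the main difficulty to be Step 4: getting the signs, the index placement of $\Theta$, and the covector-versus-vector cone orientation right, so that the inclusion in (iii) really follows from $a\le\tfrac12$ and the domain $D_T$ rather than requiring an extra restriction. The hyperbolicity conclusion (overlapping timelike cones, in the sense of Ref.~\cite{Abalos2015}) then follows immediately from (iii).
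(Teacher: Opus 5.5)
\textbf{Step 1 is correct and matches the paper's argument for (i). The genuine gap is (iii), which your proposal does not prove. It comes from a sign error in Step 2.}

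In the frame with $\mathbf E\parallel\mathbf B$, your mixed-index eigenvalues $(-2V,-2V,2U,2U)$ are right. Raising the index gives $\Theta^{\mu\nu}=\mathrm{diag}(2V,-2V,2U,2U)$, so
\[
\mathfrak g_\sigma^{\mu\nu}=\mathrm{diag}\bigl(-(1-2\sigma V),\,1-2\sigma V,\,1+2\sigma U,\,1+2\sigma U\bigr).
\]
The $(t,x)$ block therefore carries $1-2\sigma V$, not $1+2\sigma V$. Your own (correct) formula $\Theta(k,k)=2V(k_t^2-k_x^2)+2U(k_y^2+k_z^2)$ already shows this: adding it to $g(k,k)$ gives $-(1-2\sigma V)(k_t^2-k_x^2)+(1+2\sigma U)(k_y^2+k_z^2)$.

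Because of the slip, you concluded that the direct inequality would need $U\ge V$. You then called it ``the wrong target'' and deferred (iii) to an unspecified link with $g+2Xg'\ge0$. That detour is unnecessary. The direct inequality is exactly statement (iii), and it is immediate. For $g$-timelike $k$ one has $k_t^2-k_x^2>k_y^2+k_z^2\ge0$, hence $\Theta(k,k)\ge0$. With $\sigma_\pm\le0$ from (i),
\[
\mathfrak g_\pm(k,k)=g(k,k)+\sigma_\pm\,\Theta(k,k)\le g(k,k)<0 .
\]
The paper gives the frame-free version. The vector $v^\lambda=F^{\mu\lambda}n_\mu$ satisfies $n_\lambda v^\lambda=0$ by antisymmetry, so $v$ is spacelike or zero and $\Theta(n,n)=v_\lambda v^\lambda\ge0$. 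Only the sign of $\sigma_\pm$ enters here. The bound $a\le\tfrac12$ is used in (i) and (ii), not in (iii).

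The same slip affects Step 3, where a compensating algebra error hides it. In fact $1+2\sigma_-V=(1+2U/T-4V/T)/(X+y^2)$, not $\alpha_T/(X+y^2)$. This vanishes at $U=0$, $V=T/4$, so the matrix you wrote would degenerate inside $D_T$. What equals $\alpha_T/(X+y^2)$ is $1-2\sigma_-V$, which is the entry that actually occurs. That entry is $\ge1$ for every $\sigma\le0$, so only $1+2\sigma_\pm U>0$ needs checking. Your computation $1+2\sigma_-U=\beta_T/(X+y^2)$, together with affine interpolation to $\sigma_+\in[\sigma_-,0]$, handles that correctly.

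Once the sign is fixed, Step 3 is a legitimate alternative to the paper's proof of (ii). In this frame the paper's invariant $D=1+\sigma\mathfrak F-\sigma^2\mathfrak G^2$ factors as $(1-2\sigma V)(1+2\sigma U)$. You therefore get the eigenvalue signs explicitly, instead of through the concavity and homotopy argument in $s$. The paper's frame-free argument has one advantage: it also covers null fields ($S=P=0$, $F\neq0$). There no frame with $\mathbf E\parallel\mathbf B$ exists, and $\Theta^{\mu\nu}\propto\ell^\mu\ell^\nu$ with $\ell$ null. Your version would need that case treated separately; there the determinant is unchanged and $D=1$.
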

\begin{proof}
From $0\le a\le1/2$ and $X>0$, one immediately has $\sigma_+\le0$ and
\begin{equation}
\sigma_+-\sigma_-
=\frac{X(1-2a)}{T(X+y^2)(X+2ay^2)}\ge0.
\end{equation}
The nondegeneracy quantity for a metric of the form \eqref{eq:opticalmetrics} is
\begin{equation}
D=1+\sigma\mathfrak F-\sigma^2\mathfrak G^2.
\end{equation}
For the two roots,
\begin{equation}
D_-=
\frac{X}{(X+y^2)^2}>0,
\label{eq:Dminus}
\end{equation}
\begin{equation}
D_+=
\frac{X\,[X(1-2a)+2a+2a(1-2a)y^2]}
{(X+2ay^2)^2}>0.
\label{eq:Dplus}
\end{equation}
To make the signature statement explicit, consider
\[
\mathfrak g^{\mu\nu}(s)
=
g^{\mu\nu}
+
s\sigma_\pm F^\mu{}_{\lambda}F^{\nu\lambda},
\qquad
0\le s\le1.
\]
Its nondegeneracy factor is $D(s)=1+s\sigma_\pm\mathfrak F-s^2\sigma_\pm^2\mathfrak G^2$, which is concave in $s$. Since $D(0)=1$ and $D(1)=D_\pm>0$, one has $D(s)>0$ throughout the interval; by continuity, the signature cannot change from the Lorentzian signature at $s=0$. Finally, let $n_\mu$ be a covector timelike with respect to $g^{\mu\nu}$ and define $v^\lambda=F^{\mu\lambda}n_\mu$. By antisymmetry, $n_\lambda v^\lambda=0$, so $v^2\ge0$. Since $\sigma_\pm\le0$,
\begin{equation}
\mathfrak g_\pm^{\mu\nu}n_\mu n_\nu
=g^{\mu\nu}n_\mu n_\nu+\sigma_\pm v^2
<0.
\end{equation}
\end{proof}

Abalos \emph{et al.} showed that a Lagrangian NLED is symmetric hyperbolic if and only if the timelike cones of its two effective metrics have a nonempty intersection \cite{Abalos2015}. The theorem above exhibits such a common intersection explicitly.

\begin{corollary}[Hyperbolicity]
At the algebraic level, the electromagnetic subsystem of every theory in $\cCX$ is symmetric hyperbolic on $X>0$. If, in addition, $g\in C^2$ and the background and initial data have the regularity required by quasilinear symmetric-hyperbolic theory, the corresponding local Cauchy problem is well posed. In particular, the system is strongly hyperbolic.
\end{corollary}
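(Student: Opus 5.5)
The plan is to reduce the corollary to the Abalos \emph{et al.} criterion \cite{Abalos2015}, so that the only new work is checking its hypotheses on $X>0$. That criterion says a Lagrangian NLED is symmetric hyperbolic at a background iff the timelike cones of its two effective metrics have nonempty intersection. The first step is therefore to verify that the reciprocal optical metrics \eqref{eq:opticalmetrics} are exactly the two factors of the Fresnel quartic used there. Positive conformal factors are harmless, since they do not change cones. This identification is the algebraic substitution recorded in Appendix~\ref{app:optical}, and I take it as given.

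The intersection itself comes straight from the preceding theorem. Part (ii) makes both $\mathfrak g_\pm$ nondegenerate and Lorentzian. Part (iii) says every $g$-timelike covector is timelike for both. Hence at each point with $X>0$ the $g$-timelike covector cone, which is nonempty, lies in both optical timelike covector cones.

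One subtlety is that Abalos' criterion may be phrased through cones of vectors rather than covectors. If so, I will dualize: for Lorentzian metrics, nested or overlapping covector cones give overlapping dual vector cones. Concretely, I will take a covector $n$ that is timelike for both $\mathfrak g_\pm^{-1}$. Its annihilator is then spacelike for both metrics, and the hyperplane condition supplies a common vector in the interiors of both dual cones. Choosing future orientations consistently with $g$ (continuity in $s$, as in the proof above) avoids a mismatch between past and future cones.

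Next I will build the explicit symmetrizer. This can be done by contracting the principal symbol with a covector $n$ that is timelike for $g$, which is exactly the construction in \cite{Abalos2015}. That construction yields a positive definite symmetrizer depending continuously on $(F,g)$. Its positivity is uniform on compact subsets of $\{X>0\}$, because $D_\pm$ in \eqref{eq:Dminus}--\eqref{eq:Dplus} are bounded away from zero there. I expect this step to be the main obstacle: I have to make sure the positivity does not degenerate. If it does, the likely failure points are near $\beta_T\to0$ or where $a\to1/2$ and the two cones coincide; the latter is the Born--Infeld case, and I would treat it separately by continuity.

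For well-posedness, $g\in C^2$ makes the coefficients of the quasilinear first-order system $C^1$ in the fields, since they involve $\mathcal L$ up to third derivatives. The standard Friedrichs/Kato theory for quasilinear symmetric hyperbolic systems then gives local existence, uniqueness and continuous dependence for $H^s$ data with $s>5/2$. This holds as long as the solution stays in $X>0$, which is an open condition preserved for short times by continuity. Finally, symmetric hyperbolicity implies strong hyperbolicity: a positive symmetrizer makes the principal symbol diagonalizable with real eigenvalues and uniformly bounded eigenprojections.
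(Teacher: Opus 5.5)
Your core argument is the paper's. Part (iii) of the cone theorem places every $g$-timelike covector inside the timelike cones of both $\mathfrak g_\pm^{\mu\nu}$, so the criterion of Abalos \emph{et al.} gives symmetric hyperbolicity. Kato's quasilinear theory then supplies well-posedness under the regularity clause, and a positive symmetrizer gives strong hyperbolicity. The paper stops there, and you can too. Two of your additional steps, however, are incorrect and should be removed rather than relied on.

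\textbf{Dualization.} It is unnecessary, because the cones in the Abalos criterion are cones of covectors. The characteristic covectors are the null covectors of $\mathfrak g_\pm^{\mu\nu}$, and the Geroch symmetrizers are positive for covectors in the intersection, which is exactly what part (iii) controls. The argument you sketch for it is also false:
\begin{itemize}
\item A covector $n$ timelike for both inverse metrics only says that both vector cones lie on one side of the common spacelike hyperplane $\ker n$. Two disjoint thin cones can share such a hyperplane, so no common timelike vector follows.
\item The inclusion of the $g$-covector cone in both optical covector cones dualizes to each optical vector cone lying inside the $g$-cone. That says nothing about the mutual intersection of the two optical vector cones.
\end{itemize}

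\textbf{Regularity in the well-posedness step.} The principal coefficients involve $\mathcal L_S$, $\mathcal L_P$ and the Hessian of $\mathcal L$, i.e.\ $g$ and $g'$. These are second derivatives of $\mathcal L$, not third. So $g\in C^2$ does make the coefficients $C^1$ in the fields. But $C^1$ dependence is not what the Friedrichs--Kato $H^s$ theory with $s>5/2$ needs, since its energy estimates differentiate the coefficient functions roughly $s$ times. The paper does not derive Kato's hypotheses from $g\in C^2$; it simply invokes Kato for the regularity clause. You should likewise treat the required smoothness of the coefficients as part of that hypothesis, not as a consequence of $g\in C^2$.

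Finally, your worry about $a\to1/2$ is misplaced. Coincident cones intersect trivially, so the statement needs neither a separate continuity argument nor the explicit uniform-positivity symmetrizer construction.
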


The regularity clause promotes the pointwise cone criterion to the standard local well-posedness statement for quasilinear symmetric-hyperbolic systems \cite{Kato1975}. This is a statement about electromagnetic evolution, not about stability of the coupled Einstein--NLED background.

On a magnetic background,
\begin{equation}
X=1+\frac{Q_m^2}{Tr^4},
\end{equation}
and the angular factors of the two polarizations are
\begin{equation}
\kappa_- = \frac{1}{X},
\qquad
\kappa_+=1-\frac{2a(X)(X-1)}{X}.
\label{eq:kappapm}
\end{equation}
Born--Infeld, $a=1/2$, satisfies $\kappa_+=\kappa_-=1/X$, in agreement with its absence of birefringence \cite{RussoTownsendNoBirefringence2023,GibbonsHerdeiro2001}.

For the $\star$ member,
\begin{equation}
\kappa_-^\star=\frac1X,
\qquad
\kappa_+^\star=\frac{1+3X}{2X(1+X)},
\label{eq:kappastar}
\end{equation}
so that
\begin{equation}
0<\kappa_-^\star\le\kappa_+^\star\le1.
\end{equation}
The theory is birefringent, but both polarizations remain subluminal.

\section{Discussion and scope}
\label{sec:scope}

The inverse formulation separates two questions that are often conflated in NLED: whether a theory is admissible and whether it is uniquely selected. The deformation result shows that the causal Pleba\'nski region has genuine functional freedom even after Maxwell normalization and parity are imposed. Causality is therefore an exclusion principle rather than a uniqueness principle. The role of the affine-separable construction is correspondingly limited but useful: it selects a tractable sector in which the admissible functional freedom can be solved exactly, without promoting that sector to a classification of all causal NLED.

This distinction is especially relevant to self-duality. Modern self-dual constructions retain substantial functional freedom \cite{RussoTownsendCausalSelfDual2024,RussoTownsendDualities2024,RussoTownsendSimplified2025,KuzenkoRuhl2026}, whereas the standard no-birefringence problem places Born--Infeld in an exceptional position \cite{RussoTownsendNoBirefringence2023,MezincescuRussoTownsend2024}. The class $\cCX$ provides a complementary picture: Born--Infeld is the self-dual intersection of the affine-separable causal sector, but birefringent theories occupy its interior. Hence birefringence alone does not diagnose superluminal propagation or ill-posedness. The explicit optical cones make that statement geometric rather than perturbative.

The spectral construction resolves a different issue. Complete monotonicity is an additional assumption, not a consequence of two-invariant causality, but within that subclass it separates strong-field requirements into distinct properties of a positive measure. Self-energy depends on a negative moment, whereas magnetic causality constrains the way spectral weight approaches the origin. Their independence explains why changing the generalized-gamma shape can alter finite-field behavior without shifting the admissibility boundaries. More importantly, the lifting criterion shows that a magnetic law can satisfy every one-invariant causal test used here and nevertheless fail to extend to the affine-separable two-invariant sector. Axis data are therefore insufficient to infer full Pleba\'nski causality.

The gravitational theorem should be read as a sufficient mechanism for horizon uniqueness rather than as a classification of causal Einstein--NLED black holes. Recent work has established broader restrictions on causal NLED geometries and on Cauchy-horizon excision \cite{HaleHennigarKubiznak2026,RussoTownsendBH2026}. The result obtained here isolates a simple route to a stronger conclusion on the branch considered: positivity of the magnetic response and characteristic factor, together with finite self-energy and nonnegative residual mass, forces global monotonicity of the static metric function. Crucially, this statement no longer depends on the affine-separable ansatz. The exact families constructed in Sec.~\ref{sec:exactBH} are applications of that theorem, not the basis for it; this is why the gravitational result is more general than the explicit solutions used to illustrate it.

Several boundaries of the analysis are essential. The exact two-invariant characterization is restricted to the affine-separable domain $D_T$, and the spectral theorems require complete monotonicity on the magnetic axis. The gravitational construction assumes static spherical symmetry and a purely magnetic field, so it does not cover genuinely dyonic backgrounds with $P\neq0$. The optical-cone argument establishes symmetric hyperbolicity of the electromagnetic subsystem under the stated regularity assumptions; it does not establish linear or nonlinear stability of the coupled Einstein--NLED spacetime \cite{MorenoSarbach2003,NomuraYoshidaSoda2020,Chaverra2016,DeFeliceTsujikawa2025}. No rotating solution is inferred from the static family.

These limitations define concrete extensions rather than missing ingredients in the present proofs. A first problem is to decide whether magnetically admissible laws that fail the $\cCX$ lifting criterion admit causal completions in a different two-invariant sector or face a genuine obstruction. A second is to repeat the inverse selection with both invariants dynamically active in dyonic geometries. Coupled gravito-electromagnetic perturbations would test stability beyond symmetric hyperbolicity, while the closed optical metrics provide direct input for polarization-dependent light rings and critical curves \cite{dePaulaLimaCunhaHerdeiroCrispino2026}. Rotation is more restrictive still, because the constitutive relation, Einstein equations, and causal inequalities would have to remain compatible simultaneously rather than being imposed through a geometric deformation of the static metric.

\FloatBarrier

\section{Conclusions}
\label{sec:conclusions}

The main result is structural: imposing constitutive integrability, Maxwell normalization, and causal propagation before choosing a model does not determine a unique nonlinear electrodynamics. It defines a functional theory space. Within the affine-separable sector studied here, that freedom becomes analytically controllable and contains both the Born--Infeld point and genuinely non-self-dual, birefringent theories with well-behaved characteristic cones.

The magnetic and gravitational analyses show why this reordering is useful. A positive spectral representation converts finite self-energy and magnetic causality into independent measure-theoretic restrictions, while the lifting problem identifies the additional information required for a two-invariant completion. After Einstein coupling, the same matter-first logic yields a geometric statement that is independent of the explicit solvable family: under the stated causal and energetic hypotheses, the static metric function is monotone and cannot support more than one positive horizon. The optical analysis then verifies that the electromagnetic evolution problem remains symmetric hyperbolic throughout the causal domain of $\cCX$.

The resulting framework does not replace the general classification of causal Pleba\'nski theories or of Einstein--NLED black holes. Its contribution is to make a nontrivial sector of that problem constructive: admissible matter is selected first, exact families are generated within controlled assumptions, and consequences for self-energy, characteristic propagation, and horizon structure follow without prescribing the spacetime geometry. This establishes inverse reconstruction as a systematic route from local consistency conditions to global gravitational restrictions.

\begin{acknowledgments}
M.F. acknowledges financial support from the Agencia Nacional de Investigaci\'on y Desarrollo (ANID), Chile, through FONDECYT Postdoctoral Project No.~3260029. J.R.V. is partially supported by the Centro de F\'isica Te\'orica de Valpara\'iso (CeFiTeV).
\end{acknowledgments}

\section*{Data availability}

No new data were created or analysed in this study.

\appendix

\section{Characteristics of the \texorpdfstring{$(\Delta_d,p)$}{(Delta d,p)} reconstruction}
\label{app:characteristics}

Starting from Eq.~\eqref{eq:pPDE},
\begin{equation}
p_U-\frac{\Delta_d}{p^2}p_V=-\frac{\partial_V\Delta_d}{p},
\end{equation}
the characteristic equations are
\begin{equation}
\frac{\dd U}{\dd s}=1,
\qquad
\frac{\dd V}{\dd s}=-\frac{\Delta_d}{p^2},
\qquad
\frac{\dd p}{\dd s}=-\frac{\partial_V\Delta_d}{p}.
\end{equation}
Assuming $\Delta_d\in C^1$ and sufficiently regular positive data on a noncharacteristic curve, standard first-order quasilinear PDE theory guarantees a local classical solution as long as $p$ does not reach zero and characteristics do not cross \cite{CourantHilbert1962}. Defining $q=\Delta_d/p$, the original equation gives $p_U+q_V=0$, and the 1-form $p\,\dd V-q\,\dd U$ reconstructs $\mathcal L$. Causality of the reconstructed solution still requires Eqs.~\eqref{eq:causal1}--\eqref{eq:causal3}; the method of characteristics is a reconstruction tool, not a substitute for the physical filters.

\section{Causality details for the \texorpdfstring{$X$}{X} class}
\label{app:CXcausal}

Starting from Eq.~\eqref{eq:Xdef} and $\mathcal L_X=-Tg/2$, Eqs.~\eqref{eq:LSg}--\eqref{eq:LPPg} follow directly. For a symmetric $2\times2$ Hessian, positive semidefiniteness is equivalent to $\mathcal L_{SS}\ge0$, $\mathcal L_{PP}\ge0$, and $\det H\ge0$. If $g>0$ and $g'\le0$, then
\begin{equation}
\mathcal L_{SS}\ge0,
\qquad
\det H=-\frac{2gg'}{T^2}\ge0,
\end{equation}
and
\begin{equation}
\mathcal L_{PP}=\frac gT-\frac{2P^2}{T^3}g'\ge\frac gT>0.
\end{equation}
Necessity of $g'\le0$ follows from $\mathcal L_{SS}\ge0$. For the strong condition, Eq.~\eqref{eq:strongg} and the limit $\beta_T\to0^+$ at fixed $X$ give $g+2Xg'\ge0$. Finally,
\begin{equation}
a=-Xg'/g
\end{equation}
satisfies $a\ge0$ and
\begin{equation}
1-2a=\frac{g+2Xg'}{g}\ge0.
\end{equation}

\section{Details of the spectral bootstrap}
\label{app:spectralproofs}

Suppose that, for some $\gamma>0$,
\begin{equation}
\rho(s)\sim c\,s^{\gamma-1}\ell(s),
\qquad s\to0^+,
\end{equation}
with $\ell$ slowly varying. A Karamata Tauberian theorem gives \cite{BinghamGoldieTeugels1987}
\begin{equation}
A(\Fcal)\sim c\Gamma(\gamma)\Fcal^{-\gamma}\ell(1/\Fcal),
\qquad \Fcal\to\infty.
\end{equation}
Since $A$ is completely monotone, the monotone density theorem for regularly varying functions further implies
\begin{equation}
\frac{\Fcal A'(\Fcal)}{A(\Fcal)}
\longrightarrow -\gamma,
\qquad
\Fcal\to\infty.
\end{equation}
Therefore,
\begin{equation}
\kappa(\Fcal)
=
1+2\Fcal\frac{A'(\Fcal)}{A(\Fcal)}
\longrightarrow
1-2\gamma.
\end{equation}
If $\ell$ tends to a nonzero constant, magnetic causality requires $\gamma\le1/2$ and finite self-energy requires $\gamma>1/4$. At the boundary $\gamma=1/4$, slowly varying factors may alter convergence of the moment; hence the power-law rule does not replace the exact criterion \eqref{eq:minusquarter}.

For the generalized-gamma family, the change of variables $t=y/x$ in Eq.~\eqref{eq:AGG} gives
\begin{equation}
A_{\gamma,\eta}(x)
\sim
\frac{\eta\Gamma(\gamma)}{\Gamma(\gamma/\eta)}x^{-\gamma},
\end{equation}
independently confirming $\kappa\to1-2\gamma$. Identity \eqref{eq:virialGG} is stronger because it controls the sign of $\kappa$ for every finite $x$, not only asymptotically.

\section{Mellin--Barnes derivation and hypergeometric hierarchy}
\label{app:MB}

We begin with Eq.~\eqref{eq:AGG}. Following the standard Barnes/Mellin inversion framework \cite{Barnes1908,ParisKaminski2001}, for $0<c<\gamma$ use
\begin{equation}
\e^{-xt}=\frac{1}{2\pi i}\int_{c-i\infty}^{c+i\infty}\Gamma(z)(xt)^{-z}\,\dd z.
\end{equation}
The inner integral is
\begin{equation}
\int_0^\infty t^{\gamma-z-1}\e^{-t^\eta}\,\dd t
=\frac1\eta\Gamma\!\left(\frac{\gamma-z}{\eta}\right),
\end{equation}
which gives Eq.~\eqref{eq:AMB}. Integrating $x^{-z}$ from zero to $x$ introduces $x^{1-z}/(1-z)$ and yields Eq.~\eqref{eq:LMB}. A second radial integration uses
\begin{equation}
\int_r^\infty R^{4z-2}\,\dd R
=\frac{r^{4z-1}}{1-4z},
\qquad \Re z<\frac14,
\end{equation}
and leads to Eq.~\eqref{eq:EGBM}. The identities
\begin{equation}
\frac{1}{1-z}=\frac{\Gamma(1-z)}{\Gamma(2-z)},
\qquad
\frac{1}{1-4z}=\frac14\frac{\Gamma(1/4-z)}{\Gamma(5/4-z)}
\end{equation}
produce the Fox--$H$ forms \eqref{eq:LFoxH} and \eqref{eq:EGGFox}.

For $m\in\mathbb N$, split the series \eqref{eq:momentseries} into residue classes $n=mk+j$. The Gauss multiplication formula gives
\begin{equation}
(mk+j)!=j!\,m^{mk}k!
\prod_{\substack{r=1\\r\ne m-j}}^m
\left(\frac{j+r}{m}\right)_k,
\end{equation}
and
\begin{equation}
\Gamma\!\left(\frac{\gamma+j}{m}+k\right)=\Gamma(a_j)(a_j)_k.
\end{equation}
Substitution yields Eq.~\eqref{eq:Ainteger}. Integration with respect to $x$ introduces the pair $\lambda_j,\lambda_j+1$ and gives Eq.~\eqref{eq:Linteger}; the radial integration introduces $\mu_j,\mu_j+1$ and produces Eq.~\eqref{eq:Einteger}. For $\eta=p/q\in\mathbb Q_+$, applying the multiplication formula to both slopes in Eq.~\eqref{eq:AMB} converts the integrand into a finite product of unit-slope gamma functions, i.e. into a Meijer--$G$ function \cite{Meijer1946,Erdelyi1953,DLMF}.

\section{Born--Infeld mixtures and the power subfamily}
\label{app:BI}

For the mixture \eqref{eq:MBI}, positivity of the integrand allows the radial integration to be interchanged with the integration over the measure $\mu$ by Tonelli's theorem. Therefore,
\begin{equation}
E^{\rm MBI}_{\rm em}
=
\int_0^\infty
E^{\rm BI}_{\rm em}(T)\,\mu(\dd T),
\end{equation}
where the self-energy of a Born--Infeld seed with scale $T$ is
\begin{equation}
E^{\rm BI}_{\rm em}(T)=
\frac{2\sqrt{\pi}}{3}
\Gamma\!\left(\frac14\right)^2
|Q_m|^{3/2}T^{1/4}.
\end{equation}
Consequently,
\begin{equation}
E^{\rm MBI}_{\rm em}<\infty
\quad\Longleftrightarrow\quad
\int_0^\infty T^{1/4}\mu(\dd T)<\infty,
\end{equation}
which reproduces Eq.~\eqref{eq:MBImoment}.

For the power family, define
\begin{equation}
\zeta:=\frac{Q_m^2}{T}.
\end{equation}
The radial integral determining the exterior electromagnetic energy can be evaluated exactly as
\begin{equation}
\begin{aligned}
&\int_r^\infty x^2
\left[
\left(1+\frac{\zeta}{x^4}\right)^q-1
\right]\dd x
\\
&\qquad=
\frac{r^3}{3}
\left[
1-
{}_2F_1\!\left(
-q,-\frac34;\frac14;-\frac{\zeta}{r^4}
\right)
\right].
\end{aligned}
\end{equation}
Substituting $\zeta=Q_m^2/T$ gives Eq.~\eqref{eq:Eq} directly.

The total self-energy corresponds to the limit $r\to0^+$. Equivalently, a change of variables reducing the integral to beta-function form gives
\begin{equation}
E^{(q)}_{\rm em}
=
\frac{2\pi}{3}
T^{1/4}|Q_m|^{3/2}
\frac{
\Gamma(1/4)\Gamma(3/4-q)
}{
\Gamma(1-q)
},
\end{equation}
in agreement with Eq.~\eqref{eq:Epower}. Convergence at the central endpoint requires
\begin{equation}
q<\frac34.
\end{equation}

\section{Self-energy and Appell integrals for \texorpdfstring{$\mathcal L_\star$}{L star}}
\label{app:star}

With $g_\star(1+t)=(1+t)^{-1/4}(1+t/2)^{-1/4}$, the substitution $t=u/(1-u)$ gives
\begin{align}
&\int_0^\infty t^{-3/4}(1+t)^{-1/4}(1+t/2)^{-1/4}\,\dd t
\nonumber\\
&\qquad=
B\!\left(\frac14,\frac14\right)
{}_2F_1\!\left(\frac14,\frac14;\frac12;\frac12\right),
\end{align}
which yields Eq.~\eqref{eq:Estar}. For a finite upper limit,
\begin{align}
\int_0^z&t^{-3/4}(1+t)^{-1/4}(1+t/2)^{-1/4}\,\dd t
\nonumber\\
&=4z^{1/4}
F_1\!\left(\frac14;\frac14,\frac14;\frac54;-z,-\frac z2\right),
\end{align}
and
\begin{equation}
\int_0^z(1+t)^{-1/4}(1+t/2)^{-1/4}\,\dd t
=zF_1\!\left(1;\frac14,\frac14;2;-z,-\frac z2\right),
\end{equation}
with the standard Euler and Appell conventions \cite{DLMF,GradshteynRyzhik2015}.

\section{Derivation of the optical cones}
\label{app:optical}

With
\begin{equation}
X=1+\frac{\mathfrak F}{T}-\frac{\mathfrak G^2}{T^2},
\end{equation}
the derivatives are those in Eq.~\eqref{eq:opticalderivs}. The general Pleba\'nski quartic can be written as the product of two metrics of the form $g^{\mu\nu}+\sigma F^\mu{}_{\lambda}F^{\nu\lambda}$ \cite{ObukhovRubilar2002,Schellstede2016}. Substituting the derivatives and using $g'=-ag/X$, the quadratic polynomial in $\sigma$ is proportional to
\begin{equation}
[T(X+y^2)\sigma+1]
[T(X+2ay^2)\sigma+2a],
\qquad
y=\mathfrak G/T,
\end{equation}
which gives Eqs.~\eqref{eq:sigmaminus}--\eqref{eq:sigmaplus}. The nondegeneracy quantities follow by substituting each root into
\begin{equation}
D=1+\sigma\mathfrak F-\sigma^2\mathfrak G^2
\end{equation}
and using $\mathfrak F/T=X-1+y^2$, from which Eqs.~\eqref{eq:Dminus} and \eqref{eq:Dplus} follow.

\bibliography{refs2}

\end{document}